\def\BibTeX{{\rm B\kern-.05em{\sc i\kern-.025em b}\kern-.08em
		T\kern-.1667em\lower.7ex\hbox{E}\kern-.125emX}}
\begin{document}
	
	\title{Cross-Channel: Scalable Off-Chain Channels Supporting Fair and Atomic Cross-Chain Operations\\
		%Yiao: 题目包含三个关键点，一个是高效率：在调用相同次数智能合约前提下可以处理更多的交易；其二是更通用：支持物物交换；其三是支持跨链
		%{\footnotesize \textsuperscript{*}Note: Sub-titles are not captured in Xplore and
			%should not be used}
		%\thanks{Identify applicable funding agency here. If none, delete this.}
	}
	
	\author{Yihao Guo, Minghui Xu, Dongxiao Yu, Yong Yu, Rajiv Ranjan, Xiuzhen Cheng
		\thanks{Corresponding author: Minghui Xu.}
		\thanks{Y. Guo, M. Xu, D. Yu, and X. Cheng are with the School of Computer Science and Technology, Shandong University, Qingdao, Shandong, China (e-mail:\{ yhguo@mail., mhxu@, dxyu@, xzcheng@\}sdu.edu.cn).}
		\thanks{Y. Yu is with the School of Computer Science, Shaanxi Normal University, Xi’an, China (e-mail: yuyong@snnu.edu.cn).}
 	\thanks{R. Ranjan is with the School of Computing, Newcastle University, Newcastle, United Kingdom (e-mail: raj.ranjan@newcastle.ac.uk).}
	}

	\IEEEtitleabstractindextext{
	\begin{abstract}
		
		% Cross-chain and off-chain technologies are critical solutions to scaling blockchains. 
		
		Cross-chain technology facilitates the interoperability among isolated blockchains on which users can freely communicate and transfer values. Existing cross-chain protocols suffer from the scalability problem when processing on-chain transactions. Off-chain channels, as a promising blockchain scaling technique, can enable micro-payment transactions without involving on-chain transaction settlement. However, existing channel schemes can only be applied to operations within a single blockchain, failing to support cross-chain services. Therefore in this paper, we propose $\mathsf{Cross}$-$\mathsf{Channel}$, the first off-chain channel to support cross-chain services. We introduce a novel hierarchical channel structure, a new hierarchical settlement protocol, and a smart general fair exchange protocol, to ensure scalability, fairness, and atomicity of cross-chain interactions. Besides, $\mathsf{Cross}$-$\mathsf{Channel}$ provides strong security and practicality by avoiding high latency in asynchronous networks.Through a 50-instance deployment of $\mathsf{Cross}$-$\mathsf{Channel}$ on AliCloud, we demonstrate that $\mathsf{Cross}$-$\mathsf{Channel}$ is well-suited for processing cross-chain transactions in high-frequency and large-scale, and brings a significantly enhanced throughput with a  small amount of gas and delay overhead.
	\end{abstract}
	
	\begin{IEEEkeywords}
		Blockchain, Cross-chain services, Off-chain channels, Fair exchange, Smart contract.
	\end{IEEEkeywords}
	}
	\maketitle

\section{Introduction}\label{sec:introduction}
%区块链的优势
%Blockchain technologies have attracted tremendous interests in recent years from government and academia to industry for their properties of decentralization, transparency, and immutability~\cite{belchior2021survey}. 
%They can be employed to solve the traditional single point of failure problem, achieve effective access control, or even provide a zero-trust fault tolerant environment enabling mutually distrustful parties to establish a trust relationship in wireless networks~\cite{lao2020survey, xu2022blown, xu2021wchain}. Hence, blockchain is quickly introduced to many areas including finance, smart grid, cloud services, and IoV (Internet of Vehicle)~\cite{nakamoto2008bitcoin, guo2022blockchain, 9837445, xu2022cloudchain,zhou2019blockchain}.
%%区块链的问题-孤岛->解决方案-跨链->跨链问题-效率
%However, in addition to the aforementioned great benefits, blockchain-based systems suffer from the isolation problem, which prevents information communications, value exchanges, and collaborative operations between blockchains, hindering the advantages of blockchain technologies in consensus, trust, and co-operations.  
%%low transaction rates and high transaction processing latencies, which hinder blockchains’ scalability.

It is well-known that blockchain-based systems suffer from the  information isolation problem~\cite{xie2019survey}, which prevents %information communications, 
message transports, value exchanges, and collaborative operations among blockchains, hindering the advantages of blockchain technologies in consensus, trust, and cooperations~\cite{xu2022spdl,chen2022survey, xu2022cloudchain,guo2022blockchain,liu2022blockchain}. Cross-chain technology has been considered to be one of the effective ways to solve the isolation problem, aiming to build a bridge for the communications and coordinations among isolated blockchain systems~\cite{belchior2021survey}. 
Nevertheless, most existing cross-chain techniques such as notaries in notary schemes~\cite{belchior2021survey} and relay chains in sidechains/relays~\cite{wood2016polkadot}, rely on third parties that are assumed to be safe,  which reduces their availability and makes them extremely vulnerable to the single point of failure problem.  
Hashed TimeLock Contract (HTLC)~\cite{poon2016bitcoin} is a decentralized cross-chain scheme that employs smart contracts to ensure atomicity of transactions. But unfortunately, in HTLC, one cross-chain exchange requires multiple on-chain consensus, which would undoubtedly decrease the transaction rate and increase the transaction latency, further deepening the blockchain system's poor scalability. %, hindering blockchains’ scalability.
Therefore, one can conclude that current cross-chain schemes cannot realize efficient cross-chain interoperability without relying on third parties.

%This problem is mainly caused by the involved single chains themselves.
%More specifically, the theoretical throughput of the Bitcoin~\cite{nakamoto2008bitcoin} reaches only 7 transactions per second (TPS) but in practice, the average throughput is just 3 because of issues such as empty blocks. Ethereum~\cite{wood2014ethereum} suffers from the same problem: its transaction throughput is about 15 TPS but is typically dropped to a much lower value when interactions with smart contracts are needed. 
%Furthermore, as a platform becomes more and more popular, its transaction load gets higher and higher, making the network congestion reach a new, previously unseen level~\cite{Ethworks}. 
%%So, we can get that due to the poor scalability of blockchain itself, the development of cross-chain services are limited.
%Therefore, one can conclude that blockchain system's poor scalability can be significantly deepened after considering cross-chain services and current cross-chain schemes cannot realize efficient cross-chain interoperability without relying on third parties.

%单链-效率由通道解决->跨链空白->跨链通道
In fact, researchers have put forward  feasible schemes such as blockchain sharding, off-chain channels, and roll-up techniques (zk-rollup, optimistic rollup)~\cite{qi2022s, huang2022brokerchain, papadis2022payment, xie2019survey}, to enhance the scalability of single blockchain systems. 
Among them, off-chain channels (see an example shown in Fig.~\ref{Fig:UAS}), which employ on-chain processes to establish and close a channel and off-chain operations to carry out tasks within the channel, provide faster transaction processing, need lower effort in hardware configuration, and have been successfully applied to Lightning Network~\cite{poon2016bitcoin}. 
Therefore, it is of great significance to  extend the current off-chain channel schemes for cross-chain services. Nevertheless, this is a nontrivial task. There exist two open challenges  that should be addressed in order to take advantage of the high throughput of channels for cross-chain operations. 
%To realize this goal, we consider designing a cross-chain channel scheme named $\mathsf{Cross\text{-}Channel}$. %As far as we know has studied, $\mathsf{Cross\text{-}Channel}$ is the first channel scheme that supports cross-chain services with high throughput. 
%The design of $\mathsf{Cross\text{-}Channel}$ faces the following challenges:
%As for the second problem, the isolation of the network hinders the collaborative operation between different blockchains. Cross-chain technology is an important method to solve blockchain interconnection, aiming to build a bridge of communication and coordination for isolated blockchain systems. However, most current cross-chain schemes rely on the security of third parties, such as notaries in the Notary scheme~\cite{belchior2021survey}, relay chains in the Side-chain/Relay scheme~\cite{kwon2019cosmos}, which are highly vulnerable to the single point of failure~\cite{lynch2009single}. Although the Hashed Timelock Contract (HTLC) scheme~\cite{poon2016bitcoin} does not depend on a third party, one cross-chain exchange requires multiple on-chain consensus, making it unsuitable for the large-scale cross-chain transactions situation. 

%h 表示当前位置：将图形放置在正文文本中给出该图形环境的地方。如果本页所剩的页面不够，这一参数将不起作用
%t 表示顶部：将图形放置在页面的顶部。
%b 表示底部：将图形放置在页面的底部。
%p 表示浮动页：将图形放置在一只允许有浮动对象的页面上。
%跨链给通道带来的挑战
\begin{figure}[t!]
	\centering
	\includegraphics[width=0.8\textwidth]{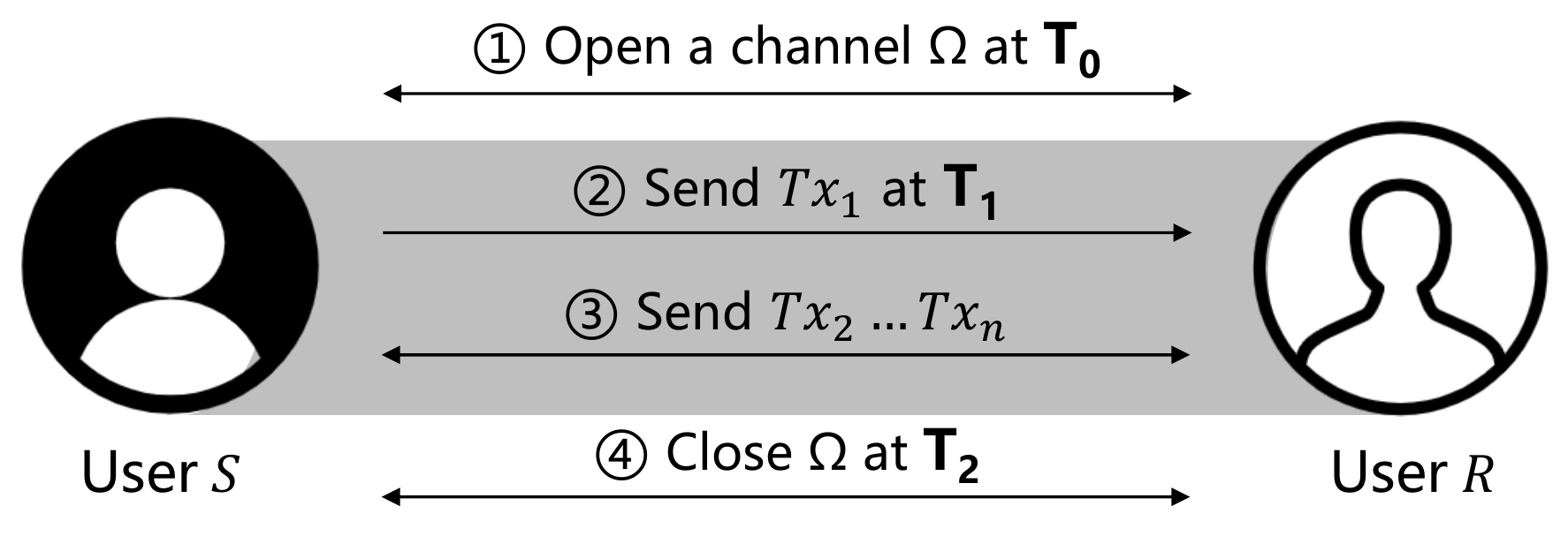}
	\caption{A channel scheme example, in which a solid line represents an interaction process, and the shaded belt represents a payment channel, a state channel or a virtual channel. Note that step \textcircled{1} and step \textcircled{4} are on-chain processes that open and close the channel $\Omega$, respectively. In step \textcircled{2}, $\mathcal{S}$ sends transaction $Tx_1$ to $\mathcal{R}$. Step \textcircled{3} indicates that $\mathcal{S}$ and $\mathcal{R}$ send more transactions to each other via this channel. These transaction operations take place in the channel $\Omega$ and do not consume any on-chain resource.
	}
	\label{Fig:UAS}
\end{figure}
%But unfortunately, although the channel scheme has its own advantages, it has some remaining challenges. 
First, current channel schemes such as payment channels~\cite{poon2016bitcoin, hearn2013micro, decker2015fast, papadis2022payment}, state channels~\cite{dziembowski2018general,coleman2018counterfactual} and virtual channels~\cite{dziembowski2019perun, dziembowski2019multi} cannot support spending unsettled amounts. As shown in Fig.~\ref{Fig:UAS}, suppose sender $\mathcal{S}$ sends some amount $x$ in $\mathsf{Tx_1}$ to receiver $\mathcal{R}$ at time $\mathsf{T_1}$. $\mathcal{R}$ cannot use $x$ before both parties successfully close the channel at $\mathsf{T_2}$. In fact, the longer time the channel stays open, the longer the time ($\mathsf{T_2}$-$\mathsf{T_1}$) $\mathcal{R}$ needs to wait before using $x$. This latency becomes even bigger  
%The difficulty in solving this problem lies in how to introduce other nodes to interact on the basis of the existing channel and ensure the correctness of the final settlement.
when cross-chain operations are involved as the multi-chain heterogeneous design brings more dimensions of complexity. % would cause even longer latency than intra-chain interactions. %, limiting the scalability of the system. 
We term this challenge \emph{Unsettled Amount Congestion (UAC) problem}. To address this issue, one needs to design a new channel architecture and a corresponding smart contract protocol to support flexible user joins while ensuring the correctness of settlement.

%. At the same time, the architecture also requires a special smart contract protocol to ensure the correctness of settlement.

%The key solution is to allow channel initiators to establish sub-channels with other participants, improve the throughput of transaction processed, and ensure the correctness of the final settlement.
%Therefore, supporting the unsettled amount of spending is conducive to further improve the efficiency of $\mathsf{Cross\text{-}Channel}$ and 
%We define the first challenge as the Unsettled Amount Congestion (UAC) problem.
%通道支支持简单的交换，没有考虑fair exchange 问题。
%Second, the current channel schemes are vulnerable to the incentive manipulation attack. Suppose user $\mathcal{A}$ and user $\mathcal{B}$ interact multiple times in the channel and $\mathcal{A}$ is a malicious adversary. Harris and Zohar show that $\mathcal{A}$ can delay $\mathcal{B}$’s transaction confirmation by overloading the system with his own transactions.
%Sup that a and B interact multiple times in the channel
%When $\mathcal{B}$ and $\mathcal{C}$ 

%TODO:可以从公平的角度出发
%The use of cryptocurrencies
%for payment not only exposes cloud service providers to the risk of sharp currency price fluctuations, but also is vulnerable to
%government bans due to regulatory concerns. So far payment via
%fiat money is still the norm in business and is naturally free of
%above troubles.
\begin{figure}[t]
	\centering
	\includegraphics[width=0.8\textwidth]{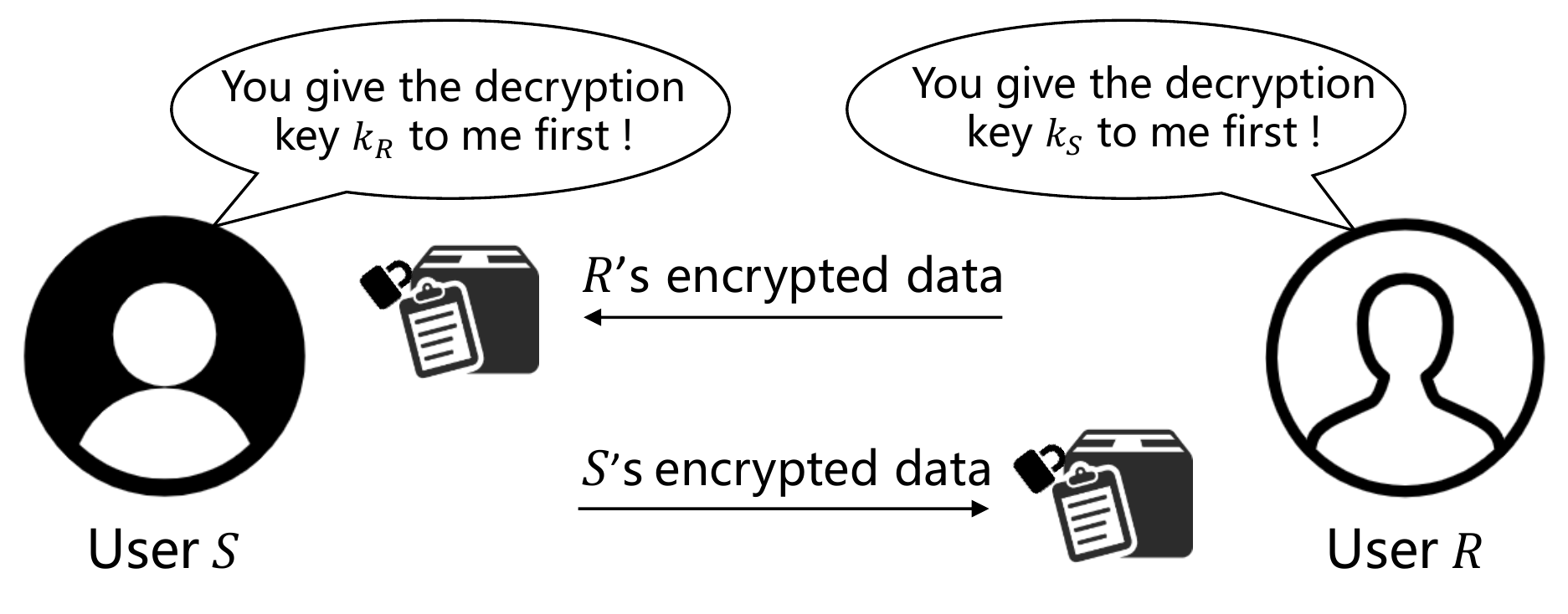}
	\caption{An example of the Unfair Exchange (UE) problem. A solid line represents an interaction process. }
	\label{Fig:UE}
\end{figure}

Second, the scenarios of cross-chain interactions are diversified, which requires a protocol to help channels ensure interaction fairness.
%, but the current channel scheme is not general enough to support various cross-chain services. In the current relevant research, in contrast to payment channel networks that only support off-chain payments between users, state channel networks~\cite{dziembowski2018general} allow execution of arbitrary complex smart contracts. But the state channel scheme can not guarantee the fairness of the interaction process, because it ignores that most of the information itself is valuable and must be encrypted. 
Current fair exchange schemes such as ZKCP~\cite{zkcp} and FairSwap~\cite{dziembowski2018fairswap} rely on cryptocurrencies for settlement, which assumes that one of the exchange objects can be made public (generally this object defaults to cryptocurrency).
%而在跨链场景中，有代币波动影响，更多的用户选择以物换物的交换方式，这使得现在的方案无法保证公平性。
%exposes users to the risk of sharp currency price fluctuations (more serious in multi-chain interaction). 
%The reason is that current fair exchange schemes assume that the transaction object of at least one of the two parties must be disclosed (generally this object defaults to currency) and they do not support the simultaneous disclosure of both parties' secrets. For example, current schemes allow a user $\mathcal{B}$ to exchange currency for currency or encrypted information with the other user $\mathcal{C}$. But when the objects exchanged by $\mathcal{B}$ and $\mathcal{C}$ cannot be disclosed, like encrypted information exchange, current schemes cannot guarantee the fairness of the interaction.
Unfortunately, in a cross-chain scenario, affected by the risk of sharp currency price fluctuations, more users choose to interact in a barter way. This makes current fair exchange schemes fail to guarantee the fairness of the interaction process, as none of the two parties might be willing to be the first to disclose its secret that is employed to protect the exchange object  (shown in Fig.~\ref{Fig:UE}). We define this to be the \emph{Unfair Exchange (UE) problem}, which severely limits the applications of cross-chain protocols. The difficulty in solving this problem lies in how to ensure the disclosure of both parties' secrets. 

In this paper, we propose $\mathsf{Cross}$-$\mathsf{Channel}$ to effectively address the above two challenges.
First, in order to solve the UAC problem, we design a novel hierarchical channel architecture with a hierarchical settlement protocol, which allows channel initiators to establish sub-channels with other participants in order to use an unsettled amount, improve the throughput of the processed transactions, and ensure the correctness of the final settlement.
Moreover, we present a general fair exchange protocol based on zk-SNARK~\cite{groth2016size} and $(t,n)$-VSS (Verifiable Secret Sharing)~\cite{pedersen1991non} to address the UE problem. Specifically, this protocol adopts zk-SNARK to guarantee the authenticity and privacy of information, and employs $(t,n)$-VSS with a smart contract to ensure the fair disclosure of both parties' secrets.
Finally, we adopt the HTLC to enhance the atomicity of cross-chain interactions. Note that HTLC was originally proposed for synchronous networks, while in an asynchronous network, some nodes may be affected by high latency and cannot successfully receive or send messages, which makes blocked nodes unable to complete the protocol due to timeouts. Therefore, we develop an incentive mechanism to make HTLC, thereby $\mathsf{Cross\text{-}Channel}$, suitable for asynchronous networks. % which while considering the impact of asynchronous networks in our design. In the asynchronous network, some nodes may be affected by high latency and cannot receive or successfully send messages, which makes blocked nodes unable to complete the protocol due to timeout. So, we provide  an incentive mechanism to help $\mathsf{Cross\text{-}Channel}$ be suitable for asynchronous networks.
%so that it can effectively enhance atomicity of cross-chain interactions. %In the asynchronous network, some nodes may be affected by high latency and cannot receive or successfully send messages.

For convenience, we highlight our contributions as follows:

\begin{enumerate}
	%propose present develop realize design
	%TODO: Minghui:    --03
	%想一下第一条跨链给通道带来的挑战
	%把第三条拆成两部分
	%把实验放到第三部分中
	%完成
	\item  $\mathsf{Cross}$-$\mathsf{Channel}$ is the first channel scheme to support cross-chain operations in both synchronous and asynchronous networks. To the best of our knowledge, current channel schemes only support intra-chain operations.% within synchronous networks. %, especially well-suited for processing large-scale transactions. 
	%and can be applied to asynchronous networks.
	
	\item We design a novel hierarchical channel architecture with a hierarchical settlement protocol, which can effectively solve the UAC problem. The proposed new architecture can further improve the throughput of $\mathsf{Cross}$-$\mathsf{Channel}$ and is well-suited for processing large-scale transactions.

	\item A general fair exchange protocol is proposed in this paper to guarantee the disclosure of both parties' secrets, ensure the fairness of cross-chain interactions, and further solve the UE problem.
	
	\item $\mathsf{Cross}$-$\mathsf{Channel}$ can support various cross-chain operations, especially for the exchange of encrypted information that does not rely on cryptocurrencies. %Based on We add an incentive protocol to the scheme to help our scheme adapt to high-latency asynchronous networks and effectively resist attacks. 
	
	\item Extensive simulation experiments in AliCloud are conducted to validate the performance of $\mathsf{Cross}$-$\mathsf{Channel}$. %It brings N times the throughput with a small amount of extra overhead on the chain.
\end{enumerate}

%文章组织
The rest of the paper is organized as follows.
We first review the most related work in Section~\ref{sec:related work}.
Then, we present the models of our scheme, %and design goals of our scheme, 
and briefly introduce the necessary preliminary knowledge in Section~\ref{sec:Preliminaries}.
% we propose a novel technique $\mathsf{zk\text{-}GSigproof}$ and provide a security analysis.
In Section~\ref{sec:Cross-Channel}, we detail our scheme $\mathsf{Cross}$-$\mathsf{Channel}$ considering different applications. %, including the overview, basic protocol details and deeper discussion.
%In Section~\ref{sec:analysis}, we analyze the security and privacy of $\mathsf{Vehicloak}$, and discuss some important issues in more details.
Section~\ref{sec:Implementation} reports the simulation experiments on AliCloud to evaluate the performance of $\mathsf{Cross}$-$\mathsf{Channel}$.
Finally, we provide concluding remarks in Section~\ref{sec:Conclusion}.
	\section{Related Work and Motivation}\label{sec:related work}
In this section, we introduce a few well-known off-chain channel schemes related to our design, including payment channels, state channels and virtual channels. 

\noindent\textbf{Payment channel schemes.} 
A payment channel is a temporary off-chain trading channel for improving the transaction throughput of the entire system. 
It was originally designed as a one-way channel~\cite{hearn2013micro}, and later evolved into a bi-directional channel so that one party can be both a sender and a receiver~\cite{decker2015fast}.
The most widely discussed recent projects are
Lightning Network~\cite{poon2016bitcoin} and Raiden~\cite{Raiden}, which establish payment channels in Bitcoin~\cite{nakamoto2008bitcoin} and Ethereum~\cite{wood2014ethereum}, respectively. In recent years, payment channel schemes with different features such as re-balancing, throughput maximization, attack resistance, and privacy protection,
have been constructed~\cite{li2020secure, khalil2017revive, papadis2022payment, green2017bolt, malavolta2017concurrency}.

\noindent\textbf{State channel schemes.} 
A state channel enriches the functionality of
a payment channel. Concretely, the users of a state channel can,
besides payments, execute complex smart contracts in an off-chain
way (e.g., voting, auctions) and allow the exchange of states between two or more participants~\cite{dziembowski2018general, close2018forcemove}. The concept of state channel was proposed by Jeff Coleman~\cite{statechannel}. Later, Counterfactual~\cite{coleman2018counterfactual} 
gave a detailed design and Dziembowski {\em et~al.}~\cite{dziembowski2018general} provided formal definitions and security proofs for the general state channel network. ForceMove~\cite{close2018forcemove} is a framework that can support $n$-party participation in a state channel. State channel schemes with
faster payment speeds were developed in~\cite{miller2019sprites, chakravarty2020hydra}.

\noindent\textbf{Virtual channel schemes.} 
Virtual channels enable the creation,
progression, and closing of the channel without interacting with the underlying blockchain. 
%In some sense,
%they thus mimic the functionality offered by payment channels,
%with the difference that they are not created directly over
%the blockchain but instead over two payment channels.
Dziembowski {\em et~al.}~proposed Perun~\cite{dziembowski2019perun}, the first virtual channel scheme  in Ethereum. Later, they presented another scheme in~\cite{dziembowski2019multi}, discussing how to support virtual multi-party state channels. Aumayr {\em et~al.}~\cite{aumayr2021bitcoin} designed a virtual channel compatible with Bitcoin, proving that the establishment of a virtual channel can be independent of smart contracts.

%然而，none of the works mentioned above considers跨链需求， when designing their protocols.
\noindent\textbf{Summary and motivation.} According to the above analysis, one can see that %channels %are developing towards more generality (payment channels to state channels) and scalability (payment/state channels to virtual channels). 
the emergence of state channels broadens the application of payment channels, enabling off-chain channels to provide more services. Virtual channels can effectively reduce the cost of channel network establishment and improve the efficiency of transaction processing.
%can provide more services, reduce on-chain costs and improve transaction processing efficiency.
Even though these channel schemes can successfully enhance the scalability of blockchain systems, they were originally proposed for operations within a single-chain, and cannot be directly extended to support cross-chain operations considering the challenges brought by the problems of UAC and UE. Furthermore, the design of the
current cross-chain solutions that do not rely on third parties, e.g., HTLC,
%their 
targets synchronous networks, while the unbounded latency in asynchronous networks may render them completely fail. %  there are still some unresolved issues (described in Sec.~\ref{sec:introduction}), which makes current channel schemes cannot meet the needs of complex scenarios such as cross-chain. 
Motivated by these considerations,
we propose $\mathsf{Cross}$-$\mathsf{Channel}$ in this paper, which can effectively support efficient and fair atomic cross-chain operations under decentralized asynchronous networks.
	\section{Models and Preliminaries}\label{sec:Preliminaries}
In this section, we first define our system model and threat model, then  %present the assumptions of $\mathsf{Cross\text{-}Channel}$, and finally 
provide preliminaries on  zero-knowledge proof, fair exchange, hashed timelock contracts, and threshold key management

\subsection{Models}\label{model}
In this paper, we consider building a channel between heterogeneous blockchains. Such a channel involves three entities: sender ($\mathcal{S}$), receiver ($\mathcal{R}$), and blockchain miners ($\mathcal{M}$). $\mathcal{S}$ and $\mathcal{R}$ are the two parties of a channel interaction, being responsible for opening and closing the channel,  uploading signature information, etc. %Miners $\mathcal{M}$ are responsible for managing the blockchain with a secure consensus algorithm (e.g. proof of work, proof of stake). 
$\mathcal{M}$ is required to execute a smart contract to determine the legitimacy of the uploaded information, and honest miners would be accordingly rewarded by the blockchain incentive mechanism (just like the main chain of Ethereum). 
\begin{itemize}
	\item Sender $\mathcal{S}$. We assume that they can be arbitrarily malicious, and can act in their best interests. %Furthermore, a sender and a receiver may collude with each other to maximize their benefits.
	\item Receiver $\mathcal{R}$. We assume that they can be arbitrarily malicious, and can act in their best interests.
%	\item Receiver $\mathcal{R}$. We assume that $\mathcal{R}$ can be arbitrarily malicious, and act in their best interests. Furthermore, a receiver may collude with each other to maximize his benefits.
	\item Miner $\mathcal{M}$. Multiple miners follow a secure consensus algorithm to maintain the blockchain. Adversaries cannot compromise the majority of them to bring down the overall blockchain system.
\end{itemize}
All transactions can be divided into two categories, with one being the traditional on-chain transactions (or called transactions), which are confirmed and verified through the blockchain consensus mechanism, and the other being the off-chain transactions (or called receipts), which exist in channels and are verified by nodes within the channel. Some receipts would eventually be packaged into on-chain transactions and update the on-chain states of the nodes in the channel.

$\mathsf{Cross}$-$\mathsf{Channel}$ aims to realize scalability, fairness, and atomicity. % We give detailed explanations and proofs in Sec.~\ref{discussions}. In order to 
To achieve these goals, we next introduce the adopted key technologies.

\subsection{Zero-knowledge Proof: zk-SNARK} \label{zk-snark}
%Zero-knowledge proofs allow one party (the prover) to prove to another (the verifier) that a
%statement is true, without revealing any information beyond the validity of the statement itself~\cite{groth2016size}. 
zk-SNARK (zero knowledge Succinct Non-interactive ARgument of Knowledge) is one type of the zero-knowledge proofs, which allows one party (the prover) to prove to another party (the verifier) that a statement is true, without revealing any information beyond the validity of the statement~\cite{groth2016size}. 
%As a type of zero-knowledge proof, zk-SNARK (zero knowledge Succinct Non-interactive ARgument
%of Knowledge) has found its great value in fair exchange (shown in~\ref{fair-exchange}) and preserving privacy for blockchains~\cite{dziembowski2018fairswap, sasson2014zerocash}. This is mainly due to its
%security features in the succinct proof.
\newtheorem{mydef}{Definition}
\begin{mydef}[zk-SNARK for an $\mathbb{F}$-arithmetic Circuit]
	An $\mathbb{F}$-arithmetic circuit $C$ takes inputs (public inputs $\vec{x}$, private inputs $\vec{w}$ ) from a finite field such as ($\mathbb{F}^n$, $\mathbb{F}^h$), and outputs the result ( $\in \mathbb{F}^l$) based on the circuit logic. A zk-SNARK scheme essentially aims to ensure the satisfaction ($C(\vec{x}, \vec{w}) = 0^l$) of $C$, denoted as $R_C$. The whole process can be represented by a tuple of polynomial-time algorithms $\Pi$ $\overset{\text{def}}{=}$ (Setup, Prove, Verify):
	\begin{itemize}
		\item  ${\mathtt{Setup}}(1^{\lambda}, C) \rightarrow (\mathsf{pk, vk})$.
		The algorithm $\mathtt{Setup}$ takes a security parameter $1^{\lambda}$ and a circuit $C$ as inputs to obtain the key pair $(\mathsf{pk, vk})$, where $\mathsf{pk}$ is the proving key for proof generation and $\mathsf{vk}$ is the verification key for proof verification.  The pair $(\mathsf{pk, vk})$ constitutes the common reference string $\mathsf{crs}$.
		\item $\mathtt{Prove}(\mathsf{pk}, \vec{x}, \vec{w}) \rightarrow \pi $.
		%On input $\vec{x}, \vec{w}$, generate a proof $\pi$ with the proving key $\mathsf{pk}$.
		The algorithm takes as inputs the proving key $\mathsf{pk}$, the public inputs $\vec{x}$ and the private inputs $\vec{w}$ to generate a succinct zero-knowledge proof $\pi$.
		\item $ \mathtt{Verify}(\mathsf{vk}, \vec{x}, \pi)  \rightarrow \mathsf{1/0}$.
		%It verifies the proof $\pi$ with verification key $vk$ and the input $\vec{x}$.The algorithm returns 1 if the verification is successful and 0 otherwise.
		The algorithm verifies $\pi$ based on the verification key $\mathsf{vk}$ and public inputs $\vec{x}$. It returns 1 if the verification is successful and 0 otherwise.
	\end{itemize}
\end{mydef}
%In the algorithm $\Pi.\mathsf{Setup}$, ${\mathsf{crs}}$ is public, which means that anyone with circuit $C$ can verify a proof.
Given a security parameter $\lambda$ and a circuit $C$ with a relation $R_C$, an honest $\mathcal{S}$ can generate a proof $\pi$ to convince $\mathcal{R}$ for every pair $(\vec{x}, \vec{w})$ $\in$ $R_C$. In the algorithm $\Pi.\mathsf{Setup}$, ${\mathsf{vk}}$ and $C$ are public, which means that anyone with $\vec{x}$ can verify a proof.
          
%\subsection{Fair Exchange: Digital Commodity Payment Based on zk-SNARK}
\subsection{Fair Exchange Based on zk-SNARK}\label{fair-exchange}

Fair exchange refers to the scenario where users exchange currency for digital commodities, e.g., digital assets and valuable information.
A fair exchange protocol was designed to guarantee that the exchange is executed in a fair way~\cite{dziembowski2018fairswap}. 
zk-SNARK (see Sec.~\ref{zk-snark}) is one of the key technologies to realize fair exchange. It can help $\mathcal{S}$ protect the privacy of information content while proving its authenticity. Next, we give the definition of the circuit used for realizing fair exchange.
\newtheorem{mydef2}{Definition}
\begin{mydef}[The Circuit for Fair Exchange]\label{fair-exchange-circuit}
	 The whole process can be represented by a tuple of polynomial-time algorithms $\Upsilon$ $\overset{\text{def}}{=}$ ($\mathtt{DataAuth}$, $\mathtt{KeyAuth}$):
	\begin{itemize}
		\item $\mathtt{DataAuth}(m) \rightarrow h(m) $.
		 The algorithm takes the digital commodity (or plaintext) $\mathsf{m}$ as input, and computes the authenticator $h(m)$, which is the hash result of the digital commodity.
		\item $ \mathtt{KeyAuth}(k, m) \rightarrow (\overline{m}, h(k))$.
		The algorithm takes the encryption key $\mathsf{k}$ and the digital commodity $\mathsf{m}$ as inputs, and generates the encrypted digital commodity (or ciphertext) $\overline{m}$ as well as the hash result of the encryption key $h(k)$.
	\end{itemize}
\end{mydef}
This circuit is illustrated in Fig.~\ref{Fig:fair-exchange-circuit}. $\mathcal{S}$ can use the circuit to generate a zero-knowledge proof based on the algorithm $\Pi.\mathtt{Prove}$, which proves the authenticity of the encrypted information and the encryption key.
\begin{figure}[htb]
	\centering
	\includegraphics[width=0.6\textwidth]{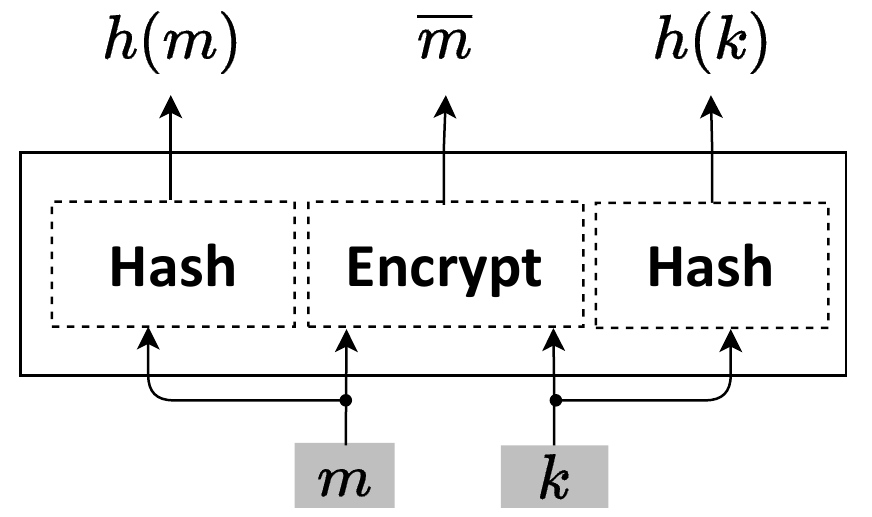}
	\caption{The logic diagram of the fair exchange circuit. The parameters with gray background are private ones protected with zk-SNARK.}
	\label{Fig:fair-exchange-circuit}
		\vspace{-0.2cm} 
\end{figure}

\subsection{Hashed Timelock Contracts}\label{htlc}
The Hashed Timelock Contract (HTLC) was first applied and implemented in Bitcoin's Lightning network~\cite{poon2016bitcoin}, which aims to ensure the atomicity of cross-chain asset exchanges.
HTLC requires that both sides of an interaction (e.g., $\mathcal{S}$ and $\mathcal{R}$) have accounts in each blockchain (i.e., accounts $\mathcal{S^\alpha}$ and $\mathcal{R^\alpha}$ in chain $\alpha$, and accounts $\mathcal{S^\beta}$ and $\mathcal{R^\beta}$ in chain $\beta$, for both $\mathcal{S}$ and $\mathcal{R}$). 
The smart contracts\footnote{
Some blockchain systems such as Bitcoin do not support smart contracts~\cite{zou2019smart}. In such a case, HTLC is implemented with other mechanisms such as scripting~\cite{script}.
For convenience, we use smart contracts to represent true smart contracts as well as other techniques such as scripting when presenting HTLC in this study.} in blockchain $\alpha$ and $\beta$ are denoted by $\xi^\alpha$ and $\xi^\beta$, respectively.

\newtheorem{mydef4}{(Definition)}
\begin{mydef}[The Process of HTLC]
    Assume that account $\mathcal{S^\alpha}$ and account $\mathcal{R^\beta}$ intend to exchange assets with each other. The whole interaction process in HTLC is divided into three steps: $\mathsf{Lock}$, $\mathsf{Update}$, and $\mathsf{Refund}$.
	\begin{itemize}
		\item $\mathsf{Lock:}$ First, $\mathcal{S^\alpha}$ selects a random 256-bit integer as the preimage pre and computes its hash value h(pre). Then in the smart contract $\xi^\alpha$, $\mathcal{S^\alpha}$ opens h(pre), employs h(pre) to lock its asset sent to $\mathcal{R^\alpha}$, and sets a timer $T_3$. Similarly, in the smart contract $\xi^\beta$, $\mathcal{R^\beta}$ locks its asset sent to $\mathcal{S^\beta}$ with the same h(pre) and sets a timer $T_4$, where $T_3>T_4$. 
		
	    \item $\mathsf{Update:}$ $\mathcal{S^\beta}$ offers pre to $\xi^\beta$ within $T_4$ to unlock the asset sent by $\mathcal{R^\beta}$. 
	    After $\mathcal{R^\beta}$ learns pre, $\mathcal{R^\alpha}$ provides pre to $\xi^\alpha$ within $T_3$ to unlock the asset sent by $\mathcal{S^\alpha}$.
	    
	    \item $\mathsf{Refund:}$ If the time exceeds $T_4$ and $S^\beta$ does not provide pre, the locked asset in $\xi^\beta$ would be returned to $R^\beta$. In this case, since $\mathcal{R^\alpha}$ does not know pre (only $\mathcal{S}$ has pre), $\mathcal{R^\alpha}$ cannot provide pre within the specified time $T_3$ in  blockchain $\alpha$. When $T_3$ times out, the locked asset would be returned to $S^\alpha$.
	\end{itemize}
\end{mydef}

Note that the information in accounts $\mathcal{R^\alpha}$ ($\mathcal{S^\alpha}$) and $\mathcal{R^\beta}$ ($\mathcal{S^\beta}$) are shared because they both belong to the same entity $\mathcal{R}$ ($\mathcal{S}$). Therefore, when $\mathcal{R^\beta}$ learns $pre$ in blockchain $\beta$, $\mathcal{R^\alpha}$ can send $pre$ to the smart contract in blockchain $\alpha$.
Also note that $T_3>T_4$ is necessary in order to ensure atomicity. %, the time threshold $T_3$ in $\alpha$ needs to be longer than the time threshold $T_4$ in $\beta$  (related discussion is shown in Sec.~\ref{discussions}).
Nevertheless, in asynchronous networks, we find that HTLC may not guarantee atomicity due to network delay. Therefore, we introduce an incentive mechanism to overcome this problem in Sec.~\ref{hierarchical-channel-scheme}. % and Sec.~\ref{discussions} for more details).

\subsection{Pedersen’s Verifiable Secret Sharing}\label{vss}
Pedersen’s verifiable secret sharing scheme does not need any trusted third party, which enables $n$ participants to share a secret in a completely decentralized way~\cite{pedersen1991non}.

\newtheorem{mydef3}{(Definition)}
\begin{mydef}[Pedersen’s $(t,n)$-VSS]
	Let $\mathbb{G}_q$ be a $q$-order subgroup of the prime $P$, with $g$ and $h$ being generators of $\mathbb{G}_q$. Let $s$ be the shared secret, $\mathcal{O}$ the owner of $s$, $n$ the number of participants, $t$ the threshold value, and $\mathsf{U_i}$  the $i$-th participant. Define Pedersen commitment as $E(a,b)=g^ah^b$. Then the whole process can be divided into three steps: $\mathsf{Share}$, $\mathsf{Verify}$ and $\mathsf{Recover}$.
	\begin{itemize}
		\item $\mathsf{Share}$: First, $\mathcal{O}$ selects a random number $r$, computes commitment $E(s,r)=g^sh^r$, and opens $E(s,r)$. Then, $\mathcal{O}$ selects $t-1$ random numbers $a_i, i\in[1,t-1]$, constructs a polynomial $f(x)=s + \sum_{i=1}^{t-1}a_ix^i$, and  computes $s_i=f(i)$. Next, $\mathcal{O}$ selects another set of random numbers $b_i, i\in[1,t-1]$, calculates $E_{a_i}=g^{a_i}h^{b_i}$, and opens them. Finally, $\mathcal{O}$ constructs a polynomial $g(x)=r + \sum_{i=1}^{t-1}b_ix^i$, computes $r_i=g(i)$, and sends the $i$th secret share ($s_i,r_i$) to $\mathsf{U_i}$.
		\item $\mathsf{Verify}$: When $\mathsf{U_i}$ receives ($s_i,r_i$), it computes $E(s_i,r_i)$ and $\prod_{j=0}^{t-1}E_j^{i^j}$, where  $E_j^{i^j}=g^{a_ji^i}h^{b_ji^i}$. If the computed $E(s_i,r_i)$ and $\prod_{j=0}^{t-1}E_j^{i^j}$ are equal, the received $s_i$ is correct.
		\item $\mathsf{Recover}$: When at least $t$ participants share the secret correctly and contribute their shares, the secret can be recovered by Lagrange polynomial interpolation, i.e. $s=\sum_{i=1}^ts_i\prod_{1 \leq j \leq t, j \neq i} \frac{i}{i-j} $.
	\end{itemize}
\end{mydef}
%In all steps, one can find that $\mathsf{(t,n)\text{-}VSS}$ scheme does not need any trusted third party. 
The participants can verify the validity of the received shares in step $\mathsf{Verify}$, so as to detect the invalid messages sent by adversaries.

%Note that when we adopt $\mathsf{(t,n)\text{-}VSS}$ in  $\mathsf{Cross}$-$\mathsf{Channel}$, we ensure that $t/n$ strictly follow the fault-tolerance property of the underlying blockchain consensus algorithm. For example, if PoW (the Proof-of-Work consensus) is employed, we require that $t/n>1/2$. This requirement is necessary to guarantee that the secret can be correctly recovered, as long as the underlying blockchain system remains secure.

	\newtheorem{assumption}{Assumption}
\newtheorem{lemma}{Lemma}
\newtheorem{theorem}{Theorem}
\section{The Cross-Channel}\label{sec:Cross-Channel}
In this section, we first provide an overview on $\mathsf{Cross}$-$\mathsf{Channel}$, then present the protocol in detail, and finally analyze its scalability, fairness and atomicity.

%TODO:Overview 详细+改图   --Yihao 0929
%完成  --Yihao 0930
\subsection{Overview} \label{overview}

\begin{figure*}[htb]
	\centering
	\includegraphics[width=0.85\textwidth]{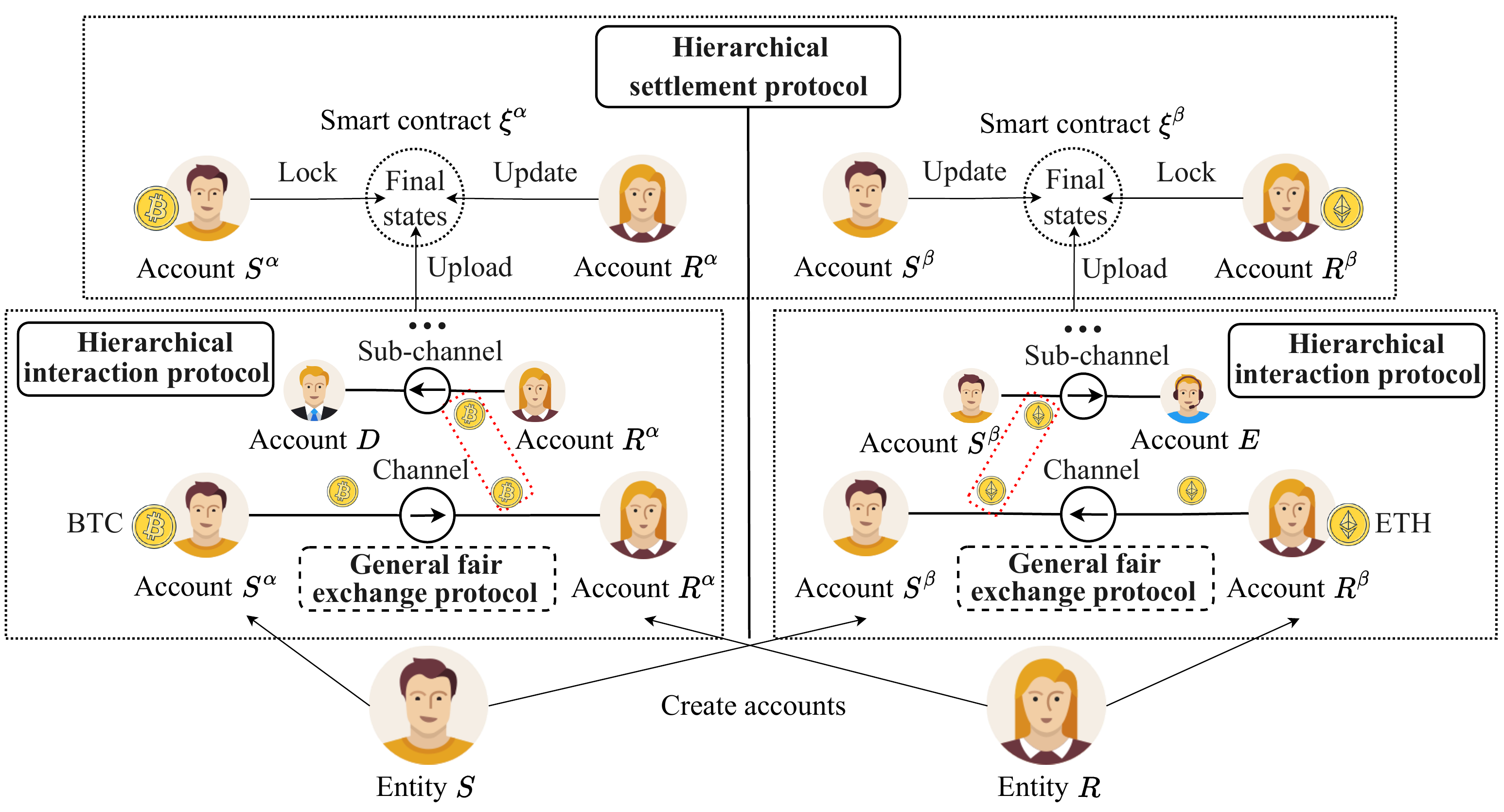}
	\caption{The cross-chain currency exchange procedure with $\mathsf{Cross}$-$\mathsf{Channel}$. The whole process includes three protocols, i.e, the hierarchical interaction protocol, the general fair exchange protocol, and the hierarchical settlement protocol. The hierarchical interaction protocol contains a hierarchical channel structure that solves the UAC problem. The general fair exchange protocol solves the UE problem that occurs during the interaction in the (sub-)channel, especially for the encrypted information exchange scenario. The hierarchical settlement protocol adopts an improved HTLC protocol, overcoming the impact of high latency in asynchronous networks while ensuring the correctness of cross-chain settlements.}
	\label{Fig:overview}
\end{figure*}
%举例说明整个协议的大体过程
$\mathsf{Cross}$-$\mathsf{Channel}$ is an efficient channel scheme that supports complex services such as cross-chain. %, including a hierarchical architecture, an off-chain general fair exchange protocol $\Theta$ and a cross-chain settlement smart contract $\xi$.
For the sake of convenience, we use an example (currency exchange) to illustrate the general process of $\mathsf{Cross}$-$\mathsf{Channel}$ (shown in Fig.~\ref{Fig:overview}). 
In this example, an entity $\mathcal{S}$ has an account $\mathcal{S^\alpha}$ in Bitcoin $\alpha$, and an entity $\mathcal{R}$ has an account $\mathcal{R^\beta}$ in Ethereum $\beta$.
$\mathcal{S}$ and $\mathcal{R}$
attempts to frequently exchange $\mathcal{S^\alpha}$'s Bitcoins (BTC) for $\mathcal{R^\beta}$'s Ether (ETH).
In order to achieve the above goal, $\mathcal{S}$ needs to creates an account $\mathcal{S}^\beta$ in $\beta$ to get $\mathcal{R^\beta}$'s ETH, and $\mathcal{R}$ also needs to create an account $\mathcal{R}^\alpha$ in $\alpha$ to get $\mathcal{S^\alpha}$'s BTC. The whole process can be summarized as follows.
%$\mathcal{S}'s$ accounts in $\alpha$ and $\beta$ are remarked as $\mathcal{S}^\alpha$ and $\mathcal{S}^\beta$, where the account $\mathcal{S}^\alpha$ has enough BTC. $\mathcal{R}'s$ accounts in $\alpha$ and $\beta$ are denoted as $\mathcal{R}^\alpha$ and $\mathcal{R}^\beta$, where the account $\mathcal{R}^\beta$ has enough ETC.

First, $\mathcal{S}$ and $\mathcal{R}$ need to establish channels in $\alpha$ and $\beta$. Specifically, two accounts in the same blockchain ($\mathcal{S}^\alpha$ with $\mathcal{R}^\alpha$, or $\mathcal{S}^\beta$ with $\mathcal{R}^\beta$) execute the hierarchical interaction protocol $\Psi$ to establish a channel, then send currency in this channel. The channel can be a hierarchical one with multiple sub-channels in order to spend the unsettled amounts (e.g., accounts $D$ and $E$ in Fig.~\ref{Fig:overview}) based on $\Psi$. Currency exchanges within the channel follow the general fair exchange protocol $\Theta$, which also supports fair exchange (i.e., exchange currency with encrypted information) and encrypted information exchange.
Finally, when the channel needs to be closed, all involved accounts in the channel, including those for the sub-channels,  upload their final states to the corresponding smart contracts, and the hierarchical settlement protocol $\Phi$ is executed to complete the settlement. Note that, we adopt an improved HTLC in $\Phi$ to ensure the correctness and atomicity of the cross-chain settlement. 

In the following two subsections, we detail the hierarchical interaction protocol with settlement (Sec.~\ref{hierarchical-channel-scheme}) and the general fair exchange protocol (Sec.~\ref{general-fair-exchange}).

\subsection{Hierarchical Channel Design} \label{hierarchical-channel-scheme}
\begin{figure}[htb]
	\centering
	\includegraphics[width=0.8\textwidth]{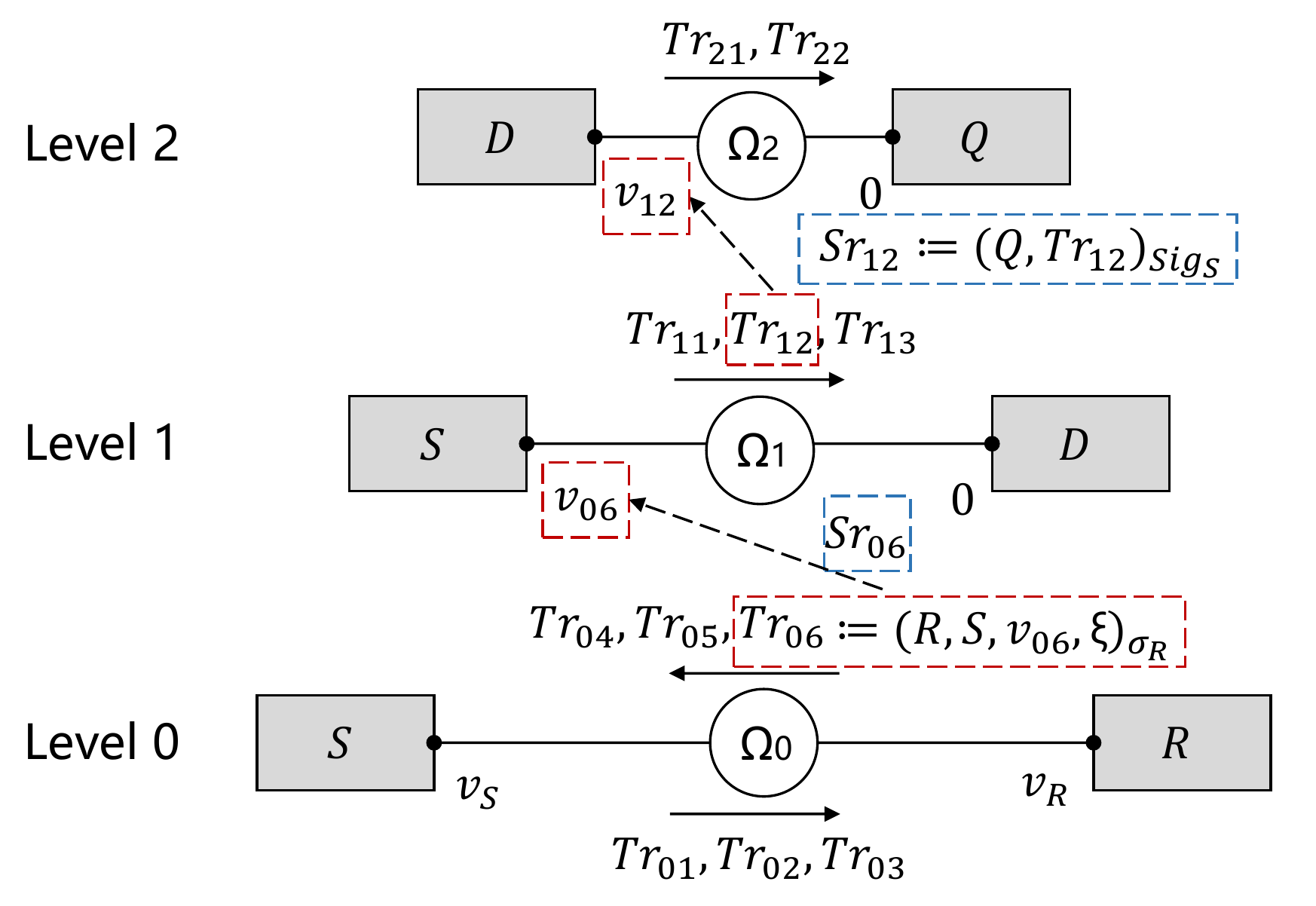}
	\caption{An illustration example of a 3-level hierarchical channel.}
	\label{Fig:hibe-channel}
\end{figure}
We present a hierarchical interaction protocol $\Psi$ with a settlement protocol $\Phi$ to solve the Unsettled Amount Congestion (UAC) problem. 
The settlement protocol $\Phi$ can be written into smart contracts and verified by miners $\mathcal{M}$. 

\noindent{\bf Hierarchical Interaction Protocol $\Psi$.} Fig.~\ref{Fig:hibe-channel} illustrates an example  hierarchical channel. One can see that the whole structure has three levels, which are marked as $\mathsf{Level}$ 0--2. 
At $\mathsf{Level}$ 0, $\mathcal{S}$ and $\mathcal{R}$ send $\mathsf{Tx}_{\text{Open}}$ to the smart contract $\xi$ to establish a channel $\Omega_0$ with initial state ($\mathsf{v_\mathcal{S}, v_\mathcal{R}}$), which can be denoted as $[\mathcal{S} \mapsto \mathsf{v_S}$, $\mathcal{R} \mapsto \mathsf{v_R}]_{\Omega_0}$, meaning that $\mathcal{S}$ has $\mathsf{v_\mathcal{S}}$ in $\Omega_0$, $\mathcal{R}$ has $\mathsf{v_\mathcal{R}}$ in $\Omega_0$, and the state of $\Omega_0$ is ($\mathsf{v_\mathcal{S}, v_\mathcal{R}}$).
\begin{equation}\notag
	\mathsf{Tx}_{\text{Open}} \overset{\text{def}}{=} (\mathsf{From: \mathcal{S / R}; To: \xi; v_\mathcal{S}/v_\mathcal{R}}).
\end{equation}
For the sake of convenience, $\mathsf{v_S}$ and $\mathsf{v_R}$ can be the coins deposited by $\mathcal{S}$ and $\mathcal{R}$ (for some $\mathsf{v_S}$, $\mathsf{v_R}$ $\in$ $\mathbb{R}$ $\geq$ 0). 
%Yihao: 签名符号：此部分共涉及三个签名，第一个签名是通道内的签名例如在$\Omega_0$中，一方给另一方发送Tr，Tr内包含的签名我们用\sigma表示
%第二、三个签名是建立子通道的签名，例如建立$\Omega_1$需要$\Omega_0$中交互双方的签名，我们用Sig表示。
In $\Omega_0$, $\mathcal{S}$ and $\mathcal{R}$ can send transaction receipts \{$Tr$\} to each other, such as $Tr_{01\text{--}06}$ in Fig.~\ref{Fig:hibe-channel}. Let $Tr \overset{\text{def}}{=} (\mathsf{snd},\mathsf{rcv}, \mathsf{v}, \xi)_{\mathsf{\sigma_{snd}}}$, meaning that $\mathsf{snd}$ transfers amount $\mathsf{v}$ to $\mathsf{rcv}$ via smart contract $\xi$, where $\mathsf{\sigma_{snd}}$ is the message signature. %where $\mathsf{\sigma_{snd}}$ is the sender's signature for the transfer amount $\mathsf{v}$, $\mathsf{rcv}$ represents the receiver address, and $\xi$ is the smart contract address. 
Note that $\mathsf{snd}$ and $\mathsf{rcv}$ can be omitted if clear from context. 

To spend an unsettled amount in $\Omega_0$, the two parties of $\Omega_0$ can negotiate to open a sub-channel. For example, $\mathcal{S}$ can send a request to $\mathcal{R}$ asking for the permission to open a sub-channel $\Omega_1$ with $\mathcal{D}$ to spend the unsettled amount in $Tr_{06}$.   %This request message has a signature $\mathsf{Sig_{\mathcal{S}}}$.  
If $\mathcal{R}$ permits, 
%First, $\mathcal{S}$ selects a transaction receipt (e.g., $Tr_{06}$) and send a request to $\mathcal{R}$. 
%
it would generate a sub-channel receipt $Sr_{06}$ and send it to $\mathcal{S}$, where $Sr_{06} \overset{\text{def}}{=} (\mathcal{D}, Tr_{06})_{\mathsf{Sig_{\mathcal{R}}}}$, with  $\mathsf{Sig_{\mathcal{R}}}$ being $\mathcal{R}$'s signature for $Sr_{06}$, and $\mathcal{D}$ the address of the participant with which $\mathcal{S}$ would construct a sub-channel to spend the unsettled amount in $Tr_{06}$. 
Then, $\mathcal{S}$ sends $Sr_{06}$ to $\mathcal{D}$, who needs to verify the legitimacy of $Sr_{06}$ based on $\mathsf{Sig_{\mathcal{R}}}$ and ensure the correctness of $\mathsf{v_{06}}$ according to the $\mathsf{\sigma_{\mathcal{R}}}$ carried by $Tr_{06}$.
When the verification is successful, $\mathcal{S}$ takes $\mathsf{v_{06}}$ as its initial balance to open sub-channel $\Omega_1$ with $\mathcal{D}$, i.e., $[\mathcal{S} \mapsto \mathsf{v_{06}}, \mathcal{D} \mapsto 0]_{\Omega_1}$. Note that $\Omega_1$ is a sub-channel that is constructed particularly for the spending of the unsettled amount in $Tr_{06}$ -- no other transactions between $\mathcal{S}$ and $\mathcal{D}$ are allowed. Following the same procedure, the parties ($\mathcal{S}$ and $\mathcal{R}$) in $\mathsf{Level}$ 0 can choose another $Tr$ to create another new sub-channel and the parties ($\mathcal{S}$ and $\mathcal{D}$) in $\mathsf{Level}$ 1 can also generate sub-channels. 
For example, as shown in Fig.~\ref{Fig:hibe-channel}, $\mathcal{D}$ establishes a sub-channel $\Omega_2$ with user $\mathcal{Q}$ based on $Tr_{12}$. It is worthy of noting that all operations related to a sub-channel are off-chain, which means that smart contract is not involved thus conserving blockchain resources. 

\noindent{\bf Hierarchical settlement protocol $\Phi$.} We design a new protocol $\Phi$ %(shown in Algorithm~\ref{A:HCS})
and implement it in smart contract $\xi$ to support settlement. 
Not that, $\Phi$ can be adopted for both intra-chain and cross-chain channel settlement, which differ slightly.

We use the same example shown in Fig.~\ref{Fig:hibe-channel} to demonstrate the procedure for intra-chain settlement. 
First, $\mathcal{S}$ and $\mathcal{R}$ send requests to $\xi$ to close the hierarchical channel. After receiving the channel closing requests, $\xi$ sets a timer $T_2$ %to collect the final states in the hierarchical channel 
and broadcasts this closing event to all blockchain participants and miners.
This message also requires the users ($\mathcal{S}$, $\mathcal{R}$, $\mathcal{D}$ and $\mathcal{Q}$) to compute their final states \{$f$\} based on the corresponding related receipts ($Tr$ and $Sr$). 
For instance, $\mathcal{S}$ and $\mathcal{R}$ need to compute their final states based on $Tr_{01\text{--}06}$, $Sr_{06}$, $Tr_{11\text{--}13}$, and $Sr_{12}$.
Then, each participant packages its $f$ and the related receipts into a $\mathsf{Tx}_{\text{Close}}$ message and sends $\mathsf{Tx}_{\text{Close}}$ to $\xi$ within $T_2$. 
%Considering the high latency of an asynchronous network, %and the incentive manipulation attacks
%some participant in the channel may fail to upload its $\mathsf{Tx}_{\text{Close}}$ within $T_2$, which makes the settlement result inaccurate. %For example,
%To overcome this problem, we set another timer $T_3$ in $\xi$, allowing
%the participant who fails in $T_2$ to request the miners who have received $\mathsf{Tx}_{\text{Close}}$ in $T_2$ to upload to other miners during $T_3$ on its behalf. 
\begin{equation}\notag
	\mathsf{Tx}_{\text{Close}} \overset{\text{def}}{=} (\mathsf{From: Snd; To: \xi;}\; f, \{Sr\}).
\end{equation}
%{\bf Yihao, we need to figure out whether we need $t_3$. if $t_3$ is not needed, change the $t_3$ to $t_2$ in the following paragraph.}

When $T_2$ times out, $\xi$ verifies the uploaded data from $\mathsf{Level}$ 0, i.e., the correctness of the signatures and account balances. Note that $\mathsf{Tx}_{\text{Close}}$ contains all sub-channel receipts agreed by $\mathsf{Snd}$, which are used by $\xi$ to check the correctness of account balances. If the verification is successful, $\xi$ would verify the data from the next sub-level, i.e., $\mathsf{Level}$ 1. If the verification succeeds, $\xi$ continues to verify the next higher level. If $\xi$ fails at any level, all sub-channels in that level and above would automatically fail. Such a failure drives $\xi$ to adjust the final state of each participant in all successful levels based on the received $\{Sr\}$'s. Finally, the miners update the corresponding on-chain states according to the settlement results.

To support cross-chain settlement, the hierarchical settlement protocol $\Phi$ adopts the HTLC protocol (introduced in Sec.~\ref{htlc}) to ensure the atomicity of the interaction. 
As shown in Fig.~\ref{Fig:overview}, $\mathcal{S}$ and $\mathcal{R}$ need to have accounts in both blockchain $\alpha$ and $\beta$ for cross-chain operations. Based on the hierarchical interaction protocol and the intra-chain hierarchical settlement protocol mentioned above, accounts in the same chain ($\mathcal{S^\alpha}$ and $\mathcal{R^\alpha}$, $\mathcal{S^\beta}$ and $\mathcal{R^\beta}$) establish channels, send $Tr$, create $Sr$ to establish sub-channels, and upload the final state of each (sub-)channel when the channel needs to be closed. 
Unlike the intra-chain settlement, which updates the on-chain states immediately, for cross-chain, when $\xi$ determines that the final states are valid, it runs the HTLC protocol to make $\mathcal{S^\alpha}$ and $\mathcal{R^\beta}$ lock their final states based on the step HTLC.$\mathsf{Lock}$, then $\mathcal{S}$ and $\mathcal{R}$ complete the final settlement according to the step HTLC.$\mathsf{Update}$ or HTLC.$\mathsf{Refund}$.

%摘自Sec.D对于异步的讨论     --Yihao 1001
Particularly, in the step HTLC.$\mathsf{Refund}$ shown in Sec.~\ref{htlc}, if no one submits $\mathsf{pre}$ before $T_3$ or $T_4$ times out, the smart contract would not update the states. However, in an asynchronous network, affected by the high latency, after $\mathcal{S^\beta}$ provides $\mathsf{pre}$ to update the states in blockchain $\beta$, $\mathcal{R^\alpha}$ may fail to upload $\mathsf{pre}$ within $T_3$, which breaks the atomicity of HTLC. %and makes HTLC unsuitable for asynchronous networks. 
Therefore, in order to make HTLC suitable for asynchronous networks, we set a timer $T_5$ in $\xi^\alpha$, during which any miner can help $\mathcal{R^\alpha}$ provide $\mathsf{pre}$ to get rewards from $\mathcal{R^\alpha}$.

Note that in more complex scenarios such as the encrypted information exchange, in addition to realizing the settlement mentioned above, the hierarchical settlement protocol needs to further accomplish the fair exchange of keys, which are detailed in Sec.~\ref{cross-channel and applications}.
%[$\mathcal{S} \mapsto \mathsf{v}_\mathcal{S}-\mathsf{v}_{01\text{--}03}+\mathsf{v}_{04\text{--}06}-\mathsf{v}_{11\text{--}13}$, $\mathcal{R} \mapsto \mathsf{v}_\mathcal{R}+\mathsf{v}_{01\text{--}03}, \mathcal{D} \mapsto 0+\mathsf{v}_{11\text{--}13}-\mathsf{v}_{21\text{--}22}$, $\mathcal{Q} \mapsto 0+\mathsf{v}_{21\text{--}22}$]. 

\subsection{General Fair Exchange Protocol} \label{general-fair-exchange}
\begin{figure}[htb]
	\centering
	\includegraphics[width=\textwidth]{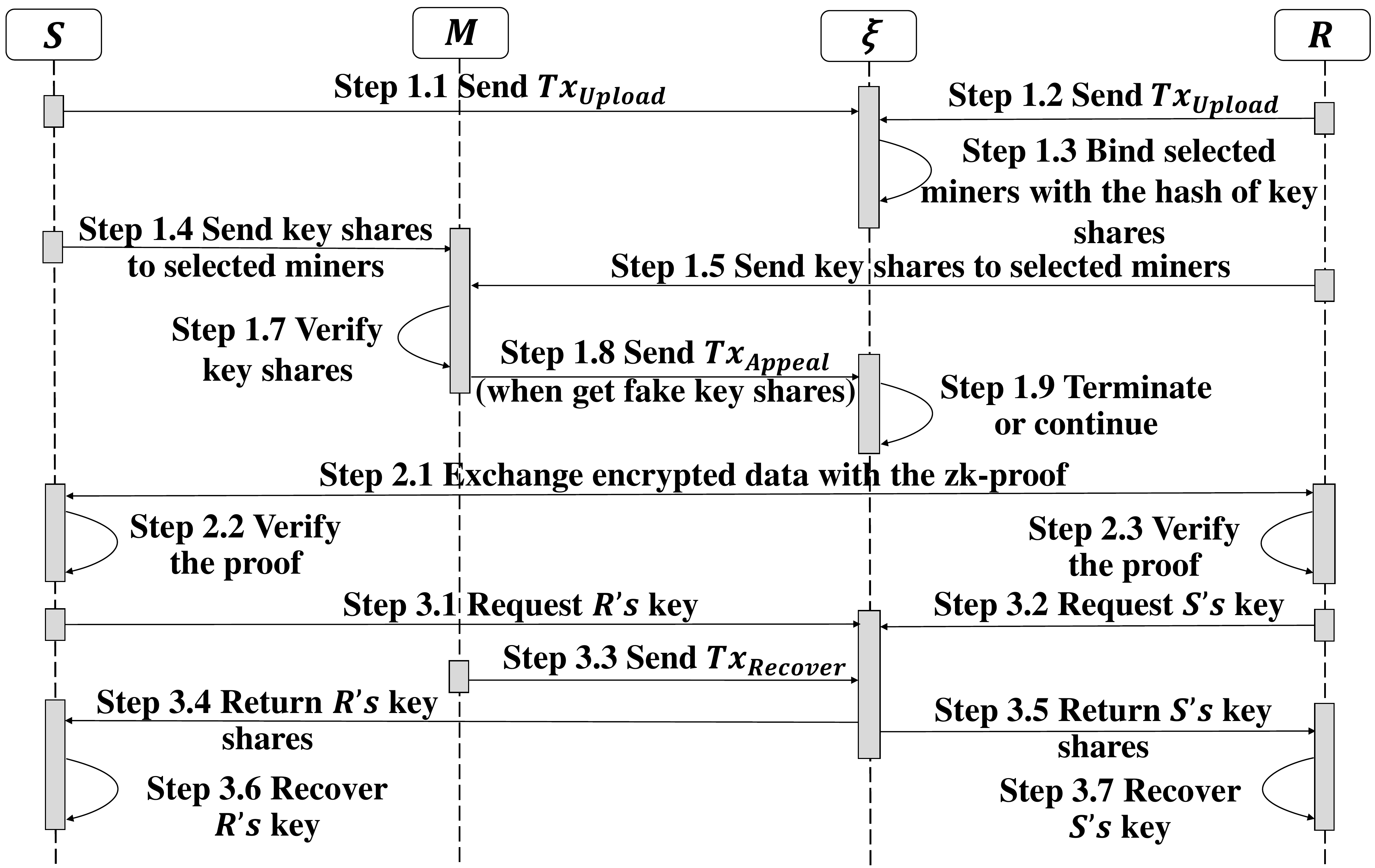}
	\caption{ The sequence diagram of the general fair exchange protocol. Steps 1.1--1.9 demonstrate the $\Theta.$Share process, while steps 2.1--2.3 illustrating the $\Theta.$Exchange process and steps 3.1--3.7 standing for the $\Theta.$Recover process.
	}
	\label{Fig:fair-exchange}
\end{figure}
%每开启一次通道需要换一次密钥
In this subsection, we propose a general fair exchange protocol $\Theta$ to solve the Unfair Exchange (UE) problem, which can guarantee the fairness of encrypted information exchange (EIE). 
%We label it with $\Theta$ and 
The whole protocol %is shown in Fig.\ref{Fig:fair-exchange} and 
involves four steps: {\bf Setup}, {\bf Share}, {\bf Exchange}, and {\bf Recover}, which are demonstrated in Fig.~\ref{Fig:fair-exchange}.

\noindent{\bf $\Theta.$Setup.} First, sender $\mathcal{S}$ builds a circuit $C_\Theta$ based on Fig.~\ref{Fig:fair-exchange-circuit2}.
Compared with the traditional fair exchange circuit (shown in Fig.~\ref{fair-exchange-circuit}), $C_\Theta$ appends $n$ key shares as private inputs and adds a key recover function (details shown in Sec.~\ref{vss}.$\mathsf{Recover}$) to recover the encryption key $k$. 
The reason for designing this circuit lies in that, in $\Theta$, we adopt the $(t,n)$-VSS protocol (shown in Sec.\ref{vss}) to divide the encryption key into $n$ key shares. However, the input of traditional circuit (shown in Fig.~\ref{fair-exchange-circuit}) is the key itself, which cannot prove the correctness of the key shares.
Thus we propose a new circuit $C_\Theta$, which can prove the correctness of not only the key but also the key shares without exposing any key-related information. 
%Without exposing the key-related information, the above circuit design for zk-SNARK can not only prove the correctness of the key but also the correctness of the key share, so as to better support the protocol $\Theta$ ().
Besides $C_\Theta$, $\mathcal{S}$ needs to generate a security parameter $1^{\lambda}$, and takes $C_\Theta$ and $1^{\lambda}$ as inputs of $\Pi.\mathsf{Setup}$ to construct the common reference string ($\mathsf{pk, vk}$).
%\begin{equation}\notag
%	\mathsf{crs} \leftarrow \Pi.\mathsf{Setup}(1^{\lambda}, C_\mathcal{S})
%\end{equation}
\begin{figure}[htb]
	\centering
	\includegraphics[width=\textwidth]{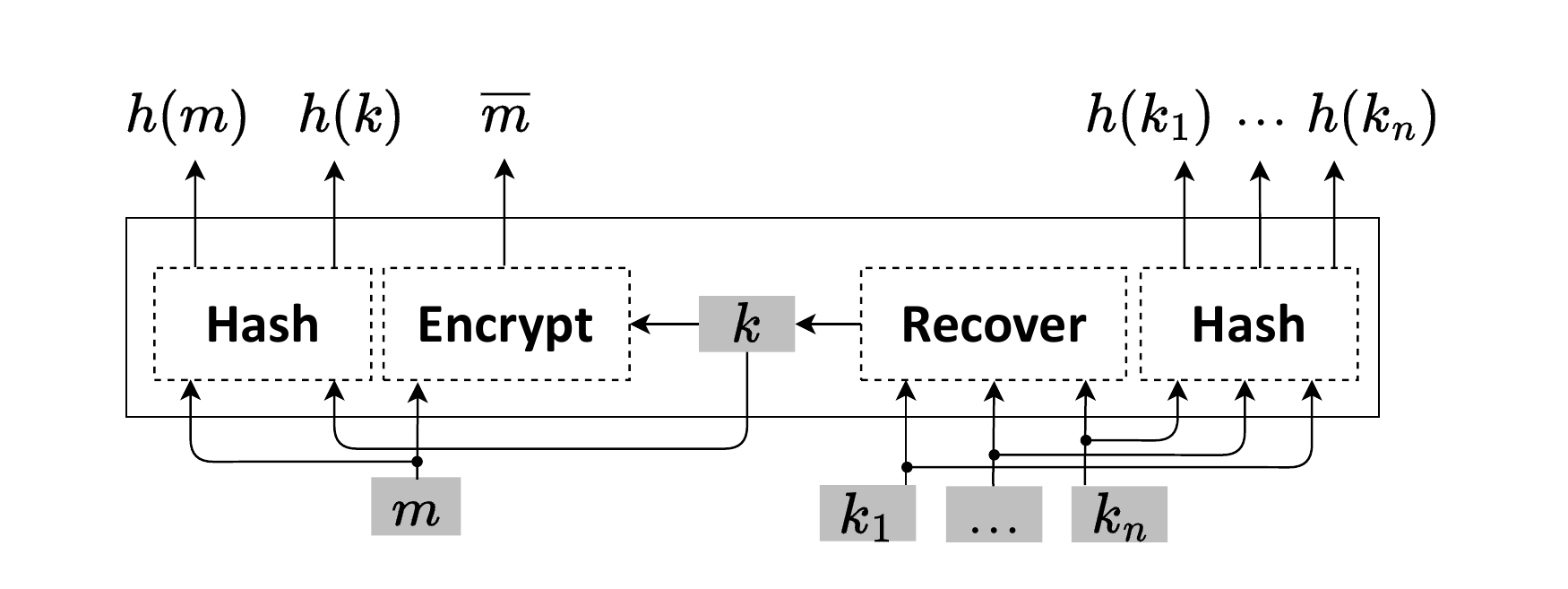}
	\caption{The logic diagram of the circuit used for the general fair exchange protocol $\Theta$. The parameters with gray background are private ones protected with zk-SNARK.}
	\label{Fig:fair-exchange-circuit2}
		\vspace{-0.2cm} 
\end{figure}

%TODO：问题较大，重写该部分协议    --Yihao 20220913
% 完成 --Yihao 0915
\noindent{\bf $\Theta.$Share.} This process is marked Step 1.1 to Step 1.9 in Fig.~\ref{Fig:fair-exchange}. First, $\mathcal{S}$ generates an encryption key $\mathsf{k_\mathcal{S}}$ and divides it into $\mathsf{n}$ shares based on $\mathsf{(t,n)\text{-}VSS.Share}$.
% and distributes them with $\mathcal{S}$'s signature to $\mathsf{n}$ miners who are randomly selected by smart contracts.
%
Then, $\mathcal{S}$ hashes $\mathsf{k_\mathcal{S}}$ and each individual key share $\mathsf{k_{\mathcal{S}i}(i \in [1,n])}$, packages them as well as the number of key shares $n$ and the key threshold $t$, into transaction $\mathsf{Tx}_{\text{Upload}}$. 
Next, $\mathcal{S}$ sends $\mathsf{Tx}_{\text{Upload}}$ to the smart contract $\xi$. 
Following the same procedure the receiver $\mathcal{R}$ packages $\mathsf{h(\mathsf{k_\mathcal{R}}})$, $n$, $t$, and  $\mathsf{h({k_{\mathcal{R}{[1:n]}}})}$ into its $\mathsf{Tx}_{\text{Upload}}$ and sends it to $\xi$. 
\begin{equation}\notag
	\mathsf{Tx}_{\text{Upload}} \overset{\text{def}}{=} (\mathsf{From: \mathcal{S / R}; To: \xi; \; h(\mathsf{k_\mathcal{S / R}}), n ,t, \mathsf{h(k_{\mathcal{S / R}[1:n]}})}).
\end{equation}

When $\xi$ receives $\mathsf{Tx}_{\text{Upload}}$ from both parties, it would randomly select $\mathsf{n}$ miner addresses with a serial number $\mathsf{sn}$, bind each miner address with the hashes of two unique key shares (one from $\mathcal{S}$ and one from $\mathcal{R}$), and open these bindings. %and its recipient address. 
Then, $\mathcal{S}$ and $\mathcal{R}$ sign their key shares with $\mathsf{sn}$, and distribute them to the selected miners based on the bindings.
After receiving the key shares, each miner verifies their legitimacy based on $\mathsf{(t,n)\text{-}VSS.Verify}$. Besides that, each miner recomputes the hashes of the received key shares and compares them with those in $\xi$ to detect possible errors.
%the received key share $\mathsf{h'({k_{i}}) (i \in [1,n])}$ and compares $\mathsf{h'(k_{i})}$ with $\mathsf{h(k_{i})}$ in $\xi$. %$\mathsf{h'_i\overset{\text{def}}{=}hash(h_i)(i \in [1,n])}$.
If a miner detects a fake key share, it would report it to $\xi$ within $T_1$. Specifically, the miner packages the detected fake key share as well as the signature of the key share message with $\mathsf{sn}$ in $\mathsf{Tx}_{\text{Appeal}}$, then sends it to $\xi$.
%
%$\mathsf{h'(k_{i})}$, the received fake key share $\mathsf{k'_{i}}$ with its owner's signature $\mathsf{Sig_{snd}}$ (i.e. the signature of $\mathcal{S}$ or $\mathcal{R}$ for $\mathsf{k'_{i}}$) in $\mathsf{Tx}_{\text{Appeal}}$, and sends it to $\xi$.
%
When $\xi$ gets $\mathsf{Tx}_{\text{Appeal}}$, it verifies the signature of the key share owner with $\mathsf{sn}$, and recalculates the hash of the reported fake key share to check if the data in $\mathsf{Tx}_{\text{Appeal}}$ is legitimate. 
If the verification succeeds, $\xi$ terminates the fair exchange protocol. Note that, $\mathsf{sn}$ can effectively prevent malicious behaviors of adversaries from destroying the execution of the protocol by providing the previous key share in $\mathsf{Tx}_{\text{Appeal}}$.
%$\mathsf{h'(k_{i})}$ based on $\mathsf{k'_{i}}$ and verifies $\mathsf{Sig_{snd}}$ to check if the data in $\mathsf{Tx}_{\text{Appeal}}$ is legitimate. If the verification succeeds, $\xi$ would terminate the fair exchange protocol.
%通过验证sig以及(t,n)VSS.verify来证明智能合约上的确实有错误。
\begin{equation}\notag
	\mathsf{Tx}_{\text{Appeal}} \overset{\text{def}}{=} (\mathsf{From: \mathcal{M}_i; To: \xi; Sig_{snd}, k'_{i}, sn), i \in [1,n]}.
\end{equation}

%$\mathsf{SN_{en}}$ is the encrypted random number based on the public key $\widetilde{\mathsf{pk}}_{V'}$ and the random number $\mathsf{N_{en}}$, and we can use the algorithm based on the Elliptic Curve Cryptography (ECC)~\cite{liu2008tinyecc} to implement that.
\noindent{\bf $\Theta.$Exchange.}
In this step, 
$\mathcal{S}$ first computes the ciphertext $\overline{\mathsf{m}}_\mathcal{S}$ of the exchange object $\mathsf{m_\mathcal{S}}$ based on $\mathsf{k_\mathcal{S}}$. This can be done by some common encryption technologies, e.g., Elliptic Curve Cryptography (ECC)~\cite{liu2008tinyecc} and MIMC~\cite{albrecht2016mimc}.
Then, $\mathcal{S}$ takes $\mathsf{pk}$, $\mathsf{k_\mathcal{S}}$ and $\mathsf{m_\mathcal{S}}$ to generate $\pi_\mathcal{S}$ based on the algorithm $\Pi.\mathsf{Prove}$, and sends ($\pi_\mathcal{S}$, $\overline{\mathsf{m}}_\mathcal{S}$, $\mathsf{h({m_\mathcal{S}})}$, $\mathsf{h({k_\mathcal{S}})}$) to $\mathcal{R}$.
Similarly, $\mathcal{R}$ sends ($\pi_\mathcal{R}$, $\mathsf{\overline{m}_\mathcal{R}}$, $\mathsf{h({m_\mathcal{R}})}$, $\mathsf{h({k_\mathcal{R}})}$) to $\mathcal{S}$.
After that, $\mathcal{S}$ and $\mathcal{R}$ use $\mathsf{vk}$ and the received data  
%($\pi$, $\mathsf{\overline{m}}$, $\mathsf{h({m})}$, $\mathsf{h({k})}$) 
to respectively verify ($\pi_\mathcal{S}$, $\pi_\mathcal{R}$) based on the algorithm $\Pi.\mathsf{Verify}$.
The above process is marked Step 2.1 and Step 2.3 in Fig.~\ref{Fig:fair-exchange}.
%\begin{equation}\notag
%\{1,0\} = \Pi.\mathsf{Verify}(\mathsf{vk, \pi, \overline{m}, \sigma_{m}, \mu_{k}}).
%\end{equation}

\noindent{\bf $\Theta.$Recover.} If both parties verify successfully, $\mathcal{S}$ and $\mathcal{R}$ send requests to the smart contract $\xi$ for key recovery (marked Step 3.1 to Step 3.7 in Fig.~\ref{Fig:fair-exchange}). 
When $\xi$ gets the requests from $\mathcal{S}$ and $\mathcal{R}$, it broadcasts this event to the selected miners, who then send their stored key shares to $\xi$ via $\mathsf{Tx}_{\text{Recover}}$.
\begin{equation}\notag
	\mathsf{Tx}_{\text{Recover}} \overset{\text{def}}{=} (\mathsf{From: \mathcal{M}_i; To: \xi; k_{\mathcal{S}i}, k_{\mathcal{R}i}), i\in[1,n]}.
\end{equation}
$\xi$ verifies the legitimacy of each key share by comparing its hash result and checking the address of its sender. When the number of valid key shares is greater than the key threshold $t$, $\xi$ sends the collected key shares to the requester, who then employs $\mathsf{(t,n)\text{-}VSS.Recover}$ to recover the key, and further decrypts the message.
%
%For example, $\mathcal{R}$ aims to get $\mathcal{S'}$ key $k_\mathcal{S}$. $\xi$ collects correct key shares and sends them to $\mathcal{R}$. When $\mathcal{R}$ gets the key share set, it executes the step in $\mathsf{(t,n)\text{-}VSS.Recover}$ to recover $k_\mathcal{S}$, and further uses $k_\mathcal{S}$ to get  ${m_\mathcal{S}}$ by decrypting $\overline{m_\mathcal{S}}$.
%Since miners follow a secure consensus algorithm to maintain the blockchain

%符号规定：从属的链用上标，例如A链的信息m表示为$m^A$;从属的用户用下标,例如用户S的信息m表示为$S_m$.       --Yihao 0913

\subsection{Cross-Channel and Applications} \label{cross-channel and applications}
Based on the hierarchical channel and the general fair exchange protocol presented in the previous two subsections, we present $\mathsf{Cross}$-$\mathsf{Channel}$ to support various cross-chain services, e.g., currency exchange (CE), fair exchange (FE, i.e., exchange currency with encrypted information), and encrypted information exchange (EIE)), in this subsection. We adopt the same notations as before: ($\mathcal{S^\alpha}$, $\mathcal{R^\alpha}$, $\mathcal{M^\alpha}$, $\xi^\alpha$) and ($\mathcal{S^\beta}$, $\mathcal{R^\beta}$, $\mathcal{M^\beta}$, $\xi^\beta$).
%Let $\alpha$ and $\beta$ be two blockchain systems, in which the accounts of $\mathcal{S}$ and $\mathcal{R}$, the miners, and the smart contracts, are denoted by ($\mathcal{S^\alpha}$, $\mathcal{R^\alpha}$, $\mathcal{M^\alpha}$, $\xi^\alpha$) and ($\mathcal{S^\beta}$, $\mathcal{R^\beta}$, $\mathcal{M^\beta}$, $\xi^\beta$), respectively.
Assume that $\mathcal{S}$ and $\mathcal{R}$ negotiate to exchange information ($\mathsf{m_{\mathcal{S}_i}, i \in \mathbb{Z^+}}$) in $\mathcal{S^\alpha}$ on blockchain $\alpha$ with information ($\mathsf{m_{\mathcal{R}_i}, i \in \mathbb{Z^+}}$) in $\mathcal{R^\beta}$ on blockchain $\beta$. 
%
%The accounts and smart contracts of $\alpha$ and $\beta$ are denoted as ($\mathcal{S^\alpha}$, $\mathcal{R^\alpha}$, $\mathcal{M^\alpha}$, $\xi^\alpha$) and ($\mathcal{S^\beta}$, $\mathcal{R^\beta}$, $\mathcal{M^\beta}$, $\xi^\beta$) respectively.
%
The whole scheme can be divided into four phases: {\bf Initialize}, {\bf Open}, {\bf Exchange}, and {\bf Close}, which are detailed in the following according to different application scenarios. Note that the first three phases are performed at each single chain while the last phase realizes the cross-chain operations via the cross-chain settlement protocol presented in Sec.~\ref{hierarchical-channel-scheme}. For better elaboration, we employ [ALL $\pmb{\Rightarrow}$] or [$\{\cdot\} \pmb{\Rightarrow}$] to denote that all or some of the three scenarios (CE, FE, EIE) need to execute the process that follows.
%Next, we introduce the specific process of our scheme in detail from different application scenarios.
%\subsubsection{Currency Exchange}

\noindent{\bf Initialize.} [ALL $\pmb{\Rightarrow}$] In $\alpha$ and $\beta$, each account is initialized with a unique address and a key pair ($\widetilde{\mathsf{pk}}$, $\widetilde{\mathsf{sk}}$). 
[(FE, EIE) $\pmb{\Rightarrow}$] Each digital commodity owner generates a common reference string $\mathsf{(pk, vk)}$ based on 
$\mathsf{\Theta.Setup}$ (introduced in Sec.~\ref{general-fair-exchange}).

\noindent{\bf Open.} [ALL $\pmb{\Rightarrow}$] According to the hierarchical interaction protocol $\Psi$, ($\mathcal{S^\alpha}$, $\mathcal{R^\alpha}$) and ($\mathcal{S^\beta}$, $\mathcal{R^\beta}$) respectively send $\mathsf{Tx_{open}}$ messages to call smart contracts $\xi^\alpha$ and $\xi^\beta$ to build channels $\Omega^\alpha_{0}$ and $\Omega^\beta_{0}$, and deposit their initial states, e.g., coins, into the channels. Note that the $\mathsf{Level}$ 0 of $\mathsf{Cross}$-$\mathsf{Channel}$ includes $\Omega^\alpha_0$ and $\Omega^\beta_0$, and the initial state is recorded as $[\mathcal{S}^\alpha \mapsto \mathsf{v}^\alpha_\mathcal{S}$, $\mathcal{R}^\alpha \mapsto \mathsf{v}^\alpha_\mathcal{R}$, $\mathcal{S}^\beta \mapsto \mathsf{v}^\beta_\mathcal{S}$, $\mathcal{R}^\beta \mapsto \mathsf{v}^\beta_\mathcal{R}]_{\Omega_0}$.
[EIE $\pmb{\Rightarrow}$] $\mathcal{S^\alpha}$ and $\mathcal{R^\beta}$ execute $\mathsf{\Theta.Share}$ to distribute their key shares.% and send $\mathsf{Tx}_{\text{Upload}}$. 

\noindent{\bf Exchange.} [CE $\pmb{\Rightarrow}$] The channel $\Omega_0$ allows two parties to instantaneously send payments between each other. 
[(FE,EIE) $\pmb{\Rightarrow}$] The sender implements $\mathsf{\Theta.Exchange}$ to encrypt the exchanged information, generates the zero-knowledge proof, and verifies the proof sent by the receiver.
For example, $\mathcal{S^\alpha}$ can generate multiple encrypted information $\mathsf{\overline{m}_{\mathcal{S}_i}}$ based on $\mathsf{m_{\mathcal{S}_i}}$ and $\mathsf{k_\mathcal{S}, i \in \mathbb{Z^+}}$, and send them in $\Omega^\alpha_0$ to $\mathcal{R^\alpha}$. 
Then, $\mathcal{S^\alpha}$ generates zero-knowledge proofs $\pi_{\mathcal{S}_i}$ to prove the authenticity of $\mathsf{k_\mathcal{S}}$ and $\mathsf{\overline{m}_{\mathcal{S}_i}}$ without exposing $\mathsf{k_\mathcal{S}}$ and $\mathsf{m_{\mathcal{S}_i}}$. 
Next, it sends $\pi_{\mathcal{S}_i}$ and the public parameters  shown in Fig.~\ref{Fig:fair-exchange-circuit2} to $\mathcal{R^\alpha}$.
$\mathcal{R^\alpha}$ can use $\mathsf{vk}$ and the received public parameters to verify $\pi_{\mathcal{S}_i}$.
Furthermore, with the consent of both parties in $\Omega_0$, one party can generate a sub-channel receipt $Sr$ to open a sub-channel and spend the unsettled amount based on the hierarchical interaction protocol $\Psi$ (details shown in Sec.~\ref{hierarchical-channel-scheme}). 

\noindent{\bf Close.} [ALL $\pmb{\Rightarrow}$] Based on the hierarchical settlement protocol $\Phi$,
%and the Algorithm~\ref{A:HCS}
all participants in the hierarchical channel are required to upload their final states based on $\mathsf{Tx}_{\text{Upload}}$ within $T_2$. 
[(CE) $\pmb{\Rightarrow}$] $\mathcal{S^\alpha}$ generates a preimage $\mathsf{pre}$ (a random 256-bits integer) and packages its hash result $\mathsf{h(pre)}$ in $\mathsf{Tx}_{\text{Lock}}$ to $\xi^\alpha$. %
\begin{equation}\notag
	\mathsf{Tx}_{\text{Lock}} \overset{\text{def}}{=} (\mathsf{From: \mathcal{S / R}; To: \xi; h(pre)}).
\end{equation}
$\xi^\alpha$ opens $\mathsf{h(pre)}$ and uses $\mathsf{h(pre)}$ to lock the state of $\mathcal{S^\alpha}$ in the channel $\Omega^\alpha_0$. When $\mathcal{R^\alpha}$ learns $\mathsf{h(pre)}$ in the blockchain $\alpha$, $\mathcal{R^\beta}$ sends $\mathsf{Tx}_{\text{Lock}}$ to lock the state of $\Omega^\beta_0$.  
We set timers $T_3$ and $T_4$ in blockchain $\alpha$ and $\beta$, respectively.
%We set two time thresholds $T_4$ and $T_5$ in blockchain $\alpha$.
One needs to provide $\mathsf{pre}$ within $T_3$ and $T_4$ to update the state of $\alpha$ and $\beta$, respectively. 
%By the same way,
%when one provides $\mathsf{pre}$ within $T_5$, $\xi^\beta$ updates the state of $\beta$. 
For example, $\mathcal{S^\beta}$ provides $\mathsf{pre}$ based on $\mathsf{Tx}_{\text{Update}}$ to update the states in blockchain $\beta$. Once $\mathsf{pre}$ is successfully verified by $\mathcal{\xi^\beta}$, $\mathcal{R^\beta}$ can learn $\mathsf{pre}$. Then, $\mathcal{R^\alpha}$ packages $\mathsf{pre}$ in $\mathsf{Tx}_{\text{Update}}$ to update the states of $\alpha$. 
Note that, according to the HTLC protocol, $T_4$ should be less than $T_3$, which effectively guarantees the atomicity of the interaction process (the related discussion is shown in Sec.~\ref{discussions}).
Moreover, considering the high latency of asynchronous networks, we set a time threshold $T_5$. When miners observe that the $T_3$ times out and no one upload $\mathsf{pre}$ in $\alpha$, they can offer $\mathsf{pre}$ within $T_5$ to get rewards. 
[(FE) $\pmb{\Rightarrow}$] Compared with the process in CE, $\mathcal{S^\alpha}$ needs to use $\mathsf{k_\mathcal{S}}$ as $\mathsf{pre}$ rather than regenerate a random 256-bit integer. 
%in which any miner can help $\mathcal{R^\alpha}$ provide $\mathsf{pre}$ to get rewards from $\mathcal{R^\alpha}$.

%将对异步的讨论转移至Sec.B hier. settlement协议中讨论       --Yihao 1001
% In the HTLC protocol, if no one submits $\mathsf{pre}$ before $T_3$ or $T_4$ times out, the smart contract would not update the interaction process within the channel to the blockchain. However, in an asynchronous network, affected by the high latency, after $\mathcal{S^\beta}$ provides $\mathsf{pre}$ to update the states in blockchain $\beta$, $\mathcal{R^\alpha}$ may fail to upload $\mathsf{pre}$ within $T_3$, which breaks the atomicity of HTLC. %and makes HTLC unsuitable for asynchronous networks. 
% Therefore, in order to make HTLC suitable for asynchronous networks, we set a timer $T_5$ in $\xi^\alpha$, in which any miner can help $\mathcal{R^\alpha}$ provide $\mathsf{pre}$ to get rewards from $\mathcal{R^\alpha}$.
%
\begin{equation}\notag
	\mathsf{Tx}_{\text{Update}} \overset{\text{def}}{=} (\mathsf{From: \mathcal{S / R / M}; To: \xi; pre}).
\end{equation}
%Like $T_3$, we set up a period time $T_5$ in which any entity can help $\mathcal{R_\alpha}$ provide $\mathsf{pre}$ to get rewards. 
% 
%[(FE) $\pmb{\Rightarrow}$] Unlike CE, $\mathcal{S_\alpha}$ and $\mathcal{R_\beta}$ can use $\mathsf{\mu_k}$ (shown in Fig.~\ref{Fig:fair-exchange-circuit}) as $\mathsf{h_{pre}}$ to lock the final states in $\xi$. 
%Then, $\mathcal{S_\beta}$ provides $\mathsf{k}$ to update the states in blockchain $\beta$. 
%
%After that, $\mathcal{R_\alpha}$ gets $\mathsf{k}$ and uses it to unlock the states in blockchain $\alpha$. 
%当有人提供时，可以实现状态更新。
[EIE $\pmb{\Rightarrow}$] In addition to $\mathsf{pre}$, $\mathcal{S^\alpha}$ and $\mathcal{R^\beta}$ also need to package the hash result of the key that needs to be recovered in $\mathsf{Tx}_{\text{Update-EIE}}$.
%to update the on-chain state and collect key shares. 
%
According to $\mathsf{\Theta.Recover}$, $\mathcal{S^\alpha}$ and $\mathcal{R^\beta}$ would get at least $t$ key shares from the smart contract, and they can execute $\mathsf{\Theta.Recover}$ to get each other's keys fairly ($\mathcal{S^\alpha}$ $\leftarrow$ $k_\mathcal{R}$, $\mathcal{R^\alpha}$ $\leftarrow$ $k_\mathcal{S}$).
%CE：pre, FE：$\mu_k$, EIE: $\mu_k$ + pre
\begin{equation}\notag
	\mathsf{Tx}_{\text{Update-EIE}} \overset{\text{def}}{=} (\mathsf{From: \mathcal{S / R / M}; To: \xi; pre, h(k)}).
\end{equation}
          
%TODO: 协议确定后，更改相应的协议图   --Yihao 2022 0915
%完成  --Yihao 2022 0924
For convenience, we summarize the logic of the smart contract $\xi^\alpha$ in Fig.~\ref{fig:contract}. Note that the smart contract $\xi^\beta$ is the same as $\xi^\alpha$ except that the timer $T_3$ is replaced with $T_4$. The entire protocol is outlined in Fig.~\ref{fig:Cross-Protocol}. 

\begin{figure}[htb]
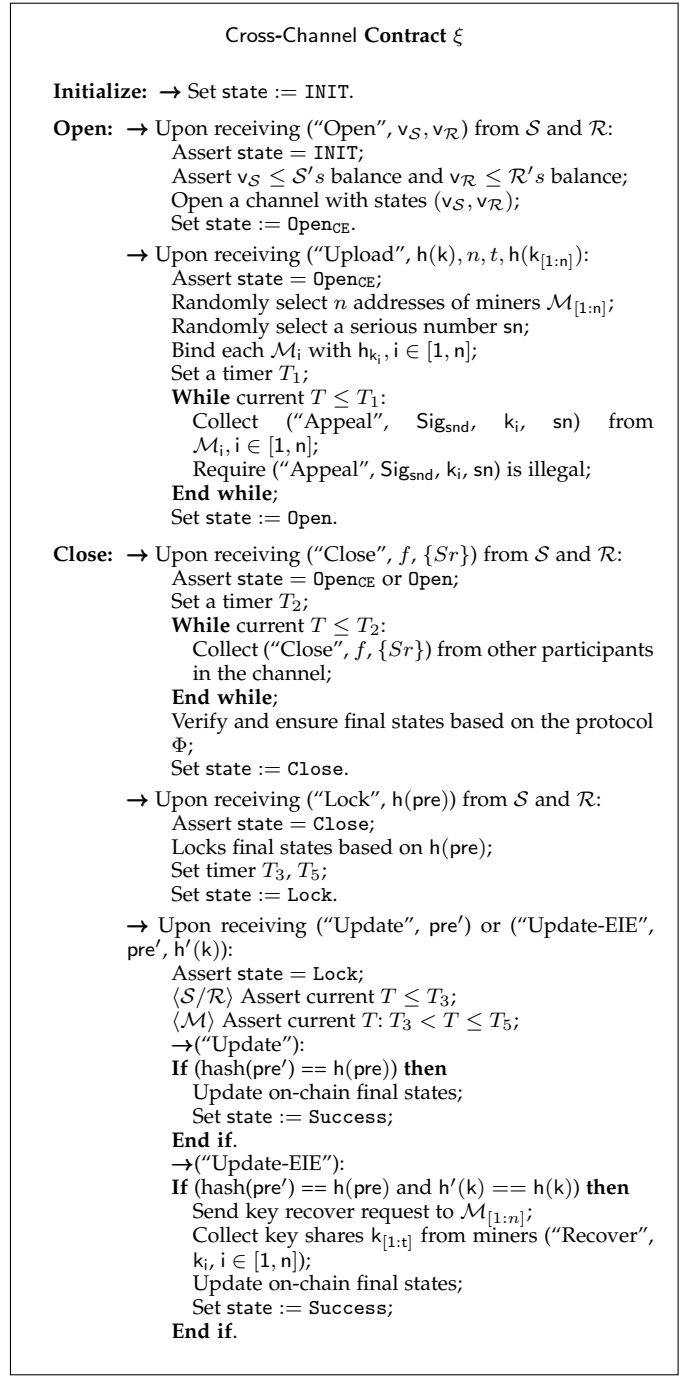
 \footnotesize %\scriptsize
	\caption{The $\mathsf{Cross}$-$\mathsf{Channel}$ Smart Contract $\xi^\alpha$. Suppose $\mathcal{S}$ and $\mathcal{R}$ are participants in the Level 0. Variables with brace represent a set (e.g., $Sr$ represents a sub-channel receipt, and $\{Sr\}$ is a set of multiple sub-channel receipts). 
	}\label{fig:contract}
	\begin{framed}
	%    \begin{multicols}{2}
		\centering \textbf{$\mathsf{Cross}$-$\mathsf{Channel}$ Contract $\xi$}\vspace{10pt}\\
			\begin{basedescript}{\desclabelwidth{30pt}}
				\item[\hspace{7pt}Initialize:]$\pmb{\rightarrow}$ Set $\mathsf{state} := \mathtt{INIT}$. \\
				\vspace{5pt}
				
				\item[\hspace{7pt}Open:]$\pmb{\rightarrow}$ Upon receiving (``Open", $\mathsf{v}_\mathcal{S}, \mathsf{v}_\mathcal{R}$) from $\mathcal{S}$ and $ \mathcal{R}$:\\
				{
					\setlength{\leftskip}{17pt}Assert $\mathsf{state} = \mathtt{INIT}$;\\
					
					\setlength{\leftskip}{17pt}Assert $\mathsf{v}_\mathcal{S} \leq \mathcal{S}'s$ balance and $\mathsf{v}_\mathcal{R} \leq \mathcal{R}'s$ balance;\\
					
					\setlength{\leftskip}{17pt}Open a channel with states $(\mathsf{v}_\mathcal{S}, \mathsf{v}_\mathcal{R})$; \\
					
					\setlength{\leftskip}{17pt}Set $\mathsf{state} := \mathtt{Open_{CE}}$. \\
					
					\vspace{2.5pt}
					
					\setlength{\leftskip}{0pt}$\pmb{\rightarrow}$ Upon receiving (``Upload", $\mathsf{h(k}), n, t, \mathsf{h(k_{[1:n]}})$: \\

					\setlength{\leftskip}{17pt}Assert $\mathsf{state} = \mathtt{Open_{CE}}$;\\
					
					\setlength{\leftskip}{17pt}Randomly select $n$ addresses of miners $\mathsf{\mathcal{M}_{[1:n]}}$;\\
					
					\setlength{\leftskip}{17pt}Randomly select a serious number $\mathsf{sn}$;\\
					
					\setlength{\leftskip}{17pt}Bind each $\mathsf{\mathcal{M}_{i}}$ with $\mathsf{h_{k_{i}}, i\in[1,n]}$; \\
					
					\setlength{\leftskip}{17pt}Set a timer $T_1$; \\
					
					\setlength{\leftskip}{17pt}{\bf While} current $T \le T_1$: \\
					
					\setlength{\leftskip}{25pt}Collect (``Appeal", $\mathsf{Sig_{snd}}$, $\mathsf{{k_{i}}}$, $\mathsf{sn}$) from $\mathsf{\mathcal{M}_{i}, i \in [1,n]}$; \\
					
					\setlength{\leftskip}{25pt}Require (``Appeal", $\mathsf{Sig_{snd}}$, $\mathsf{{k_{i}}}$, $\mathsf{sn}$) is illegal; \\ % (违规返回1)

					\setlength{\leftskip}{17pt}{\bf End while};\\

					\setlength{\leftskip}{17pt}Set $\mathsf{state} := \mathtt{Open}$.\\
				}
				
				\vspace{5pt}
				
				\item[\hspace{7pt}Close:]$\pmb{\rightarrow}$ Upon receiving (``Close", $f$, \{$Sr$\}) from $\mathcal{S}$ and $\mathcal{R}$:\\
				{
					\setlength{\leftskip}{17pt}Assert $\mathsf{state} = \mathtt{Open_{CE}}$ or $\mathtt{Open}$; \\
					
					\setlength{\leftskip}{17pt}Set a timer $T_2$; \\
					
					\setlength{\leftskip}{17pt}{\bf While} current $T \le T_2$: \\
					
					\setlength{\leftskip}{25pt}Collect (``Close", $f$, \{$Sr$\}) from other participants in the channel; \\
					
					\setlength{\leftskip}{17pt}{\bf End while};\\

					\setlength{\leftskip}{17pt}Verify and ensure final states based on the protocol $\Phi$; \\
					
					\setlength{\leftskip}{17pt}Set $\mathsf{state} := \mathtt{Close}$.\\
					
					\vspace{2.5pt}
					
					\setlength{\leftskip}{0pt}$\pmb{\rightarrow}$ Upon receiving  (``Lock", $\mathsf{h(pre)}$) 
					from $\mathcal{S}$ and $\mathcal{R}$:\\
					
					\setlength{\leftskip}{17pt}Assert $\mathsf{state} = \mathtt{Close}$; \\
					
					\setlength{\leftskip}{17pt}Locks final states based on $\mathsf{h(pre)}$; \\
					
					\setlength{\leftskip}{17pt}Set timer $T_{3}$, $T_5$; \\
					
					\setlength{\leftskip}{17pt}Set $\mathsf{state} := \mathtt{Lock}$. \\
					
					\vspace{2.5pt}		
					
					\setlength{\leftskip}{0pt}$\pmb{\rightarrow}$ Upon receiving (``Update", $\mathsf{pre'}$) or (``Update-EIE", $\mathsf{pre'}$, $\mathsf{h'(k)}$):\\
					
					\setlength{\leftskip}{17pt}Assert $\mathsf{state} = \mathtt{Lock}$; \\
					
					\setlength{\leftskip}{17pt}$\left \langle \mathcal{S}/ \mathcal{R} \right \rangle$ Assert current $T \le T_{3}$; \\
					
					%	$\left \langle a,b \right \rangle$
					
					\setlength{\leftskip}{17pt}$\left \langle \mathcal{M} \right \rangle$ Assert current $T$: $T_{3} < T \le T_5$; \\
					
					\setlength{\leftskip}{17pt}$\pmb{\rightarrow}$(``Update"): \\
					
					\setlength{\leftskip}{17pt}{\bf If} (hash($\mathsf{pre'}$) == $\mathsf{h({pre})}$)  {\bf then} \\
					
					\setlength{\leftskip}{25pt}Update on-chain final states;\\
					
					\setlength{\leftskip}{25pt}Set $\mathsf{state} := \mathtt{Success}$; \\
					
					\setlength{\leftskip}{17pt}{\bf End if}. \\
					
					\setlength{\leftskip}{17pt}$\pmb{\rightarrow}$(``Update-EIE"): \\
					
					\setlength{\leftskip}{17pt}{\bf If} (hash($\mathsf{pre'}$) == $\mathsf{h({pre})}$ and $\mathsf{h'(k}) == \mathsf{h(k)}$)  {\bf then} \\
					
					\setlength{\leftskip}{25pt}Send key recover request to $\mathcal{M}_{[1:n]}$;\\
					
					\setlength{\leftskip}{25pt}Collect key shares $\mathsf{k_{[1:t]}}$ from miners (``Recover", $\mathsf{k_{i}}$, $\mathsf{i\in[1,n]}$); \\
					%因为智能合约本身时公开的，所以只需要收集，用户自己需要时恢复即可。
					
					\setlength{\leftskip}{25pt}Update on-chain final states;\\
					
					\setlength{\leftskip}{25pt}Set $\mathsf{state} := \mathtt{Success}$; \\
					
					\setlength{\leftskip}{17pt}{\bf End if}. \\
					
					\vspace{2.5pt}
				}
			\end{basedescript}
	%		\end{multicols}\vspace{-15pt}
	\end{framed}\vspace{-10pt}
\end{figure}
\begin{figure*}[htb]
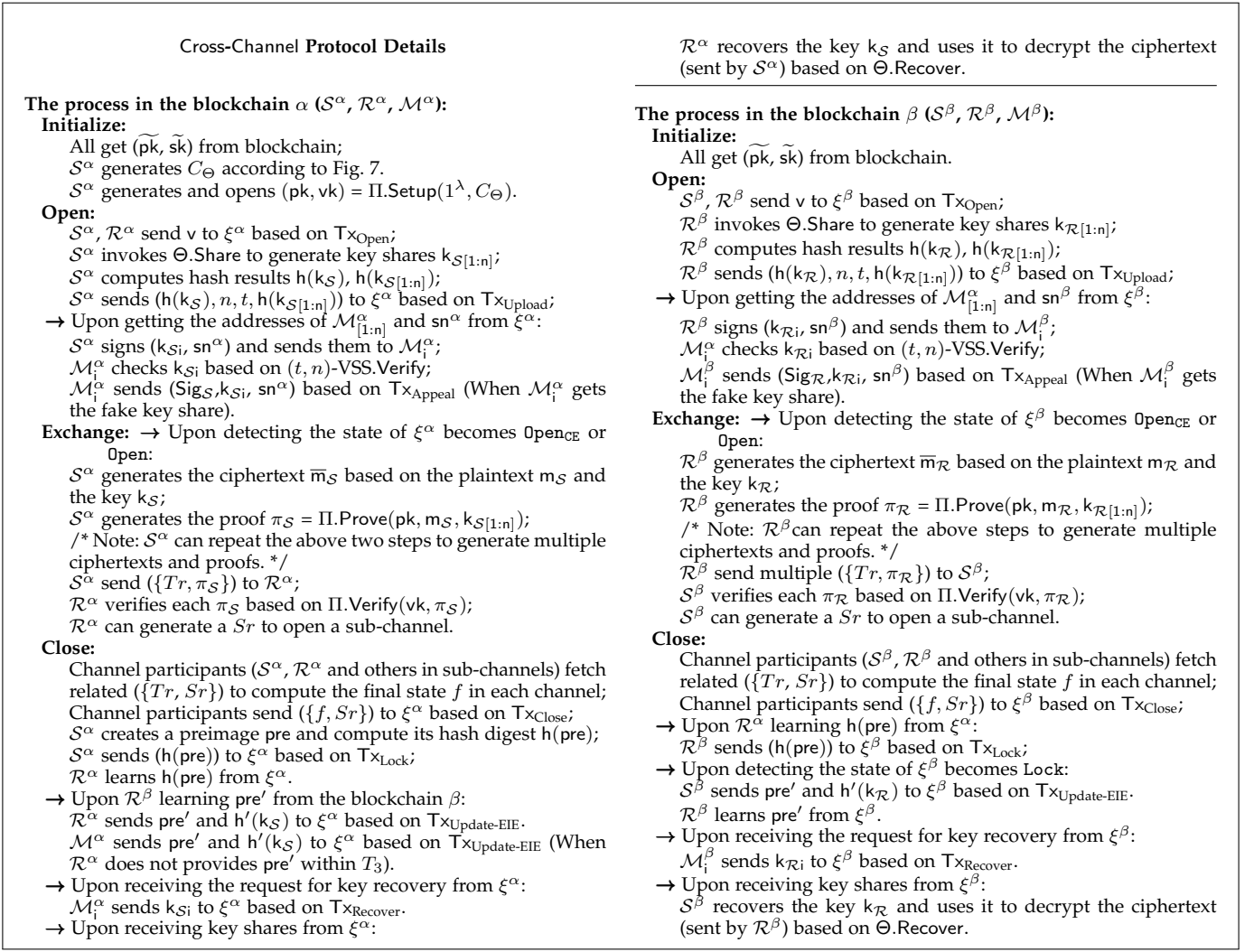
 \footnotesize%\scriptsize
	\caption{An example of the encrypted information exchange (EIE) based on $\mathsf{Cross}$-$\mathsf{Channel}$. Variables with brace represent a set (e.g., $Tr$ represents a transaction receipt, and $\{Tr\}$ is a set of multiple transaction receipts).} \label{fig:Cross-Protocol}
	\begin{framed}
		\begin{multicols}{2}
			\centering \textbf{$\mathsf{Cross}$-$\mathsf{Channel}$ Protocol Details} \vspace{5pt}
			\begin{basedescript}{\desclabelwidth{30pt}}
				%%%%%%%%%%%%%%%%%%%%%%%%%%%%%%%%%%%%%%%%%
				\item[{\bf The process in the blockchain $\alpha$ ($\mathcal{S^\alpha}$, $\mathcal{R^\alpha}$, $\mathcal{M^\alpha}$):}] 
				\item[\hspace{7pt}Initialize:]~\\
				{
					\setlength{\leftskip}{-16pt}All get ($\widetilde{\mathsf{pk}}$, $\widetilde{\mathsf{sk}}$) from blockchain; \\ 
					
					\setlength{\leftskip}{-16pt}$\mathcal{S^\alpha}$ generates $C_\Theta$ according to Fig.~\ref{Fig:fair-exchange-circuit2}. \\
					
					\setlength{\leftskip}{-16pt}$\mathcal{S^\alpha}$ generates and opens $\mathsf{(pk,vk)}$ = $\Pi$.${\mathsf{Setup}}(1^{\lambda}, C_\Theta)$.\\
				}
				\item[\hspace{7pt}Open:] ~\\ 
				{
					\setlength{\leftskip}{-16pt}$\mathcal{S^\alpha}$, $\mathcal{R^\alpha}$ send $\mathsf{v}$ to $\xi^\alpha$ based on $\mathsf{Tx}_{\text{Open}}$;\\
					
					\setlength{\leftskip}{-16pt}$\mathcal{S^\alpha}$ invokes $\mathsf{\Theta.Share}$ to generate key shares $\mathsf{k_{\mathcal{S}[1:n]}}$;\\
					
					\setlength{\leftskip}{-16pt}$\mathcal{S^\alpha}$ computes hash results $\mathsf{h(k_\mathcal{S})}$, $\mathsf{h(k_{\mathcal{S}[1:n]})}$;\\
					
					\setlength{\leftskip}{-16pt}$\mathcal{S^\alpha}$ sends ($\mathsf{h(k_\mathcal{S})}, n, t, \mathsf{h(k_{\mathcal{S}[1:n]})}$) to $\xi^\alpha$ based on $\mathsf{Tx}_{\text{Upload}}$;\\
					
					\setlength{\leftskip}{-26pt}$\pmb{\rightarrow}$ Upon getting the addresses of $\mathsf{\mathcal{M}^\alpha_{[1:n]}}$ and $\mathsf{sn}^\alpha$ from $\xi^\alpha$:\\
					
					\setlength{\leftskip}{-16pt}$\mathcal{S^\alpha}$ signs ($\mathsf{k_{\mathcal{S}i}}$, $\mathsf{sn}^\alpha$) and sends them to $\mathsf{\mathcal{M}^\alpha_{i}}$;\\
					
		        	\setlength{\leftskip}{-16pt}$\mathsf{\mathcal{M}^\alpha_{i}}$ checks $\mathsf{k_{\mathcal{S}i}}$ based on $(t,n)$-VSS.$\mathsf{Verify}$;\\
		        	
					\setlength{\leftskip}{-16pt}$\mathsf{\mathcal{M}^\alpha_{i}}$ sends ($\mathsf{Sig_\mathcal{S}}$,$\mathsf{{k_{\mathcal{S}i}}}$, $\mathsf{sn}^\alpha$) based on $\mathsf{Tx}_{\text{Appeal}}$ (When $\mathsf{\mathcal{M}^\alpha_{i}}$ gets the fake key share).\\
				}
				\item[\hspace{7pt}Exchange:]$\pmb{\rightarrow}$ Upon detecting the state of $\xi^\alpha$ becomes $\mathtt{Open_{CE}}$ or $\mathtt{Open}$:\\
				{
					\setlength{\leftskip}{-16pt}$\mathcal{S^\alpha}$ generates the ciphertext $\mathsf{\overline{m}_{{\mathcal{S}}}}$ based on the plaintext $\mathsf{m_{\mathcal{S}}}$ and the key $\mathsf{k_\mathcal{S}}$;\\
					
					\setlength{\leftskip}{-16pt}$\mathcal{S^\alpha}$ generates the proof $\pi_{\mathcal{S}}$ = $\Pi.\mathsf{Prove}(\mathsf{pk, m_{\mathcal{S}}, k_{\mathcal{S}[1:n]})}$; \\
					
					\setlength{\leftskip}{-16pt}/* Note: $\mathcal{S^\alpha}$ can repeat the above two steps to generate multiple ciphertexts and proofs. */ \\
					
					\setlength{\leftskip}{-16pt}$\mathcal{S^\alpha}$ send (\{$Tr, \pi_\mathcal{S}$\}) to $\mathcal{R}^\alpha$;\\
					
					\setlength{\leftskip}{-16pt}$\mathcal{R^\alpha}$ verifies each $\pi_{\mathcal{S}}$ based on $\Pi.\mathsf{Verify(vk,\pi_{\mathcal{S}})}$;\\
					
					\setlength{\leftskip}{-16pt}$\mathcal{R^\alpha}$ can generate a $Sr$ to open a sub-channel. \\
				}
				\item[\hspace{7pt}Close:]~\\
				{
					\setlength{\leftskip}{-16pt}Channel participants ($\mathcal{S^\alpha}$, $\mathcal{R^\alpha}$ and others in sub-channels) fetch related (\{$Tr$, $Sr$\}) to compute the final state $f$ in each channel;\\
					
					\setlength{\leftskip}{-16pt}Channel participants send (\{$f, Sr$\}) to $\xi^\alpha$ based on $\mathsf{Tx}_{\text{Close}}$;\\
					
					\setlength{\leftskip}{-16pt}$\mathcal{S^\alpha}$ creates a preimage $\mathsf{pre}$ and compute its hash digest $\mathsf{h({pre})}$;\\
					
					\setlength{\leftskip}{-16pt}$\mathcal{S^\alpha}$ sends ($\mathsf{h({pre})}$) to $\xi^\alpha$ based on $\mathsf{Tx}_{\text{Lock}}$;\\
					
					\setlength{\leftskip}{-16pt}$\mathcal{R^\alpha}$ learns $\mathsf{h({pre})}$ from $\xi^\alpha$.\\
					
                    \setlength{\leftskip}{-26pt}$\pmb{\rightarrow}$ Upon $\mathcal{R^\beta}$ learning $\mathsf{pre'}$ from the blockchain $\beta$: \\
                    
                    \setlength{\leftskip}{-16pt}$\mathcal{R^\alpha}$ sends $\mathsf{pre'}$ and $\mathsf{h'(k_\mathcal{S})}$ to $\xi^\alpha$ based on $\mathsf{Tx}_{\text{Update-EIE}}$.\\
                    
                    \setlength{\leftskip}{-16pt}$\mathcal{M^\alpha}$ sends $\mathsf{pre'}$ and $\mathsf{h'(k_\mathcal{S})}$ to $\xi^\alpha$ based on $\mathsf{Tx}_{\text{Update-EIE}}$ (When $\mathcal{R^\alpha}$ does not provides $\mathsf{pre'}$ within $T_3$).\\
                    
                    \setlength{\leftskip}{-26pt}$\pmb{\rightarrow}$ Upon receiving the request for key recovery from $\xi^\alpha$:\\
                    
                    \setlength{\leftskip}{-16pt}$\mathsf{\mathcal{M}_{i}^\alpha}$ sends $\mathsf{k_{\mathcal{S}i}}$ to $\xi^\alpha$ based on $\mathsf{Tx}_{\text{Recover}}$.\\
                    
                    \setlength{\leftskip}{-26pt}$\pmb{\rightarrow}$ Upon receiving key shares from $\xi^\alpha$:\\
                    
                    \setlength{\leftskip}{-16pt} $\mathcal{R^\alpha}$ recovers the key $\mathsf{k_\mathcal{S}}$ and uses it to decrypt the ciphertext (sent by $\mathcal{S}^\alpha$) based on $\mathsf{\Theta.Recover}$.\\
				}
			\end{basedescript}
			
			\hrule
			\vspace{5pt}

			\begin{basedescript}{\desclabelwidth{30pt}}
				\item[{\bf The process in the blockchain $\beta$ ($\mathcal{S^\beta}$, $\mathcal{R^\beta}$, $\mathcal{M^\beta}$):}]
				\item[\hspace{7pt}Initialize:]~\\
				{
					\setlength{\leftskip}{-16pt}All get ($\widetilde{\mathsf{pk}}$, $\widetilde{\mathsf{sk}}$) from blockchain. \\ 
				}
				\item[\hspace{7pt}Open:]~\\  
				{
					\setlength{\leftskip}{-16pt}$\mathcal{S^\beta}$, $\mathcal{R^\beta}$ send $\mathsf{v}$ to $\xi^\beta$ based on $\mathsf{Tx}_{\text{Open}}$;\\
					
					\setlength{\leftskip}{-16pt}$\mathcal{R^\beta}$ invokes $\mathsf{\Theta.Share}$ to generate key shares $\mathsf{k_{\mathcal{R}[1:n]}}$;\\
					
					\setlength{\leftskip}{-16pt}$\mathcal{R^\beta}$ computes hash results $\mathsf{h(k_\mathcal{R})}$, $\mathsf{h(k_{\mathcal{R}[1:n]})}$;\\
					
					\setlength{\leftskip}{-16pt}$\mathcal{R^\beta}$ sends ($\mathsf{h(k_\mathcal{R})}, n, t, \mathsf{h(k_{\mathcal{R}[1:n]})}$) to $\xi^\beta$ based on $\mathsf{Tx}_{\text{Upload}}$;\\
					
					\setlength{\leftskip}{-26pt}$\pmb{\rightarrow}$ Upon getting the addresses of $\mathsf{\mathcal{M}^\alpha_{[1:n]}}$ and $\mathsf{sn}^\beta$ from $\xi^\beta$:\\
					
					\setlength{\leftskip}{-16pt}$\mathcal{R^\beta}$ signs ($\mathsf{k_{\mathcal{R}i}}$, $\mathsf{sn}^\beta$) and sends them to $\mathsf{\mathcal{M}^\beta_{i}}$;\\
					
		        	\setlength{\leftskip}{-16pt}$\mathsf{\mathcal{M}^\alpha_{i}}$ checks $\mathsf{k_{\mathcal{R}i}}$ based on $(t,n)$-VSS.$\mathsf{Verify}$;\\
		        	
					\setlength{\leftskip}{-16pt}$\mathsf{\mathcal{M}^\beta_{i}}$ sends ($\mathsf{Sig_\mathcal{R}}$,$\mathsf{{k_{\mathcal{R}i}}}$, $\mathsf{sn}^\beta$) based on $\mathsf{Tx}_{\text{Appeal}}$ (When $\mathsf{\mathcal{M}^\beta_{i}}$ gets the fake key share).\\
				}
				\item[\hspace{7pt}Exchange:]$\pmb{\rightarrow}$ Upon detecting the state of $\xi^\beta$ becomes $\mathtt{Open_{CE}}$ or $\mathtt{Open}$:\\
				{
					\setlength{\leftskip}{-16pt}$\mathcal{R^\beta}$ generates the ciphertext $\mathsf{\overline{m}_{{\mathcal{R}}}}$ based on the plaintext $\mathsf{m_{\mathcal{R}}}$ and the key $\mathsf{k_\mathcal{R}}$;\\
					
					\setlength{\leftskip}{-16pt}$\mathcal{R^\beta}$ generates the proof $\pi_{\mathcal{R}}$ = $\Pi.\mathsf{Prove}(\mathsf{pk, m_{\mathcal{R}}, k_{\mathcal{R}[1:n]}})$; \\
					
					\setlength{\leftskip}{-16pt}/* Note: $\mathcal{R^\beta}$can repeat the above steps to generate multiple ciphertexts and proofs. */ \\
					
					\setlength{\leftskip}{-16pt}$\mathcal{R^\beta}$ send multiple (\{$Tr, \pi_\mathcal{R}$\}) to $\mathcal{S}^\beta$;\\
					
					\setlength{\leftskip}{-16pt}$\mathcal{S^\beta}$ verifies each $\pi_{\mathcal{R}}$ based on $\Pi.\mathsf{Verify(vk, \pi_{\mathcal{R}})}$; \\
					
					\setlength{\leftskip}{-16pt}$\mathcal{S^\beta}$ can generate a $Sr$ to open a sub-channel. \\
				}
                \item[\hspace{7pt}Close:]~\\
				{
					\setlength{\leftskip}{-16pt}Channel participants ($\mathcal{S^\beta}$, $\mathcal{R^\beta}$ and others in sub-channels) fetch related (\{$Tr$, $Sr$\}) to compute the final state $f$ in each channel;\\
					
					\setlength{\leftskip}{-16pt}Channel participants send (\{$f, Sr$\}) to $\xi^\beta$ based on $\mathsf{Tx}_{\text{Close}}$;\\
					
					\setlength{\leftskip}{-26pt}$\pmb{\rightarrow}$ Upon $\mathcal{R^\alpha}$ learning $\mathsf{h({pre})}$ from $\xi^\alpha$:\\
					
					\setlength{\leftskip}{-16pt}$\mathcal{R^\beta}$ sends ($\mathsf{h({pre})}$) to $\xi^\beta$ based on $\mathsf{Tx}_{\text{Lock}}$;\\
					
					\setlength{\leftskip}{-26pt}$\pmb{\rightarrow}$ Upon detecting the state of $\xi^\beta$ becomes $\mathtt{Lock}$: \\
					
					\setlength{\leftskip}{-16pt}$\mathcal{S^\beta}$ sends $\mathsf{pre'}$ and $\mathsf{h'(k_\mathcal{R})}$ to $\xi^\beta$ based on $\mathsf{Tx}_{\text{Update-EIE}}$.\\
					
					\setlength{\leftskip}{-16pt}$\mathcal{R^\beta}$ learns $\mathsf{{pre'}}$ from $\xi^\beta$.\\
                    \setlength{\leftskip}{-26pt}$\pmb{\rightarrow}$ Upon receiving the request for key recovery from $\xi^\beta$: \\
                    
                    \setlength{\leftskip}{-16pt}$\mathsf{\mathcal{M}_{i}^\beta}$ sends $\mathsf{k_{\mathcal{R}i}}$ to $\xi^\beta$ based on $\mathsf{Tx}_{\text{Recover}}$.\\
                    
                    \setlength{\leftskip}{-26pt}$\pmb{\rightarrow}$ Upon receiving key shares from $\xi^\beta$:\\
                    
                    \setlength{\leftskip}{-16pt} $\mathcal{S^\beta}$ recovers the key $\mathsf{k_\mathcal{R}}$ and uses it to decrypt the ciphertext (sent by $\mathcal{R}^\beta$) based on $\mathsf{\Theta.Recover}$.\\
				}
			\end{basedescript}
		\end{multicols}\vspace{-15pt}
	\end{framed}\vspace{-10pt}
\end{figure*}

%TODO:原子性分析部分需要重新该写   --Yihao  0919
%完成    --Yihao  0924
\subsection{Analysis} \label{discussions}
In this subsection, we %informally
prove that $\mathsf{Cross}$-$\mathsf{Channel}$ possesses the properties of fairness and atomicity. Its scalability will be demonstrated through experiments in Sec.~\ref{Performance}. %We give the assumption as follows.
%可以兼容现有的通道方案,对目前的单层通道做改进
%zk的prover过长的问题可以避免
%只需setup一次
%To deduce the Theorem~\ref{theorem-1} for scalability, we first give the Assumption~\ref{assumption-1} and prove the correctness of Lemma~\ref{lem-1}.
\begin{assumption}
	We assume that behaviors of the participants $\mathcal{S, R, M}$ within each chain follow the security model defined in Sec.~\ref{model}. 
	Consider Byzantine fault-tolerance, we also assume that an $N_{node}$-node %($3\ell+1$)-node 
 network can tolerate up to $\ell$ Byzantine nodes and support any reconstruction threshold within $ [\ell+1, N_{node}-\ell]$~\cite{das2022practical,huang2022workload}, where $N_{node}=3\ell+1$. 
	\label{assumption-1}
\end{assumption}
\begin{lemma} 
	zk-SNARK is a non-interactive zero-knowledge proof technique that satisfies completeness, soundness, and zero knowledge~\cite{groth2016size}.
	\label{lem-1}
\end{lemma}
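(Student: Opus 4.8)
The plan is to reduce Lemma~\ref{lem-1} to the security guarantees established for the Groth16 construction~\cite{groth2016size}, and to indicate which ingredients of that construction deliver each of the three properties. First I would recall the structure underlying the triple $(\mathtt{Setup},\mathtt{Prove},\mathtt{Verify})$ of Definition~1: the circuit $C$ with relation $R_C$ is compiled into a quadratic arithmetic program (equivalently an R1CS instance), $\mathtt{Setup}(1^\lambda,C)$ samples a trapdoor $\tau$ and publishes $\mathsf{crs}=(\mathsf{pk},\mathsf{vk})$ consisting of group elements that are monomials in $\tau$, $\mathtt{Prove}$ outputs three group elements $(A,B,C)$ formed as linear combinations of $\mathsf{pk}$ entries weighted by the public input $\vec{x}$, the witness $\vec{w}$, and fresh randomness, and $\mathtt{Verify}$ checks a single pairing product equation against $\mathsf{vk}$ and $\vec{x}$. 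With this skeleton in place, the three properties are verified in turn.

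For \textbf{completeness} I would substitute the honestly generated $(A,B,C)$ into the verification equation and show it holds identically: this is a routine algebraic check that exploits the QAP divisibility property, namely that when $C(\vec{x},\vec{w})=0^l$ the target polynomial divides the combined constraint polynomial, so the quotient term supplied by the prover makes the pairing equation balance. For \textbf{zero knowledge} I would exhibit the simulator: using the simulation trapdoor $\tau$ from $\mathtt{Setup}$, the simulator picks $A,B$ uniformly and solves for the unique $C$ making the verification equation hold; because the real prover already randomizes $A$ and $B$ with independent uniform scalars, the simulated transcript is \emph{identically} distributed to a real proof for any $(\vec{x},\vec{w})\in R_C$, giving perfect zero knowledge, and I would note that $\mathsf{vk}$ and $C$ being public (as remarked after Definition~1) is exactly what lets the simulator run on the verifier's side.

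The substantive part is \textbf{(knowledge) soundness}, and that is where I expect the main obstacle to lie. I would work in the generic group model (or, more cleanly, the algebraic group model) so that any adversary who outputs an accepting proof must also output the coefficients expressing $A,B,C$ as linear combinations of the $\mathsf{crs}$ elements. Treating $\tau$ as formal indeterminates, the verification equation becomes a polynomial identity in those indeterminates; matching coefficients monomial by monomial forces the adversary's coefficients to have exactly the shape of a legitimate prover's, from which one reads off a witness $\vec{w}$ with $(\vec{x},\vec{w})\in R_C$ and defines the extractor. The delicate step is the exhaustive case analysis on which $\mathsf{crs}$ monomials can appear in each of $A$, $B$, $C$ without violating the identity; handling the cross-terms introduced by the $\alpha,\beta,\gamma,\delta$ separation of the CRS is what makes the argument non-trivial, and is precisely the content imported from~\cite{groth2016size}. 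Since Lemma~\ref{lem-1} is invoked downstream only through the black-box completeness/soundness/zero-knowledge interface, it suffices to cite this analysis rather than reproduce it, so the final write-up can be a short paragraph pointing to~\cite{groth2016size} with the above three observations as a sketch.
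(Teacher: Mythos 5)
Your proposal is correct and ultimately lands where the paper does: Lemma~\ref{lem-1} is not proved in the paper at all but simply recorded as a known property of the Groth16 construction via the citation to~\cite{groth2016size}, which is exactly the conclusion you reach after your (accurate) sketch of the QAP completeness check, the trapdoor simulator for perfect zero knowledge, and the generic/algebraic group model extraction argument for knowledge soundness. The additional detail you supply is faithful to Groth16 but is strictly optional for this paper, since the lemma is only used downstream through its black-box completeness/soundness/zero-knowledge interface.
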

\begin{lemma}
	Based on zk-SNARK, a sender (e.g., $\mathcal{S}$) can convince a receiver (e.g., $\mathcal{R}$) that the output ciphertext $\overline{m}$ is encrypted based on $m$ and $k_{[1:n]}$. Besides that, the receiver $\mathcal{R}$ cannot get any knowledge about $m$ and $k_{[1:n]}$ from public information.
	\label{lem-2}
\end{lemma}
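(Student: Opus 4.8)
\textbf{Proof proposal for Lemma~\ref{lem-2}.}
The plan is to reduce Lemma~\ref{lem-2} directly to the three properties of zk-SNARK asserted in Lemma~\ref{lem-1}, instantiated on the concrete circuit $C_\Theta$ of Fig.~\ref{Fig:fair-exchange-circuit2} (equivalently, the augmented version of $\Upsilon = (\mathtt{DataAuth}, \mathtt{KeyAuth})$ from Definition~\ref{fair-exchange-circuit} extended with the $(t,n)$-VSS share logic of Sec.~\ref{vss}). First I would fix the relation $R_{C_\Theta}$ precisely: the public inputs $\vec{x}$ are $(\overline{m}, h(m), h(k), h(k_{[1:n]}), n, t)$, the private witness $\vec{w}$ is $(m, k, k_{[1:n]})$, and $C_\Theta(\vec{x},\vec{w}) = 0$ iff (i) $h(m)$ is the correct hash of $m$ (the $\mathtt{DataAuth}$ gate), (ii) $\overline{m}$ is the encryption of $m$ under $k$ and $h(k)$ is the hash of $k$ (the $\mathtt{KeyAuth}$ gate), and (iii) the shares $k_{[1:n]}$ are a valid Pedersen $(t,n)$-VSS sharing of $k$ with hashes $h(k_{[1:n]})$ and reconstruct $k$ via Lagrange interpolation (the appended recover gate). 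The proof then has two halves corresponding to the two claims of the lemma.

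For the \emph{convincing} half, I would invoke completeness and soundness of Lemma~\ref{lem-1}. Completeness guarantees that an honest $\mathcal{S}$ holding a genuine witness produces $\pi \leftarrow \Pi.\mathtt{Prove}(\mathsf{pk}, \vec{x}, \vec{w})$ that $\mathcal{R}$ accepts via $\Pi.\mathtt{Verify}(\mathsf{vk}, \vec{x}, \pi) = 1$; since $\mathsf{vk}$ and $C_\Theta$ are public (as noted after Definition~1), $\mathcal{R}$ can run this check itself. Soundness (more precisely, knowledge soundness, which is the ``AoK'' in zk-SNARK) guarantees the converse: if a possibly cheating prover makes $\mathcal{R}$ accept, then except with negligible probability there exists a witness $(m, k, k_{[1:n]})$ satisfying $R_{C_\Theta}$ — in particular $\overline{m}$ really is the encryption of that $m$ under a key $k$ whose shares are $k_{[1:n]}$. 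This is exactly the statement ``$\mathcal{R}$ is convinced that $\overline{m}$ is encrypted based on $m$ and $k_{[1:n]}$.'' I would also remark that cross-checking $h(m), h(k), h(k_{[1:n]})$ against the values $\mathcal{R}$ receives (and, for the shares, against what the miners hold from $\Theta.$Share) pins down the witness uniquely given collision resistance of the hash, closing the gap between ``a witness exists'' and ``the specific claimed objects are consistent.''

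For the \emph{hiding} half, I would invoke the zero-knowledge property of Lemma~\ref{lem-1}: there is a simulator that, given only the public input $\vec{x}$ and $\mathsf{vk}$, produces a proof distribution computationally indistinguishable from real proofs, hence $\pi$ leaks nothing about $\vec{w} = (m, k, k_{[1:n]})$ beyond what $\vec{x}$ already reveals. It remains to argue the public input itself leaks nothing usable: $h(m), h(k), h(k_{[1:n]})$ are hash digests (one-way / hiding), and $\overline{m}$ is a ciphertext under $k$, so under the assumed security of the encryption scheme (ECC/MIMC, Sec.~\ref{general-fair-exchange}) it hides $m$ as long as $k$ is unknown; by Assumption~\ref{assumption-1} fewer than $t$ shares are exposed before $\Theta.$Recover, so $(t,n)$-VSS secrecy keeps $k$ hidden. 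Combining these, $\mathcal{R}$ obtains no knowledge of $m$ or $k_{[1:n]}$ from the public transcript. The main obstacle I anticipate is not any single step but making the circuit-level claim airtight: one must check that $C_\Theta$ as drawn in Fig.~\ref{Fig:fair-exchange-circuit2} genuinely encodes the full VSS recover relation (Lagrange coefficients are constants mod the field order, so this is expressible as an $\mathbb{F}$-arithmetic circuit) and that no auxiliary wire inadvertently exposes a share; once the relation $R_{C_\Theta}$ is written down correctly, the rest is a clean three-line appeal to completeness, knowledge soundness, and zero knowledge.
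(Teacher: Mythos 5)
Your proposal is correct and follows essentially the same route as the paper's own proof: instantiate the zk-SNARK on the augmented circuit $C_\Theta$ with $(m, k_{[1:n]})$ as private witness, then use completeness to handle an honest $\mathcal{S}$, (knowledge) soundness to rule out a cheating $\mathcal{S}$ convincing $\mathcal{R}$ of a false encryption relation, and zero knowledge to show $\mathcal{R}$ learns nothing about the witness. Your additional observations --- that the public inputs themselves (hash digests, ciphertext, exposed shares below the threshold $t$) must also be argued non-leaking, and that hash consistency pins down the specific witness --- go slightly beyond what the paper states explicitly, but they refine rather than change the argument.
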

\begin{proof}
    Based on our model defined in Sec.~\ref{model}, $\mathcal{S}$ and $\mathcal{R}$ can be arbitrarily malicious. So,  
	a malicious $\mathcal{S}$ would provide fake $m$ and $k_{[1:n]}$ to convince an honest $\mathcal{R}$, and a malicious $\mathcal{R}$ would try to obtain $m$ and $k_{[1:n]}$ from the public information provided by $\mathcal{S}$. 
	In order to prevent the above malicious behaviors, we adopt zk-SNARK (introduced in Sec.~\ref{zk-snark}), and design a new circuit (introduced in Fig.~\ref{Fig:fair-exchange-circuit2}) for it.  Specifically, we take the encrypted information $m$ and key shares $k_{[1:n]}$ as private inputs, and implement the Encrypt, Recover, and Hash algorithms in the circuit. 
	Based on Lemma~\ref{lem-1}, the completeness of zk-SNARK ensures that an honest $\mathcal{S}$ with valid $m$ and $k_{[1:n]}$ can always convince $\mathcal{R}$. 
	When $\mathcal{S}$ is malicious, the soundness of zk-SNARK makes it impossible for $\mathcal{S}$ (with probabilistic polynomial-time witness extractor $\mathcal{E}$) to provide fake secrets (i.e., $m$ and $k_{[1:n]}$) to deceive $\mathcal{R}$ (shown in Equation~(\ref{eq:soundness})). In other words, $\mathcal{R}$ can determine whether $\mathcal{S}$ provides fake private inputs based on the public parameters, e.g., the common reference string $\mathsf{crs} (\mathsf{pk,vk})$, the proof $\pi$, and the public inputs. 
	\begin{equation}  % \scriptsize
		\Pr{\left[
			\begin{split}
				& C(m, k) \neq R_C \\
				& {\mathrm{Verify}}(\mathsf{vk}, \pi) = 1
			\end{split}
			\Biggm\vert
			\begin{split}
				& {\mathtt{Setup}}(1^{\lambda}, C) \rightarrow (\mathsf{crs})\\
				& \mathcal{S}(\mathsf{pk, vk}) \rightarrow (\pi)\\
				& \mathcal{E}(\mathsf{pk, vk}) \rightarrow (m, k_{[1:n]})
			\end{split}
			\right]} \le 
			\mathsf{negl}(\lambda).
		\label{eq:soundness}
	\end{equation}
	The zero knowledge of zk-SNARK ensures that for every probabilistic polynomial-time (PPT) malicious $\mathcal{R}$, the probability that $\mathcal{R}$ can take private inputs from the proof can be ignored. 
	%In other word, 
    %Suppose there is a simulator that can extract a witness from a proof even after it has seen many simulated proofs.
	%for every $\mathcal{A}$, there is a probabilistic polynomial-time witness extractor $\mathcal{E}$, such that:
	Thus one can conclude that zk-SNARK in our general fair exchange protocol ensures the authenticity and privacy of plaintexts and key shares. %Next, based on Lemma~\ref{lem-2} and Lemma~\ref{lem-3}, we prove the correctness of Theorem~\ref{theorem-3}.
\end{proof}
\begin{theorem}[Fairness]
	In the scenario of encrypted information exchange,  
	$\mathcal{S}$ and $\mathcal{R}$ can be guaranteed that if $\mathcal{S}$ gets $\mathcal{R}$'s secret (i.e., the plaintext), $\mathcal{R}$ would also get $\mathcal{S}$'s secret, and vice versa. 
	\label{theorem-3}
\end{theorem}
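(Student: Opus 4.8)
The plan is to split fairness into two claims and chain them together. \textbf{Claim (i):} in the encrypted-information-exchange setting a party can decrypt the counterparty's ciphertext \emph{precisely when} the key-recovery branch of the corresponding smart contract has fired. \textbf{Claim (ii):} the recovery branches of $\xi^\alpha$ and $\xi^\beta$ fire atomically -- both or neither. Granting these, $\mathcal{S}$ learning $m_\mathcal{R}$ $\iff$ recovery fires on $\beta$ $\iff$ (by (ii)) recovery fires on $\alpha$ $\iff$ $\mathcal{R}$ learning $m_\mathcal{S}$, which is exactly the two-directional statement of the theorem.

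For Claim (i) I would first invoke Lemma~\ref{lem-2}: once $\mathcal{R}$ accepts $\pi_\mathcal{S}$ during $\Theta.$Exchange, soundness of zk-SNARK forces $\overline{m}_\mathcal{S}$ to be a genuine encryption of some plaintext $m_\mathcal{S}$ under a key $k_\mathcal{S}$ whose shares are exactly those committed in $\mathsf{Tx}_{\text{Upload}}$ and delivered to the selected miners, while zero knowledge keeps $m_\mathcal{S}$ and $k_\mathcal{S}$ hidden from $\mathcal{R}$ until recovery (symmetrically for $\pi_\mathcal{R}$). Next I would argue that $\Theta.$Share leaves the contract recoverable: every selected miner runs $\mathsf{(t,n)\text{-}VSS.Verify}$ and re-hashes its share against the on-chain commitments, and any defect is reported through $\mathsf{Tx}_{\text{Appeal}}$ within $T_1$ (the serial number $\mathsf{sn}$ preventing replay of stale shares), so unless the protocol has already aborted, at least $t$ of the selected miners hold valid shares by Assumption~\ref{assumption-1}. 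Hence when $\xi$ issues its recovery request it gathers $\ge t$ valid shares and the requester reconstructs the key by $\mathsf{(t,n)\text{-}VSS.Recover}$ and decrypts; conversely, before that request no coalition of fewer than $t$ parties -- in particular neither $\mathcal{S}$ nor $\mathcal{R}$, who hold no shares -- can reconstruct the key, giving the ``only when'' direction.

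For Claim (ii) I would analyse the HTLC wrapper of the $\mathsf{Close}$ phase. The ``Update-EIE'' branch of $\xi^\alpha$ -- the one that issues the recovery request -- fires only upon a posted $\mathsf{pre'}$ with $\mathrm{hash}(\mathsf{pre'})=\mathsf{h(pre)}$ and $\mathsf{h'(k)}=\mathsf{h(k)}$, and symmetrically for $\xi^\beta$. Since $\mathsf{pre}$ is a fresh random value known only to $\mathcal{S}$ until it first appears on a chain, the first valid ``Update-EIE'' on either chain must originate from $\mathcal{S}$; by the protocol this is $\mathcal{S^\beta}$'s transaction to $\xi^\beta$, which simultaneously fires recovery of $k_\mathcal{R}$ on $\beta$ and makes $\mathsf{pre}$ public. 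Because $T_4<T_3$ (needed, as noted in Sec.~\ref{htlc}, for exactly this purpose), $\mathcal{R^\alpha}$ still has time to post $\mathsf{pre}$ to $\xi^\alpha$ and fire recovery of $k_\mathcal{S}$ on $\alpha$; if asynchrony delays $\mathcal{R^\alpha}$ past $T_3$, the $T_5$ window lets any miner relay $\mathsf{pre}$ for a reward, so recovery on $\alpha$ still fires. In the opposite case, if no valid ``Update-EIE'' is ever posted on $\beta$ then $\mathcal{S}$ never obtains $k_\mathcal{R}$, and since $\mathcal{R}$ never learns $\mathsf{pre}$ no valid ``Update-EIE'' can be posted on $\alpha$ either, so the $\mathsf{Refund}$ branch returns both locked states and neither key is recovered -- both branches again move together. (A deviating $\mathcal{S}$ that instead posts $\mathsf{pre}$ to $\xi^\alpha$ first merely releases $k_\mathcal{S}$ without benefit and, by $T_4<T_3$, can still complete on $\beta$, so no deviation lets one side be cheated.)

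The step I expect to be the main obstacle is the asynchronous case of Claim (ii): showing rigorously that the $T_5$-incentivised relay of $\mathsf{pre}$ to $\xi^\alpha$ cannot be starved. Making this precise needs the assumption that some miner watching $\xi^\beta$ is not itself arbitrarily delayed and that $\mathcal{R^\alpha}$'s reward strictly exceeds a miner's gas cost, so that relaying within $T_5$ is a miner's dominant action; one must also check that a malicious $\mathcal{R^\alpha}$ cannot suppress the relay, which holds because the reward is paid by $\xi^\alpha$ out of $\mathcal{R^\alpha}$'s own locked balance and the update is otherwise independent of $\mathcal{R^\alpha}$'s cooperation.
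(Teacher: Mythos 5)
Your proof is sound in outline but follows a genuinely different decomposition from the paper's. The paper confines the fairness argument to the general fair exchange protocol itself: it invokes Lemma~\ref{lem-2} to show that nothing about $m$ or $k_{[1:n]}$ leaks during $\Theta.$Exchange, and then reduces fairness to two Byzantine attack modes on the $(t,n)$-VSS escrow -- (a) colluding share-holders reconstructing a key prematurely, and (b) share-holders withholding shares to block a legitimate recovery -- which it defeats by imposing the explicit parameter constraints $t>\ell$ and $n\ge t+\ell$ on the selected miner set (covering the worst case in which all $\ell$ Byzantine nodes are among the $n$ selected). The cross-chain ``both recoveries fire or neither does'' coupling that you build in as Claim~(ii), via the HTLC timing $T_4<T_3$ and the $T_5$-incentivised relay, is deliberately factored out of the paper's fairness proof and handled separately as Theorem~\ref{theorem-4} (Atomicity). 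Your route buys an end-to-end statement closer to the literal wording of the theorem for the cross-chain EIE case, at the cost of re-proving Theorem~\ref{theorem-4}'s content and of leaning on the incentive/liveness assumptions you yourself flag; the paper's route is more modular and isolates the genuinely new ingredient, namely the threshold constraints. One point you should tighten: in Claim~(i) you assert ``at least $t$ of the selected miners hold valid shares by Assumption~\ref{assumption-1}'' and ``no coalition of fewer than $t$ parties can reconstruct,'' but Assumption~\ref{assumption-1} bounds Byzantine nodes in the whole $N_{node}$-node network, not in the selected set of $n$ miners; without explicitly imposing $t>\ell$ (so the Byzantine share-holders cannot reach the threshold) and $n\ge t+\ell$ (so the honest share-holders alone suffice), both directions of your Claim~(i) can fail, and these two inequalities are precisely the crux of the paper's argument.
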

\begin{proof}
    We design the general fair exchange protocol $\Theta$ to achieve fairness in $\mathsf{Cross}$-$\mathsf{Channel}$.
    %The general fair exchange protocol adopts zk-SNARK and $(t,n)$-VSS to ensure the fairness of the interaction. 
    In the process of {\bf $\Theta$.Exchange} (details in  Sec.~\ref{general-fair-exchange}), based on Lemma~\ref{lem-2}, one can get that 
    the general fair exchange protocol adopts zk-SNARK to help $\mathcal{S}$ and $\mathcal{R}$ achieve the authenticity and privacy of the exchanged information. 
    So, in this step, neither side has access to the other's secrets.
    In other steps, 
	the general fair exchange protocol adopts $(t,n)$-VSS.  Let's consider the impact of Byzantine fault-tolerance on the fairness of the protocol. %We assume a worst scenario where all Byzantine nodes are selected as the recipients of the key share. Under this assumption, 
	%Based on Assumption~\ref{assumption-1} ($N_{node}=3\ell+1$), we 
 Suppose that $n$ nodes have been chosen to receive the key shares and the key threshold is $t$. There are two possible Byzantine behaviors that can break the protocol. 
	First, when the number of Byzantine nodes receiving the key shares is greater than or equal to $t$, these nodes can collude to break the fairness and recover the key. 
	Second, when the key threshold $t$ is smaller than the number of honest nodes in the $n$ nodes receiving the key shares, the sender of {\bf $\Theta$.Recover} cannot recover the key when all the Byzantine nodes maliciously refuse to provide their key shares because the number of key shares is less than $t$. To overcome the first problem, we require that $t>\ell$, and to counter the second one, we set $n\ge t+\ell$, where $\ell$ is the maximum number of Byzantine nodes in the whole network. These two constraints can accommodate the worst case in which all the $\ell$ malicious nodes are unluckily selected to receive the key shares.
 %
 %greater than the number of honest nodes $(n-\ell^n)$, where $\ell^n$ represents the upper bound of the Byzantine nodes among the $n$ nodes, the Byzantine nodes would maliciously refuse to provide their key shares, and then the sender of {\bf $\Theta$.Recover} cannot recover the key because the number of key shares is less than $t$. 
	%
%	To avoid the above two problems, we set that the threshold $(t,n)$ meets the following constraints: $\{n = N_{node}$, $t >\ell\}$, to consider the worst case in which all the $\ell$ malicious nodes are unluckily selected to receive the key shares.
% $\{n = 3\ell^n+1$, $t \in [\ell^n+1, n-\ell^n]\}$. 
%	After many repetitions of randomly selecting $n$ nodes, we get the expected number of Byzantine nodes among $n$ nodes is $n/3$, and $\ell^n$=$\ell$ in the worst case. 
%	Therefore, to ensure the security, we need to consider the value of $(n,t)$ that is Byzantine fault-tolerant in the worst case, i.e. $\{n = 3\ell+1$=$N_{node}$, $t \in [(N_{node}+2)/3, (2*N_{node}+1)/3]\}$. 
	%So, we set that the threshold $(t,n)$ meets the constraints: $\{n = 3\ell^n+1$, $t \in [\ell^n+1, n-\ell^n]\}$. 
	In summary, one can see that by carefully setting $n$ and $t$ it is impossible for the Byzantine nodes to collect all $t$ key shares even if all Byzantine nodes collude, and thus allow the key to be successfully recovered even if all Byzantine nodes do not follow the protocol to send $\mathsf{Tx}_{\text{Recover}}$. By this way one can guarantee that the general fair exchange protocol can effectively provide fairness for the interaction between two parties.
%	 we set $\{(n-f^n_B) \geq t > f^n_B\}$, where $f^n_B$ is the number of Byzantine nodes among the selected $n$ nodes. Specifically, in the first case, $\{t > f^n_B\}$ makes it impossible for Byzantine nodes to collect $t$ key shares even if all Byzantine nodes $f^n_B$ colluded. As for the second case, we set $\{t \leq (n-f_B)\}$, which allows both sides of the cross-chain interaction to still successfully recover the key if all Byzantine nodes $f^n_B$ do not follow the protocol to send $\mathsf{Tx}_{\text{Recover}}$. 
%	By repeating the experiment many times, we get that the expectation of the number of Byzantine nodes in randomly selected $n$ nodes is $n$/3. So, we can further select the appropriate values of $t$ and $n$ as needed, as long as the relationship between the two is consistent with $\{(2*n)/3 \geq t > n/3\}$ (details shown in Sec.\ref{Performance}). 
\end{proof}
\begin{theorem}[Atomicity]
	Let objects $x_\mathcal{S}$ $\in$ $\mathcal{S}$ and $x_\mathcal{R}$ $\in$ $\mathcal{R}$ before a cross-chain exchange. The settlement result after $\mathsf{Cross}$-$\mathsf{Channel}$ can only be $(x_\mathcal{S}$ $\in$ $\mathcal{S}$ $\bigwedge$ $x_\mathcal{R}$ $\in$ $\mathcal{R})$ $\bigvee$ $(x_\mathcal{S}$ $\in$ $\mathcal{R}$ $\bigwedge$ $x_\mathcal{R}$ $\in$ $\mathcal{S})$.%, even the network is asynchronous.
	\label{theorem-4}
\end{theorem}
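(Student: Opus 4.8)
The plan is to derive atomicity entirely from the behaviour of the improved HTLC run in the \textbf{Close} phase of the cross-chain settlement protocol $\Phi$ (Sec.~\ref{hierarchical-channel-scheme} and~\ref{cross-channel and applications}), and then argue by an exhaustive case split. Recall the set-up: $\mathcal{S^\alpha}$ locks $x_\mathcal{S}$ in $\xi^\alpha$ under the hash lock $\mathsf{h(pre)}$ with timer $T_3$, $\mathcal{R^\beta}$ locks $x_\mathcal{R}$ in $\xi^\beta$ under the \emph{same} $\mathsf{h(pre)}$ with timer $T_4$, the contracts enforce $T_4 < T_3$, and $\xi^\alpha$ additionally opens a grace window up to $T_5 > T_3$ during which any miner may relay $\mathsf{pre}$ for an escrowed reward. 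First I would record two invariants that are immediate from the contract logic in Fig.~\ref{fig:contract}: (i) $\xi^\gamma$ ($\gamma \in \{\alpha,\beta\}$) releases a locked object only upon being shown a string whose hash equals $\mathsf{h(pre)}$; (ii) a locked object that is never released is refunded to its original owner when its timer expires. By preimage resistance of the hash, the only party that can produce $\mathsf{pre}$ before it appears on some chain is $\mathcal{S}$; and by (i) the earliest $\mathsf{pre}$ can become public is when $\mathcal{S^\beta}$ posts it to $\xi^\beta$ within $T_4$, at which instant $\mathcal{R}$ and the miners of chain $\alpha$ all learn it.

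Next I would split on the event $\mathsf{E}$ that a valid $\mathsf{pre}$ reaches $\xi^\beta$ before $T_4$. If $\mathsf{E}$ holds, $\xi^\beta$ updates and $x_\mathcal{R} \in \mathcal{S}$; since $\mathsf{pre}$ is now public and $T_3 > T_4$, $\mathcal{R^\alpha}$ still has time to submit it to $\xi^\alpha$, and even under heavy latency the window $T_5$ lets an honest miner of $\alpha$ post $\mathsf{Tx}_{\text{Update}}$ on $\mathcal{R^\alpha}$'s behalf, so $\xi^\alpha$ updates and $x_\mathcal{S} \in \mathcal{R}$ --- the swapped outcome. If $\mathsf{E}$ fails, (ii) refunds $x_\mathcal{R}$ to $\mathcal{R}$; and since $\mathsf{pre}$ never became public, $\mathcal{R^\alpha}$ cannot produce it, while $\mathcal{S}$ will not post it to $\xi^\alpha$ (by (i) that would surrender $x_\mathcal{S}$ for nothing, and $T_4$ has already passed so it cannot help on $\beta$), hence $\xi^\alpha$'s timer expires and refunds $x_\mathcal{S}$ to $\mathcal{S}$ --- the non-swapped outcome. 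I would then rule out the two ``mixed'' states: $x_\mathcal{S} \in \mathcal{R} \wedge x_\mathcal{R} \in \mathcal{R}$ would need $\mathsf{pre}$ on $\alpha$ but not on $\beta$, which is impossible because nobody (not $\mathcal{R^\alpha}$, not a relaying miner) learns $\mathsf{pre}$ except through its appearance on $\beta$, and by $T_4 < T_3$ that appearance propagates to $\alpha$ in time; and $x_\mathcal{R} \in \mathcal{S} \wedge x_\mathcal{S} \in \mathcal{S}$ is exactly the asynchronous failure that $T_5$ plus the miner incentive eliminates. I would also check that the degenerate configuration ``$x_\mathcal{R}$ locked while $x_\mathcal{S}$ is not'' cannot arise, since $\mathcal{R^\beta}$ posts its lock only after $\mathsf{h(pre)}$ is visible in $\xi^\alpha$, which a self-interested $\mathcal{R}$ reads as evidence of $\mathcal{S^\alpha}$'s lock.

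The step I expect to be the main obstacle is making precise that the grace window $T_5$ together with the escrowed relay reward \emph{always} lets $\mathcal{R}$ redeem on $\alpha$ once the $\beta$-side update has occurred, in spite of unbounded message delay. Here I would lean on Assumption~\ref{assumption-1}: a majority of $\alpha$'s miners are honest and messages are eventually delivered, so within a $T_5$ chosen long enough some honest miner observes $\mathsf{pre}$ and relays it; escrowing the reward inside $\xi^\alpha$ --- rather than having $\mathcal{R^\alpha}$ pay it directly --- makes the relay incentive-compatible and robust against a griefing $\mathcal{R}$. The remaining pieces are routine: the \textbf{Close}-phase verification of the final states $\{f\}$ fixes which objects $x_\mathcal{S}, x_\mathcal{R}$ get locked (so ``atomicity'' is over the intended objects), and in the EIE variant the key release via $\Theta.\mathsf{Recover}$ is triggered by the same $\mathsf{Tx}_{\text{Update-EIE}}$ and therefore inherits the HTLC outcome without altering it --- both can be discharged by appeal to Sec.~\ref{cross-channel and applications}.
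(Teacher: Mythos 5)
Your proposal is correct and follows essentially the same route as the paper's proof: a case split on whether $\mathsf{pre}$ is revealed on chain $\beta$ within $T_4$, relying on the hash lock with $T_3 > T_4$ for the synchronous cases and on the $T_5$ grace window with miner rewards to repair the asynchronous failure $(x_\mathcal{S} \in \mathcal{S} \wedge x_\mathcal{R} \in \mathcal{S})$. Your version is somewhat more explicit (stating the release/refund invariants, ruling out the mixed states, and noting the one-sided-lock configuration), but it adds no new idea beyond the paper's argument.
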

\begin{proof}
	Before closing the channel, 
	the refusal of settlement by one or both parties would result in $(x_\mathcal{S}$ $\in$ $\mathcal{S}$ $\bigwedge$ $x_\mathcal{R}$ $\in$ $\mathcal{R})$, which does not break the atomicity of the protocol.
	After both parties enter settlement (the {\bf Close} phase in Sec.~\ref{cross-channel and applications}), the hierarchical settlement protocol (proposed in Sec.\ref{hierarchical-channel-scheme}) is adopted during the interaction process. 
	Specifically, $\mathcal{S}$ and $\mathcal{R}$ use the same hash lock $\mathsf{h(pre)}$ ($\mathsf{pre}$ is known only by $\mathcal{S}$) to lock the exchanged information, and set a timer ($T_3$ or $T_4$) in each blockchain ($\alpha$ or $\beta$) to avoid the situation of information being deadlocked. There are two cases we need to consider.
	\textit{First,} $\mathcal{S}$ uses $\mathsf{pre}$ to unlock the information in $\beta$. Then, $\mathcal{R}$ learns $\mathsf{pre}$ and uses it to unlock the information in $\alpha$ $(x_\mathcal{S}$ $\in$ $\mathcal{R}$ $\bigwedge$ $x_\mathcal{R}$ $\in$ $\mathcal{S})$.
	\textit{Second,} if $\mathcal{S}$ does not provide $\mathsf{pre}$, $\mathcal{R}$ cannot get $\mathsf{pre}$; thus it cannot unlock the information in $\alpha$. When the timer expires, the smart contract returns the locked information $(x_\mathcal{S}$ $\in$ $\mathcal{S}$ $\bigwedge$ $x_\mathcal{R}$ $\in$ $\mathcal{R})$. Note that $T_3$ in $\alpha$ is longer than $T_4$, ensuring that a malicious $\mathcal{S}$ in $\beta$ cannot provide $\mathsf{pre}$ after $T_3$ in $\alpha$ times-out. 
 
	However, \textit{in an asynchronous network}, each account may not be able to receive or upload information within a certain time due to network latency. This implies that $\mathcal{R}$ may not be able to receive $\mathsf{pre}$ and upload it within $T3$ after $\mathcal{S}$ provides $\mathsf{pre}$ $(x_\mathcal{S}$ $\in$ $\mathcal{S}$ $\bigwedge$ $x_\mathcal{R}$ $\in$ $\mathcal{S})$. Therefore $\mathsf{Cross}$-$\mathsf{Channel}$ sets a timer $T_5$ in $\alpha$ to ensure that if $\mathcal{R}$ cannot provide $\mathsf{pre}$ within $T_3$, any honest miner who receives $\mathsf{pre}$ can help $\mathcal{R}$ to upload $\mathsf{pre}$ within $T_5$ $(x_\mathcal{S}$ $\in$ $\mathcal{R}$ $\bigwedge$ $x_\mathcal{R}$ $\in$ $\mathcal{S})$. Correspondingly, the honest miner would get rewards from the smart contract. As for the incentive mechanism, our scheme can be compared to a specific application of some blockchains such as the main chain of Ethereum, where miners can get rewards
    for their work. 
\end{proof}
	\renewcommand\arraystretch{1.3}

%TODO:实验部分需要根据协议改进部分重新测试，测试结果多次取平均  --Yihao  0920
\section{Implementation and Performance Evaluation}\label{sec:Implementation}
In this section, we present our concrete implementation of $\mathsf{Cross}$-$\mathsf{Channel}$ and test its performance. 

\subsection{Implementation}\label{Implementation}

\noindent{\bf On-chain deployment: Ethereum and smart contract. } The Ethereum Geth\footnote{https://github.com/ethereum/go-ethereum} and Solidity\footnote{https://github.com/ethereum/solidity} come from Github. We use Geth to construct a test network for $\mathsf{Cross}$-$\mathsf{Channel}$ validation, and implement the smart contract $\xi$ based on Solidity.
In order to facilitate the interactions between smart contract and Ethereum, we adopt web3.py\footnote{https://pypi.org/project/web3} to deploy and call $\xi$.
	
\noindent{\bf Off-chain deployment: zk-SNARK and $(t,n)$-VSS. } We employ the zk-SNARK algorithm in Github\footnote{https://github.com/scipr-lab/libsnark} and $(t,n)$-VSS scheme in~\cite{pedersen1991non}. For the circuit in zk-SNARK (shown in Fig.~\ref{Fig:fair-exchange-circuit2}), we adopt MIMC to implement the $\mathsf{Encryption}$ and $\mathsf{Hash}$ algorithms because MIMC is encryption-friendly and can reduce circuit complexity and computational overhead.	
%\end{enumerate}

%\noindent{\bf Test environments. } 
We test the performance of $\mathsf{Cross}$-$\mathsf{Channel}$ on a local server and AliCloud.
%Each desktop is equipped with Intel Core i5-8500@3.00 GHz*6 and 19.40 GB RAM running 64-bit Ubuntu 18.04. 
The local server is equipped with an Intel$^\circledR$ Xeon(R) Silver 4214R CPU @ 2.40 GHz * 16 and 78.6 GB RAM running 64-bit Ubuntu 20.04.2 LTS. 
In the experiment on AliCloud, we use 50 ecs.g6.2xlarge instances, with each running Ubuntu 20.04 system Intel Xeon (Cascade Lake) Platinum 8269CY processor and having 8 vCPUs of frequency 2.5/3.2 GHz and 16 GB RAM. We start 4 docker nodes in each instance to form two 100-node blockchains based on the Proof-of-Work (PoW) consensus algorithm, and the number of transactions at each blockchain reaches up to 1,000.

\subsection{Performance Evaluation}\label{Performance}
\noindent {\bf On-chain performance: smart contract.}
In this experiment, we build a 2-level channel (including a sub-channel) in a 20-node blockchain to test the basic performance of the smart contract, i.e., the execution time and gas consumption of each function. 

As shown in TABLE~\ref{tab:sc_cost}, one can see that the execution time of each function is in the millisecond level, and $\mathsf{{Upload}}$ consumes the most gas (about 345,000).
The above results are reasonable because $\mathsf{Tx}_{\text{Upload}}$ involves more uploaded data, e.g., multiple signatures and keys (the definition of $\mathsf{Tx}_{\text{Upload}}$ is shown in Sec.~\ref{general-fair-exchange}).
TABLE~\ref{tab:comparation} displays the total smart contract costs in scenarios of $N$ currency exchanges (CE), $N$ fair exchanges (FE), and $N$ encrypted information exchanges (EIE), and compares HTLC and MAD-HTLC (the two most related cross-chain schemes) with our $\mathsf{Cross\text{-}Channel}$. Note that HTLC and MAD-HTLC do not support FE and EIE, and take one on-chain exchange for each CE operation, while our $\mathsf{Cross\text{-}Channel}$ takes only one on-chain exchange for $N$ operations, benefiting from the proposed channel scheme, where $N$ can be arbitrarily large. More specifically,
%
%  TODO: 描述TABLE Ⅱ内容   --Yihao 1007
%  完成
for CE, $\mathsf{Cross\text{-}Channel}$ consumes about 1,330,000 gas to process $N$ cross-chain exchanges and HTLC (MAD-HTLC) needs to take about 420,000$\times N$ (750,000$\times N$) gas to process the same volume of operations. 
For FE, $\mathsf{Cross\text{-}Channel}$ consumes the same gas as that for currency exchange, while for EIE, $\mathsf{Cross\text{-}Channel}$ needs to consume more gas (around 2,700,000 gas) due to the adoption of the general fair exchange protocol (introduced in Sec.~\ref{general-fair-exchange}) to solve the UE problem. 

Next we test the impact of the number of sub-channels on gas consumption and throughput (TPS). Our results indicate that whenever a new sub-channel is added, 
the gas consumed by the entire protocol is increased by nearly 400,000, because both the number of $\mathsf{Tx}_{\text{Close}}$ and the amount of data in $\mathsf{Tx}_{\text{Close}}$, e.g., sub-channel receipts, are increased. The benefits obtained from this gas increase is the increased number of processed transaction receipts. For example, when the number of sub-channels is increased to $L$, $N \times L$ transaction receipts can be processed, assuming that each sub-channel can process $N$ receipts. In fact,  given a quantitative resource budget, $\mathsf{Cross\text{-}Channel}$ can handle more operations, as $N$ and $L$ can be large, compared to HTLC and its variation, implying that the system throughput with $\mathsf{Cross\text{-}Channel}$ can be significantly enhanced. 
%
%
%The execution time of each function is in the millisecond level. 

%To accurately estimate the cost of smart contract, we conduct experiments based on linear regression to simulate the gas cost of all functions.
%Compared to the average cost (21,000 gas) in Ethereum, our functions need more gas, because they need to support exchanges within hierarchical channels that are more complicated than the simple currency exchanges in flat channels, which involves more uploaded data (e.g. multiple signatures and keys). 

\begin{table}[htb]
	\centering
	\begin{threeparttable} \scriptsize
		\centering
		\begin{tabular}{cccc}
			\toprule
			Contract functions  & Execution time & Gas/ ETH/ USD\tnote{$\ast$} \\% & Applications\tnote{$\dagger$} \\
		  %	 &  & Gas/ ETH/ USD \\ % & CE/ FE/ EIE \\
			\midrule
			% Gas*15/10^9 = ETH, ETH*1086=USD
			$\mathsf{{Open}}$  & 13.663ms & 70,062/ 0.0000701/ 0.0948 \\ % & \fullcirc/ \fullcirc/ \fullcirc   \\
			%+4800
			$\mathsf{{Upload}}$ & 19.102ms & 345,146/ 0.000345/ 0.467 \\ % & \emptycirc/ \emptycirc/ \fullcirc  \\
			$\mathsf{{Appeal}}$ & 13.047ms & 57,542/ 0.0000575/ 0.0778 \\ % & \emptycirc/ \emptycirc/ \fullcirc \\
			$\mathsf{{Close}}$ & 18.905ms & 149,942/ 0.000150 / 0.203 \\ %  & \fullcirc/ \fullcirc/ \fullcirc \\
			$\mathsf{{Recover}}$ & 10.883ms & 28,219/ 0.0000282/ 0.0382 \\ % &  \emptycirc/ \emptycirc/ \fullcirc  \\
			$\mathsf{{Lock}}$ & 14.843ms & 146,300/ 0.000146 / 0.198 \\ % & \fullcirc/ \fullcirc/  \fullcirc \\
			$\mathsf{{Update}}$ & 14.062ms & 79,121/ 0.0000791/ 0.107 \\ %  & \fullcirc/ \fullcirc/ \emptycirc  \\
			$\mathsf{{Update\text{-}EIE}}$ & 14.578ms  & 108,791/ 0.000109/ 0.147 \\ % & \emptycirc/ \emptycirc/ \fullcirc \\
			%	$\mathsf{{Settlement}}$  & 82453 & 0.0000825 & 0.0556 \\
			\bottomrule
		\end{tabular}
		\begin{tablenotes}
			\footnotesize
			\item[$\ast$] Gasprice = 1 Gwei, 1 Ether = $10^9$ Gwei, and 1 Ether = 1353 USD.
		%	\item[$\dagger$] We use \fullcirc \ when an application (currency exhcnage (CE), fair exchange (FE) or encrypted infomation exchange (EIE)) includes this function, use \emptycirc \ when it does not.
		\end{tablenotes}
	\end{threeparttable}
	\caption{Smart-contract experiments of $\mathsf{Cross}$-$\mathsf{Channel}$}\label{tab:sc_cost}
	%	\vspace{-0.5cm}
\end{table}
\begin{table}[htb]
	\centering
	\begin{threeparttable}\scriptsize
		\centering
		\begin{tabular}{cccccc}	
			\toprule
			&       Cross-Channel     &   Cross-Channel  & HTLC\tnote{$\S$} & MAD-HTLC\tnote{$\S$} \\
				&	     CE \& FE     &  EIE   &       CE           &  CE \\
			\midrule
			% Gas*15/10^9 = ETH, ETH*1086=USD
		%	Gas\tnote{$\dagger$}    & 10 &  14 & 4 & 6   \\
			%+4800			
		%	\cline{2-6}
			Gas   & 1,330,858   & 2,701,308 & 429,532$\times N$ & 758,095$\times N$   \\
		%	\specialrule{0em}{1pt}{1pt}
		%	\hline
		%	\specialrule{0em}{1pt}{1pt}
		%	Pessimistic\tnote{$\dagger$} & 12  &  18 & -- & 6 \\	
		%	\cline{2-6}
		%	Gas & 1,021,850 & 1,359,821 & --  & 761,965 \\
	%		\specialrule{0em}{1pt}{1pt}
	%		\hline
    %		\specialrule{0em}{1pt}{1pt}
	%		Exchange  & $N$ & $N$ & 1 & 1  \\
			%Transactions & processed & \multirow{2}{*}{$N*L$}  & \multirow{2}{*}{$N*L$} & \multirow{2}{*}{1} & \multirow{2}{*}{1}  \\
			%processed &  &  & &  &   \\
			\bottomrule
		\end{tabular}
		\begin{tablenotes}
			\footnotesize
			%\item[$\dagger$] Optimistic case refers to the protocol works ideally and pessimistic case takes into account the presence of malicious users (e.g. when some participants cannot send transactions in asynchronous network).
			%\item[$\ddagger$] N indicates the number of transactions successfully processed in the channel. L represents the number of sub-channels, and we assume that each channel successfully executes the same number of transactions.
			\item[$\S$] The results are calculated based on the data provided in~\cite{tsabary2021mad}.
			%\item[$\P$]  1 Gas = 1 Gwei, and 1 ether=2223 USD.
		\end{tablenotes}
	\end{threeparttable}
	\caption{The comparison with other cross-chain protocols}\label{tab:comparation}
\end{table}
\begin{figure*}[htb]
	\subfigure[]{\includegraphics[width=0.24\textwidth]{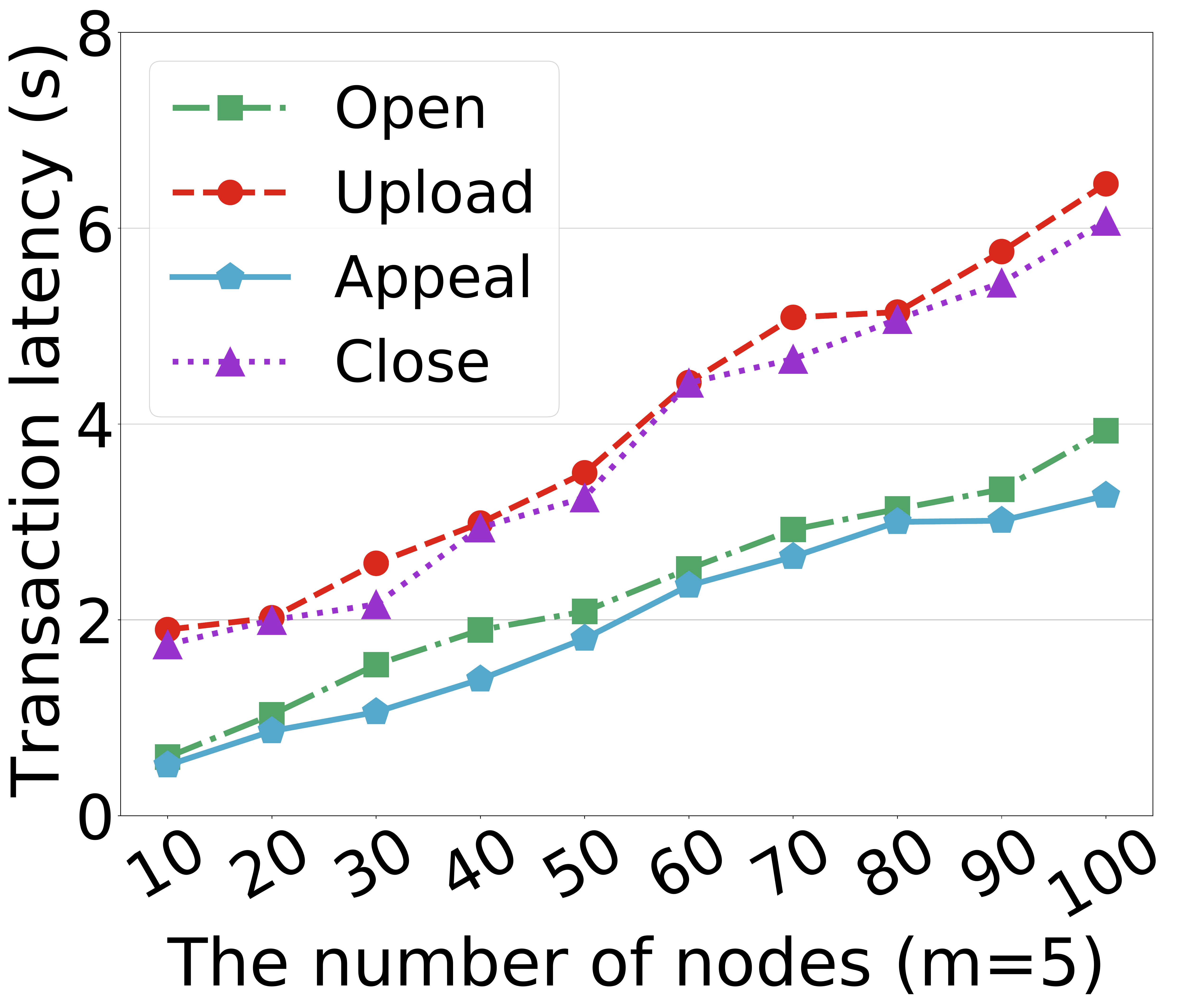}}
	\subfigure[]{\includegraphics[width=0.24\textwidth]{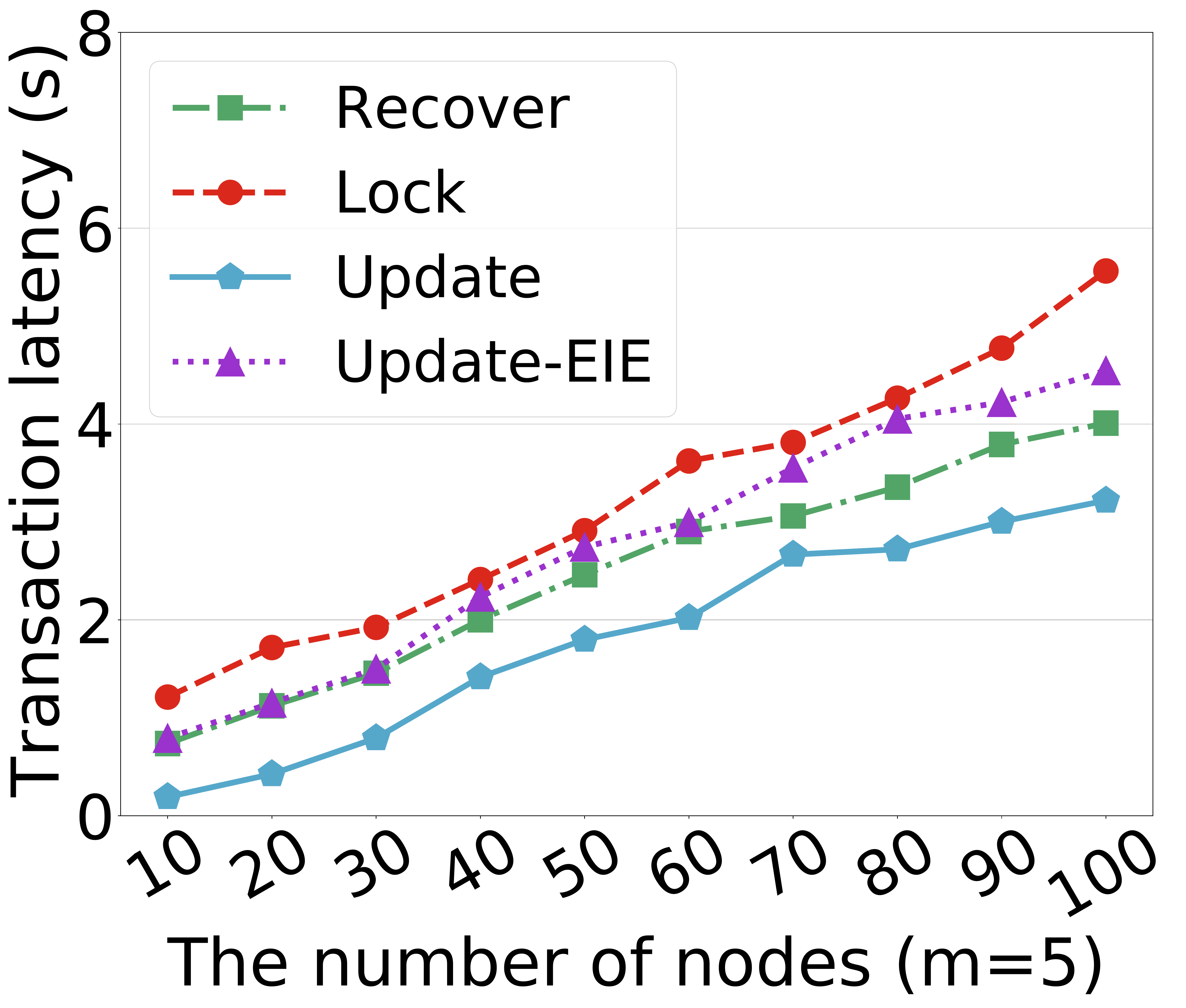}}
    \subfigure[]{\includegraphics[width=0.24\textwidth]{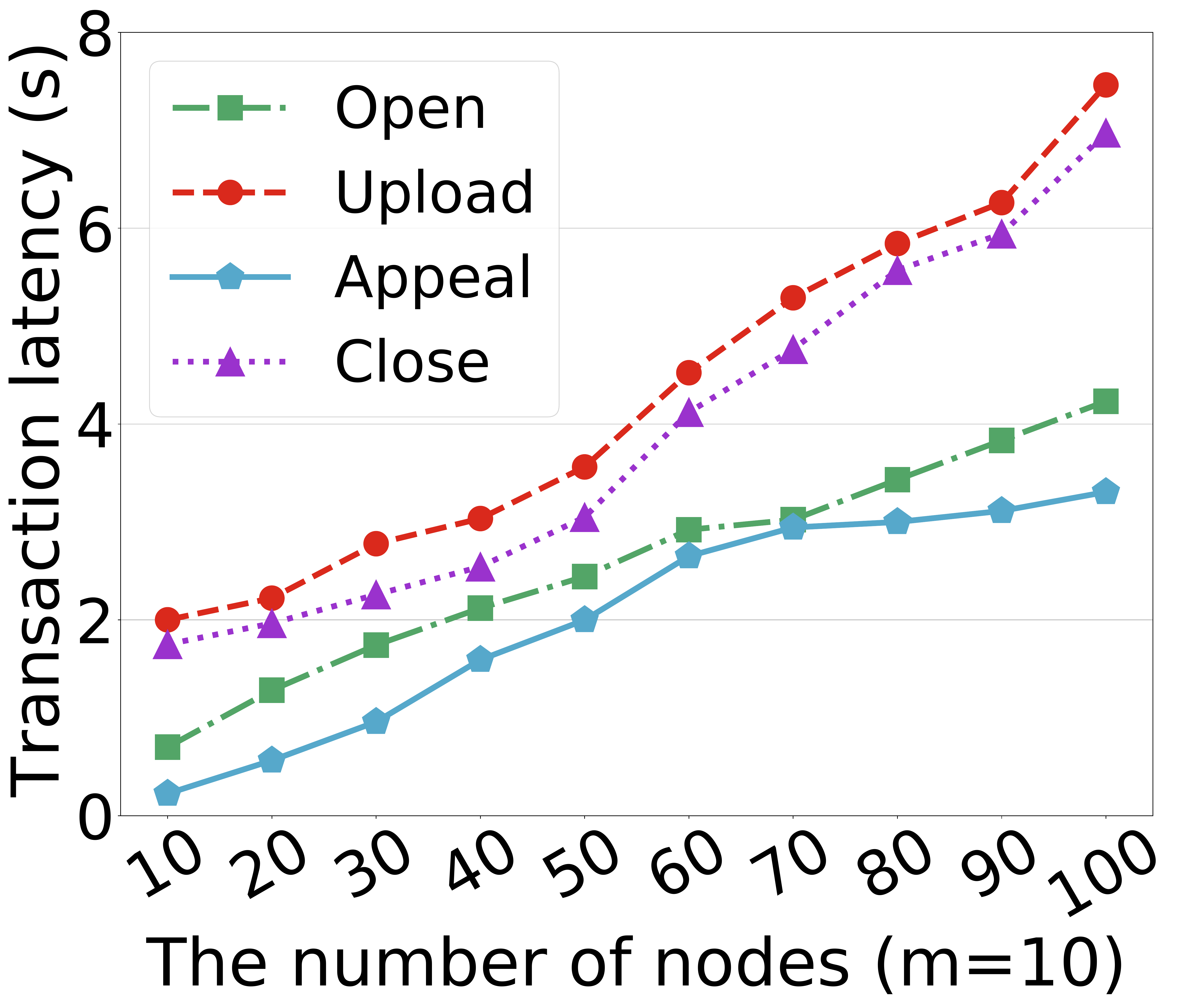}}
	\subfigure[]{\includegraphics[width=0.24\textwidth]{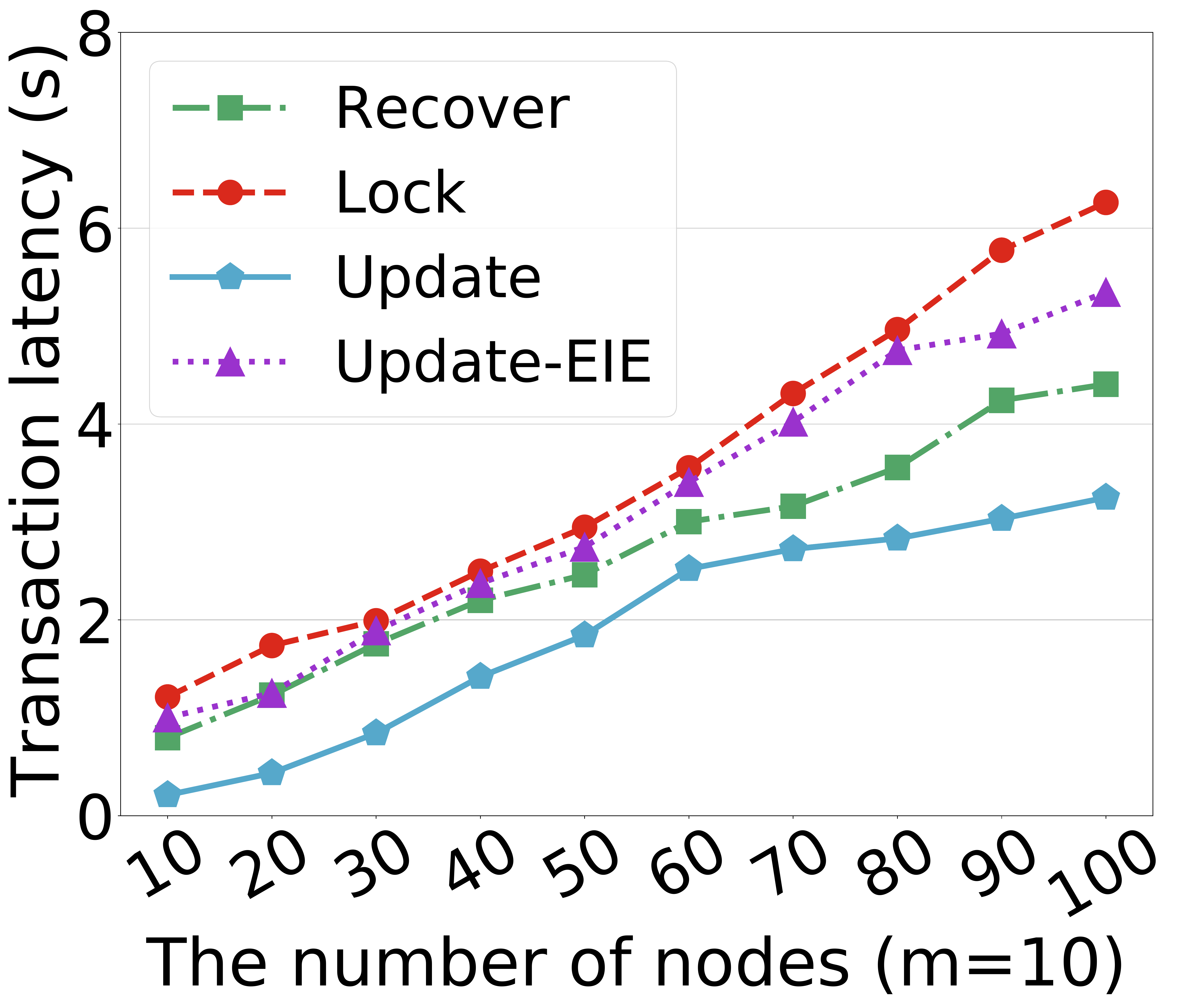}}
	\subfigure[]{\includegraphics[width=0.24\textwidth]{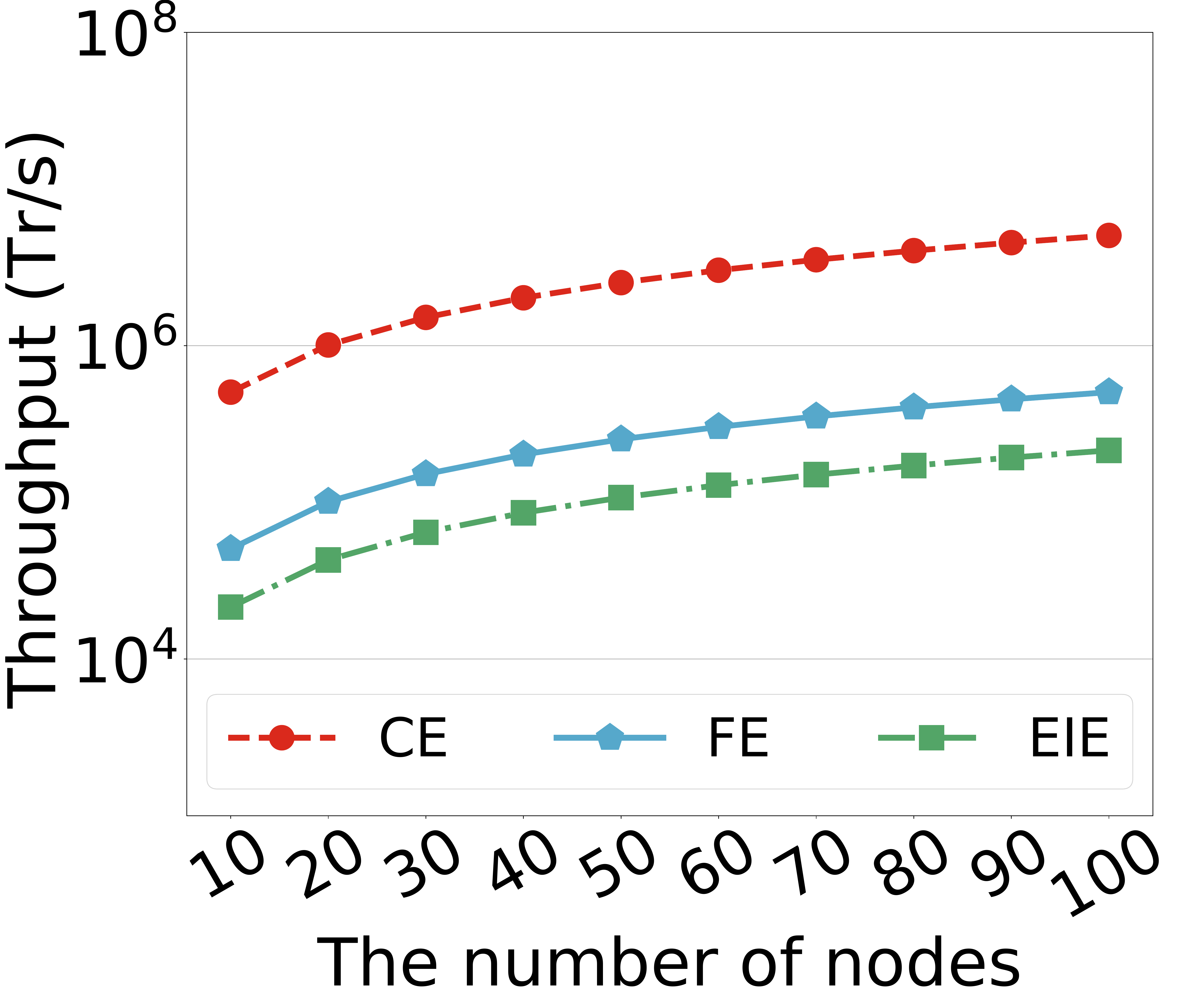}}
	%\subfigure[]{\includegraphics[width=0.32\textwidth]{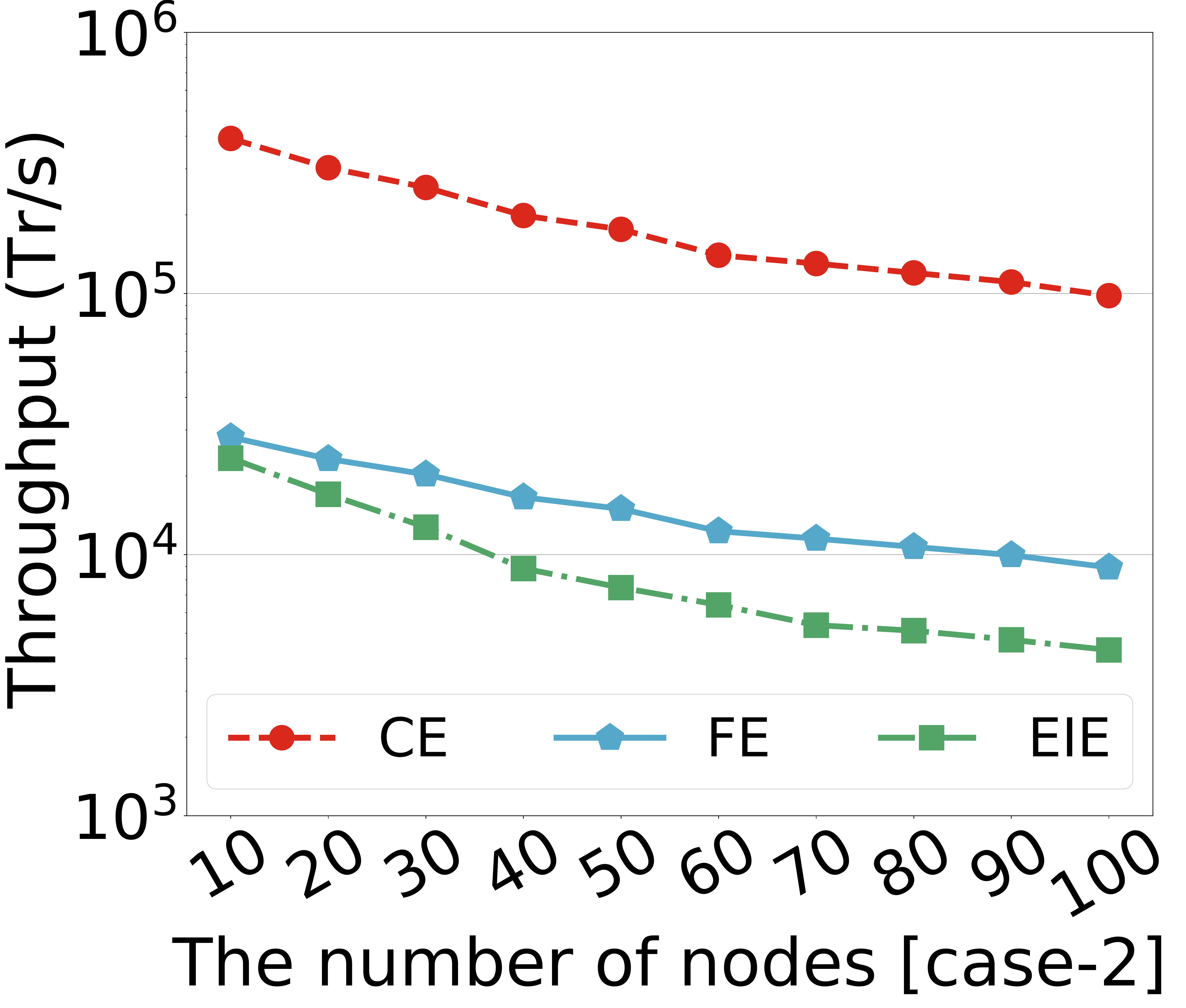}}
	\subfigure[]{\includegraphics[width=0.24\textwidth]{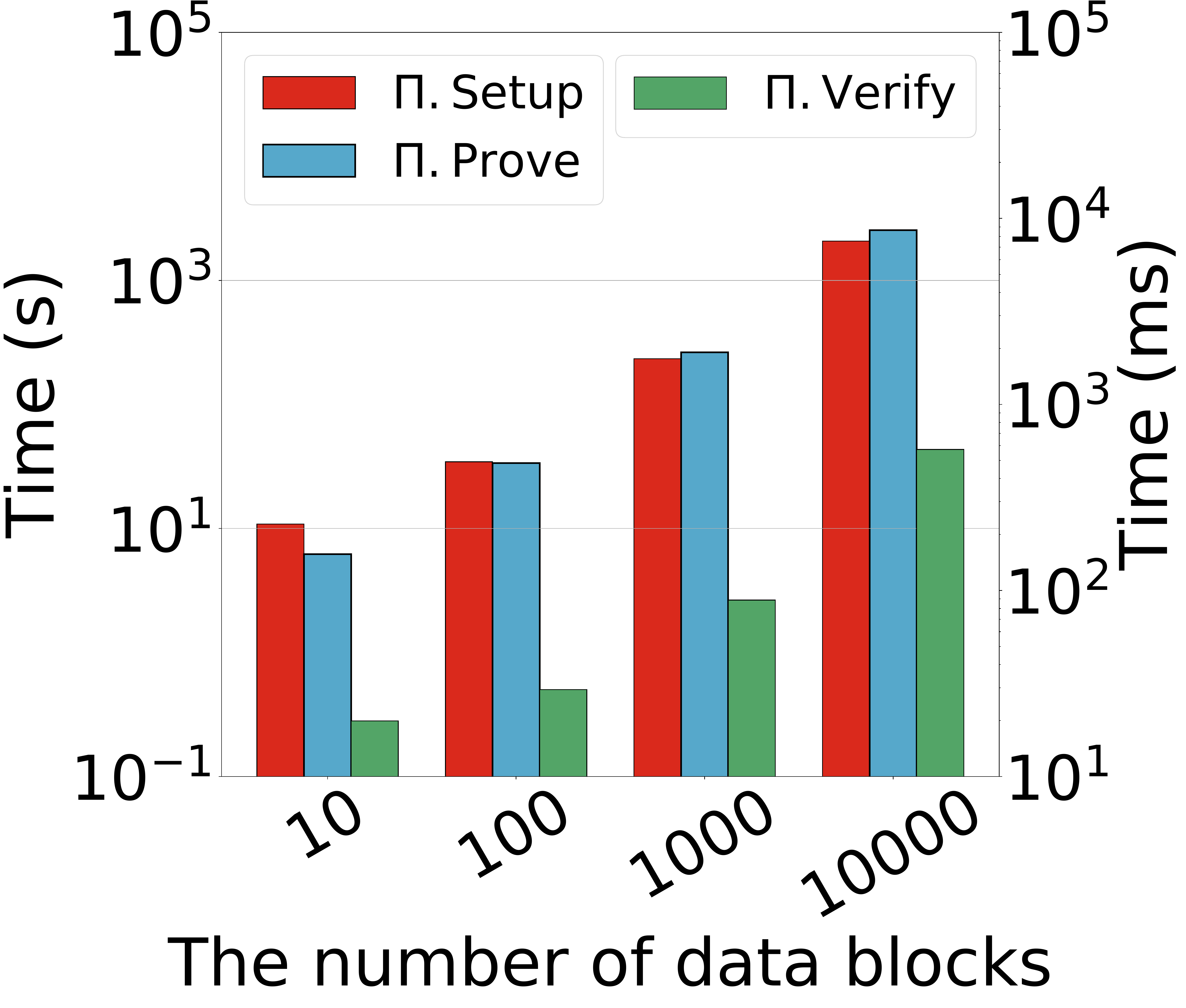}}
	\subfigure[]{\includegraphics[width=0.24\textwidth]{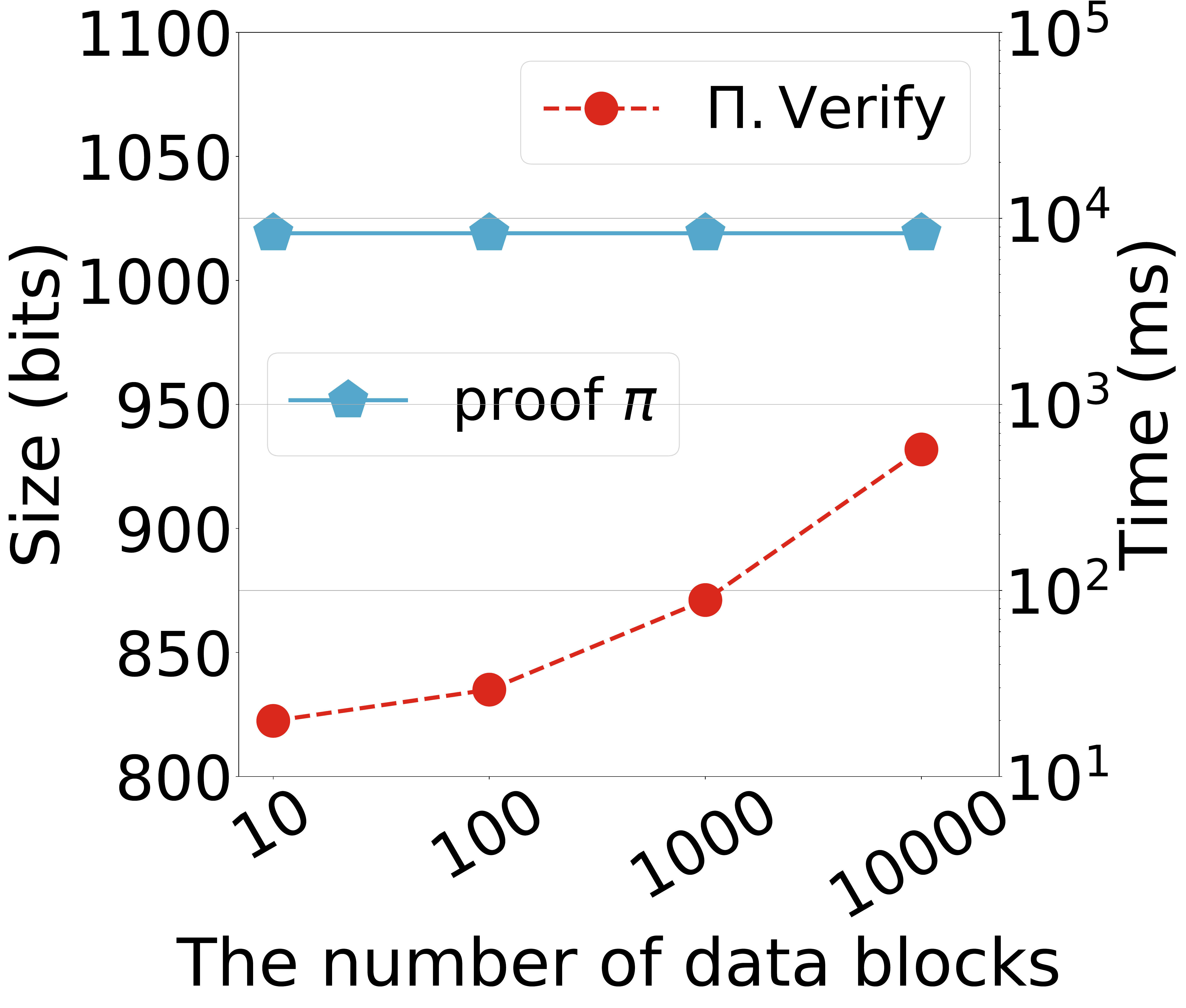}}
	\subfigure[]{\includegraphics[width=0.24\textwidth]{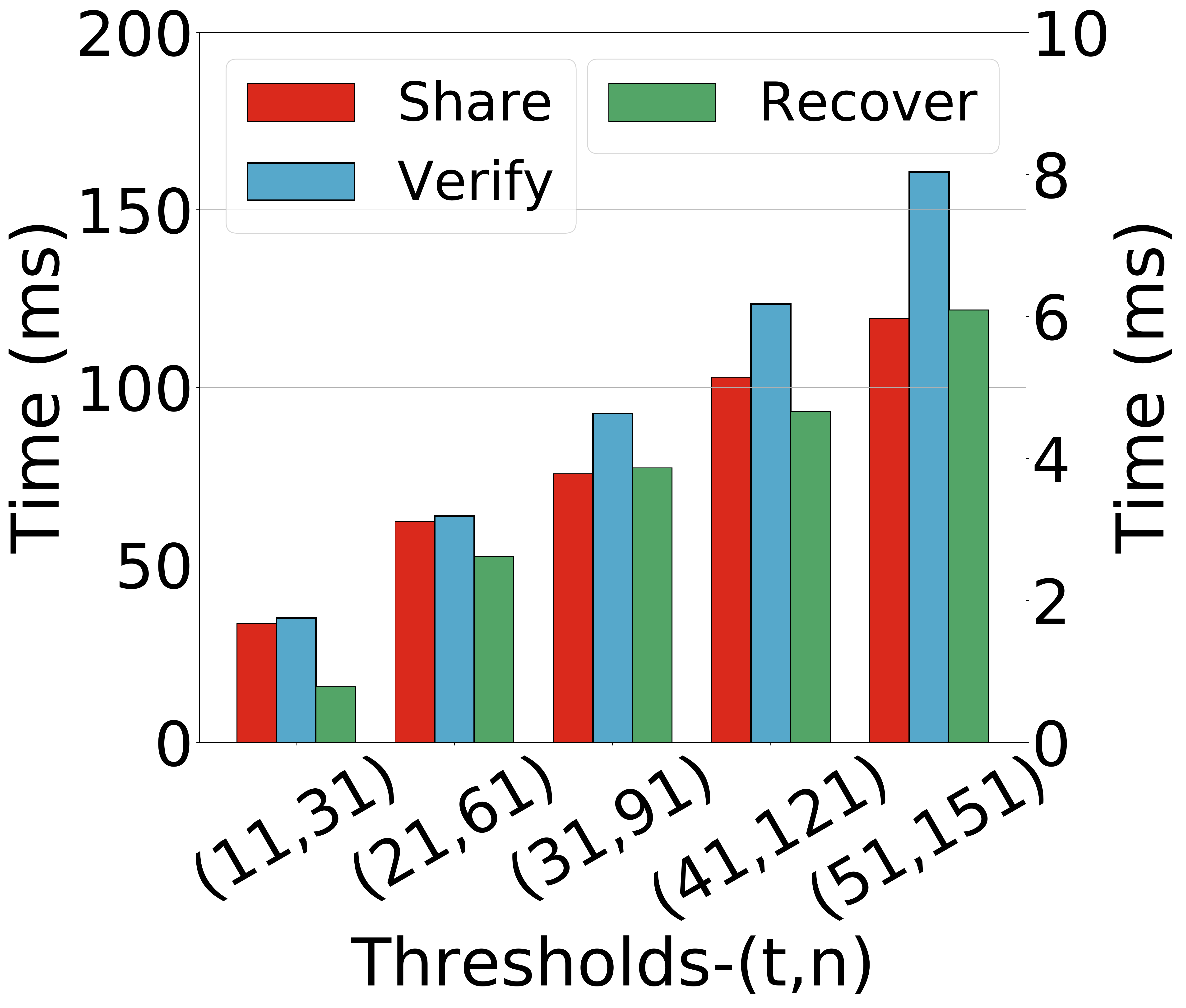}}
	\caption{Transaction delay tests (a) (b) (c) (d), TPS experiments (e), and the performance of zk-SNARK (f) (g) and verifiable secret sharing scheme (h)
	.} 
	\label{fig:test}
\end{figure*}

\noindent {\bf On-chain performance: transaction delay.} 
We simulate an EIE scenario to test the latency of each transaction. Specifically, we first build a $N_{node}$-blokchain network on AliCloud, where $N_{node}$ is the number of nodes in the blockchain. Then we let each node opens $m$ channels and each channel has two levels (i.e., including one sub-channel); each channel is open for about 100 seconds in average, during which users are allowed to interact (EIE) within the channel. The transmission rate is set to be roughly 390 MB/s.
Based on the above experimental setup, we test the transaction delay by changing $N_{node}$ and $m$, where $N_{node}$ varies from 10 to 100 and $m \in \{5,10\}$. 
%Let $N_{node}$ be the number of nodes in the blockchain.
%Assume that each node opens $m$ channels. In this experiment, the $N_{node}$-blokchain is built on AliCloud. We test the transaction delay by changing $N_{node}$ and $m$, where $N_{node}$ varies from 10 to 100 and $m \in \{5,10\}$. 
Note that the transaction delay refers to the time interval from when a transaction is sent to the blockchain until the corresponding block is confirmed by the miners. %, and the throughput refers to the number of transaction receipts that the channel can process per second.

The transaction delays are reported in Fig.~\ref{fig:test} (a) (b) (c) (d). One can see that when the number of nodes rises from 10 to 100, the transaction delay increases. The reason for this trend lies in that the more nodes in the network, the longer time the broadcast and consensus of transactions consume.
Besides that, the number of channels created in the network would also affect the transaction delay. For example, in a 100-node network, when $m = 5$, which means that 500 channels in total are constructed between nodes, the transaction latency is about 3.0-6.0 seconds (Fig.~\ref{fig:test} (a) (b)). When $m = 10$, the latency of various transactions for constructing 1,000 channels grows to about 3.2-7.4 seconds (Fig.~\ref{fig:test} (c) (d)).
The reason for this trend is that when the number of created channels increases, the number of transactions waiting in the queue increases, thus increasing the transaction delay.

%为了说明可扩展假设一个动态变化的场景：
%网络中通道总数不变（与节点成正比），但是其中有10个通道会开启，10个通道关闭。
\noindent {\bf On-chain performance: throughput (TPS) and scalability.}
In this experiment, we simulate a dynamic equilibrium state with a fixed number of channels (one-level), and discuss the throughput and scalability in three scenarios, i.e., CE, FE and EIE. %in two cases (remarked as Case-1 and Case-2).
The unit for throughput is TPS, which refers to the number of transaction receipts our scheme can process per second. 
Specifically, the numbers of channels opened and closed are dynamic variables, denoted as $v_{1}$ and $v_{2}$, respectively. We set $v_{1} = v_{2} = 10$ and $m = 5$, ensuring that in a $N_{node}$-blockchain, the total number of channels remains unchanged ($5 \times N_{node}$), but the number of newly opened channels and that of closed ones are both set to 10, maintaining a dynamic balance. 
%For precise testing, we set the interaction time of channels participating in dynamic changes as 10 seconds.
Besides that, we set the transmission rate of a channel to be roughly 390 MB/s and the test time lasts 100 seconds.
The transaction receipt is about 130 Bytes in CE (the definition of transaction receipt is described in Sec. \ref{hierarchical-channel-scheme}), and around 1.3 KB in FE and EIE (adding encrypted data blocks and information related to zero-knowledge proofs). Based on the above data, one can obtain that a channel can send $3 \times 10^6$ transaction receipts per second ($Tr$/s) in CE, and $3 \times 10^5$ $Tr$/s in FE and EIE. 
Note that, the values of variables $v_{1}, v_{2}, m$ do not affect the trend of the experimental results; thus we make them fixed.

Fig.~\ref{fig:test} (e) demonstrates that the TPS of $\mathsf{Cross\text{-}Channel}$ is linear to the network size, showing good scalability. This implies that the more nodes (channels) in the network, the higher the system throughput.
In a $N_{node}$-blockchain, the number of channels in the network is proportional to that of the nodes, and channels can process transaction receipts in parallel; thus the overall transaction processing rate of the system is nearly $\mathcal{O}(N_{node})$,
and the time to process the above transaction receipts grows at rate $\mathcal{O}(1)$.

%However, in Fig.~\ref{fig:test} (f), when the dynamic variable is proportional to the number of channel (proportional constant is 1), the system throughput decreases with the number of channels.
%So, based on the the above experimental results, the Theorem~\ref{theorem-1} can be proved correct.

\noindent {\bf Off-chain performance: zk-SNARK and VSS.}
In this experiment, we first test the effect of the size of the exchanged encrypted objects on the performance of zk-SNARK. The encrypted object includes multiple data blocks, and each block has 100-bit data.
%The performance of zk-SNARK in our local server is reported in Fig.~\ref{fig:test} (e), which plays an important role in our general fair exchange protocol $\Theta$. 
%
%In our experiment, the encrypted object includes multiple data blocks, and each block has 100-bit data.
%
As shown in Fig.~\ref{fig:test} (e), one can see that as the number of data blocks goes from 10 to 10,000, the time consumption of $\Pi.\mathsf{Setup}$ and that of $\Pi.\mathsf{Prove}$ gradually increase. This does not cost extra on-chain resources, because zk-SNARK is run off-chain and is done before running $\mathsf{Cross}$-$\mathsf{Channel}$.
Additionally, as shown in Fig.~\ref{fig:test} (g), 
zk-SNARK demonstrates great succinctness. 
The proof size is kept at 1,019 bits and the running time of $\Pi.\mathsf{Verify}$ is at the millisecond level. Of course, one can further combine with other schemes, e.g. ZKCPlus~\cite{li2021zkcplus}, to optimize the performance of zero-knowledge proof based on specific scenarios.

Then, we test the three steps ($\mathsf{Share}$, $\mathsf{Verify}$ and $\mathsf{Recover}$) at different thresholds and set $(t,n)$ to be (11, 31), (21, 61), (31, 91), (41, 121), and (51, 151), which can effectively solve the Byzantine fault (details shown in Theorem~\ref{theorem-3}). 
%follow the PoW consensus requirement of  $t/n \geq $51\% to effectively defend against the 51\% attack~\cite{saad2020exploring}. 
The time consumption of $(t,n)$-VSS is illustrated in Fig.~\ref{fig:test} (h), and one can get that the time for each step increases steadily as the threshold increases but it remains at the millisecond level. 

	\section{Conclusion}\label{sec:Conclusion}
In this paper, we propose $\mathsf{Cross}$-$\mathsf{Channel}$, a scalable channel that supports cross-chain services with high throughput.
Specifically, 
we design a new hierarchical channel structure and propose a general fair exchange protocol to respectively solve the Unsettled Amount Congestion problem and the Unfair Exchange problem.
%
%a general fair exchange protocol solves the Unfair Exchange problem and make our scheme
%applicable to various cross-chain interactions.
Additionally, we design a hierarchical settlement protocol based on HTLC and incentive mechanisms, which can help $\mathsf{Cross}$-$\mathsf{Channel}$ to ensure the correctness of the settlement and enhance the atomicity of the cross-chain interactions in asynchronous networks. %and incentive manipulation attacks 
%into account, increasing its security.  
%Finally, we simulate two 100-node blockchains in Aliyun. Experimental results show that the transaction processed of $\mathsf{Cross\text{-}Channel}$ is at least $N \times$ ($N$ is the number of transactions in channels) higher compared with HTLC protocol, validating the effectiveness and scalability of our scheme. 
Finally, we implement $\mathsf{Cross}$-$\mathsf{Channel}$ in two 100-node blockchains on AliCloud, and conduct a test with up to 1,000 transactions. 
Compared with the state-of-the-art cross-chain protocols, $\mathsf{Cross}$-$\mathsf{Channel}$ adds a small amount of on-chain resource overhead but can bring high throughput. %The experimental results prove the feasibility of our scheme.
%In our future research, 
%we will consider how to reduce the dependence on smart contracts so that our scheme can be applied to all blockchain platforms.
%In addition, 
%we will investigate novel schemes that can realize decentralized cross-chain asset transfer without relying on third parties.
In our future research, we will extend $\mathsf{Cross}$-$\mathsf{Channel}$ to support multiparty channels, and consider more general operations such as digital asset transfers.

	\bibliographystyle{ieeetr}
	
	\bibliography{IEEEabrv, References}
	
\end{document}